\newif\ifaplas\aplasfalse
\newif\ifdraft\draftfalse
    \newcommand\nk[1]{\textcolor{blue}{[#1 -nk]}}
    \newcommand\hosoi[1]{\textcolor{red}{[#1 -hosoi]}}
    \newcommand\nk[1]{}
    \newcommand\hosoi[1]{}
\newcommand{\LTS}{\mathcal{L}}
\newcommand{\States}{Q}
\newcommand{\state}{q}
\newcommand{\Actions}{\mathcal{A}}
\newcommand{\Transitions}{\longrightarrow}
\newcommand{\LTSRHS}{(\States, \Actions, \Transitions, \state_{0})}
\newcommand{\transto}[1]{\hspace{8pt}\raisebox{6pt}{\footnotesize{${#1}$}}{\hspace{-10pt}\longrightarrow}\hspace{2pt}}
\newcommand{\nottransto}[1]{\hspace{8pt}\raisebox{6pt}{\footnotesize{${#1}$}}{\hspace{-10pt}\centernot{\longrightarrow}}\hspace{2pt}}
\newcommand{\fml}{\varphi}
\newcommand{\fpo}{\alpha}
\newcommand{\hty}{\eta}
\newcommand{\obj}{{\normalfont{\texttt{o}}}}
\newcommand{\true}{{\normalfont{\texttt{true}}}}
\newcommand{\false}{{\normalfont{\texttt{false}}}}
\newcommand{\boxfml}[1]{\mbox{$\hspace{1pt}[{#1}]\hspace{1.5pt}$}}
\newcommand{\diafml}[1]{\langle{#1}\rangle}
\newcommand{\appfml}[1]{{#1}\hspace{2pt}}
\newcommand{\bndfml}[3]{{#1} #2^{{#3}}.\hspace{1pt}}
\newcommand{\lmdfml}[2]{\bndfml{\lambda}{#1}{#2}}
\newcommand{\mufml}[2]{\bndfml{\mu}{#1}{#2}}
\newcommand{\nufml}[2]{\bndfml{\nu}{#1}{#2}}
\newcommand{\HTE}{\mathcal{H}}
\newcommand{\HES}{\mathcal{E}}
\newcommand{\HESex}{\HES_{\mathtt{ex}}}
\newcommand{\fpeqn}[3]{{#1} \COL {#2} =_{{#3}}}
\newcommand{\fpeqnx}[3]{{#1} \COL {#2} \hspace{-1pt} =_{{#3}} \hspace{-2pt}}
\newcommand{\HESRHS}{(\fpeqnx{F_{1}}{\hty_{1}}{\fpo_{1}}{\fml_{1}};\cdots\hspace{-1pt};\hspace{1pt}\fpeqnx{F_{n}}{\hty_{n}}{\fpo_{n}}{\fml_{n}})}
\newcommand{\toHFL}{\mathit{toHFL}}
\newcommand{\ity}{\tau}
\newcommand{\aty}{\sigma}
\newcommand{\ITE}{\Gamma}
\newcommand{\TG}{\mathbf{TG}}
\newcommand{\verifier}{player $0$}
\newcommand{\falsifier}{player $1$}
\newcommand{\Falsifier}{Player $1$}
\newcommand{\strategy}{\varsigma}
\newcommand{\FV}{\mathit{FV}}
\newcommand{\dom}{\mathit{dom}}
\newcommand{\order}{\mathit{order}}
\newcommand{\rewritesto}{\longrightarrow}
\newcommand{\nrewritesto}{\centernot{\rewritesto}}
\newcommand{\lBrack}{\lbrack\hspace{-2pt}\lbrack\hspace{1pt}}
\newcommand{\rBrack}{\hspace{1pt}\rbrack\hspace{-2pt}\rbrack}
\newcommand{\Brack}[1]{\lBrack{#1}\rBrack}
\newcommand{\bigsqcap}{\raisebox{5pt}{\rotatebox{180}{$\bigsqcup$}}}
\newcommand{\qndprd}[2]{{#1}{#2}.\hspace{2pt}}
\newcommand{\allprd}[1]{\qndprd{\forall}{#1}}
\newcommand{\extprd}[1]{\qndprd{\exists}{#1}}
\newcommand{\Flow}{\mathit{Flow}}
\newcommand{\SG}{\mathbf{SG}}
\newcommand{\nmodels}{\nvDash}
\newcommand{\Beta}{\boldsymbol{\beta}}
\newcommand{\repeatlemma}[1]{%
  \begingroup
  \renewcommand{\thelemma}{\ref{#1}}%
  \expandafter\expandafter\expandafter\lemma
  \csname replemma@#1\endcsname
  \endlemma
  \endgroup
}
\xdef\csname replemma@#1\endcsname{%
    \unexpanded\expandafter{\BODY}%
  }%
\unskip\label{#1}\endlemma
\renewcommand{\@Opargbegintheorem}[4]{%
  #4\trivlist\item[\hskip\labelsep{#3#2\@thmcounterend}]}
\newcommand\HFLMC{\textsc{HomuSat}}
\newcommand\LTSex{\LTS_{\mathtt{ex}}}
\newcommand\Lt[1]{\prec_{#1}}
\newcommand\Le[1]{\preceq_{#1}}
\newcommand{\Regress}{\mathit{Regress}}
\newcommand{\Forget}{\mathit{Forget}}
\newcommand\ignore[1]{}
\newcommand\IT[1]{\set{#1}}
\newcommand\set[1]{\{#1\}}
\newcommand\COL{\mathbin{:}}
\newcommand\Hole{[\,]}
\newcommand\ordinals{indices}
\newcommand\ordinal{index}
\newcommand\ITEomega[1]{\ITE_{\omega}^{(m)}}
\newcommand\ITEomegaP[1]{{\ITE'_{\omega}}^{(m)}}
\newcommand\comment[1]{}
\newcommand\horsatp{\textsc{HorSatP}}
\newcommand{\corrauthmark}{\(^{(\textrm{\Letter})}\)}
\title{A Type-Based HFL Model Checking Algorithm}
\author{Youkichi Hosoi \and Naoki Kobayashi{\corrauthmark} \and Takeshi Tsukada}
\institute{The University of Tokyo, Tokyo, Japan \\ \email{hosoi@kb.is.s.u-tokyo.ac.jp} \\ \email{koba@is.s.u-tokyo.ac.jp} \\ \email{tsukada@kb.is.s.u-tokyo.ac.jp}}
\authorrunning{Y. Hosoi, N. Kobayashi, and T. Tsukada}
\begin{document}

\maketitle

\begin{abstract}
Higher-order modal fixpoint logic (HFL) is a higher-order extension of the modal $\mu$-calculus,
and strictly more expressive than the modal $\mu$-calculus.
It has recently been shown that various program verification problems can naturally be reduced to HFL model checking:
the problem of whether a given finite state system satisfies a given HFL formula.
In this paper, we propose a novel algorithm for HFL model checking:
it is the first practical algorithm in that it runs fast for typical inputs,
despite the hyper-exponential worst-case complexity of the HFL model checking problem.
Our algorithm is based on Kobayashi et al.'s type-based characterization of HFL model checking,
and was inspired by a saturation-based algorithm for HORS model checking,
another higher-order extension of model checking.
We prove the correctness of the algorithm and report on an implementation and experimental results.
\end{abstract}

\section{Introduction}
\label{sec:intr}

Higher-order modal fixpoint logic (HFL) has been proposed by Viswanathan and Viswanathan~\cite{Viswanathan-Viswanathan:2004}.
It is a higher-order extension of the modal $\mu$-calculus and strictly more expressive than the modal \(\mu\)-calculus;
HFL can express non-regular properties of transition systems.
There have recently been growing interests in HFL model checking,
the problem of deciding whether a given finite state system satisfies a given HFL formula.
In fact, Kobayashi et al.~\cite{Kobayashi-Tsukada-Watanabe:2018,Watanabe-Tsukada-Oshikawa-Kobayashi:2019}
have shown that various verification problems for higher-order functional programs can naturally be reduced to HFL model checking problems.

Unfortunately, however, the worst-case complexity of HFL model checking is \(k\)-EXPTIME complete
(where \(k\) is a parameter called the \emph{order} of HFL formulas;
order-0 HFL corresponds to the modal \(\mu\)-calculus)~\cite{Axelson-Lange-Somla:2007},
and there has been no efficient HFL model checker.
Kobayashi et al.~\cite{Kobayashi-Lozes-Bruse:2017:POPL} have shown that there are mutual translations between HFL model checking and HORS model checking
(model checking of the trees generated by higher-order recursion schemes~\cite{Ong:2006}).
Since there are practical HORS model checkers available~\cite{Broadbent-Kobayashi:2013,Kobayashi:2009:PPDP,Kobayashi:2016,Ramsay-Neatherway-Ong:2014:POPL,Suzuki-Fujima-Kobayashi-Tsukada:2017},
one may expect to obtain an efficient HFL model checker by combining the translation from HFL to HORS model checking and a HORS model checker.
That approach does not work, however, because the translation of Kobayashi et al.~\cite{Kobayashi-Lozes-Bruse:2017:POPL} from HFL to HORS model checking
involves a complex encoding of natural numbers as higher-order functions,
which is impractical.
Considering that the other translation from HORS to HFL model checking is simpler and more natural,
we think that HFL model checking is a more primitive problem than HORS model checking.
Also in view of applications to verification of concurrent programs~\cite{Lange-Lozes-Guzman:2014,Viswanathan-Viswanathan:2004}
(in addition to the above-mentioned applications to higher-order program verification),
a direct tool support for HFL model checking is important.

In the present paper,
we propose a novel HFL model checking algorithm that is \emph{practical}
in the sense that it does not always suffer from the bottleneck of the worst-case complexity,
and runs reasonably fast for typical inputs, as confirmed by experiments.
To our knowledge, it is the first such algorithm for HFL model checking.

Our algorithm is based on Kobayashi et al.'s type-based characterization~\cite{Kobayashi-Lozes-Bruse:2017:POPL},
which reduces HFL model checking to a typability game
(which is an instance of parity games),
and was inspired by the saturation-based algorithm for HORS model checking~\cite{Suzuki-Fujima-Kobayashi-Tsukada:2017}.
The detail of the algorithm is, however, different, and its correctness is quite non-trivial.
Actually,
the correctness proof for our algorithm is simpler and more streamlined than that for their algorithm~\cite{Suzuki-Fujima-Kobayashi-Tsukada:2017}.

We have implemented a prototype HFL model checker based on the proposed algorithm.
We confirmed through experiments that the model checker works well for a number of realistic inputs obtained from program verification problems,
despite the extremely high worst-case complexity of HFL model checking.

The rest of this paper is structured as follows.
Section~\ref{sec:prlm} recalls the definition of HFL model checking,
and reviews its type-based characterization.
Section~\ref{sec:algo} formalizes our type-based HFL model checking algorithm,
and Section~\ref{sec:prfs} gives an outline of its correctness proof.
Section~\ref{sec:expr} is devoted to reporting on implementation and experimental results.
Section~\ref{sec:rlwk} discusses related work,
and Section~\ref{sec:cncl} concludes the paper.
\ifaplas
    Omitted details are found in a longer version of the paper~\cite{Hosoi-Kobayashi-Tsukada:2019:full}.
\else
    Omitted details are found in Appendix.
\fi

\section{Preliminaries}
\label{sec:prlm}

In this section,
we review the notion of HFL model checking~\cite{Viswanathan-Viswanathan:2004}
and its type-based characterization.
The latter forms the basis of our HFL model checking algorithm.

\subsection{HFL Model Checking}
\label{subsec:HFLMC}
We first review HFL model checking in this section.

A (finite) \emph{labeled transition system} (LTS) $\LTS$ is a quadruple $\LTSRHS$,
where $\States$ is a finite set of states,
$\Actions$ is a finite set of actions,
$\Transitions \, \subseteq \States \times \Actions \times \States$ is a transition relation,
and $\state_{0} \in \States$ is a designated initial state.
We use the metavariable \(a\) for actions.
We write $\state \transto{a} \state'$ when $(\state, a, \state') \in \, \Transitions$.

The \emph{higher-order modal fixpoint logic} (HFL)~\cite{Viswanathan-Viswanathan:2004}
is a higher-order extension of the modal \(\mu\)-calculus.
The sets of (simple) types and formulas are defined by the following BNF.\footnote{
    Following~\cite{Kobayashi-Lozes-Bruse:2017:POPL},
    we exclude out negations, without losing the expressive power~\cite{Lozes:2015}.
}

\[
    \begin{array}{rcl}
        \fml \mbox{ (formulas) } & ::= & \true \mid \false \mid X \mid \fml_{1} \lor \fml_{2} \mid \fml_{1} \land \fml_{2} \\[2pt]
        & & \mid \diafml{a}{\fml} \mid \boxfml{a}{\fml} \mid \mufml{X}{\hty}{\fml} \mid \nufml{X}{\hty}{\fml} \\[2pt]
        & & \mid \lmdfml{X}{\hty}{\fml} \mid \appfml{\fml_{1}}{\fml_{2}} \\[2pt]
        \hty \mbox{ (simple types) } & ::= & \obj \mid \hty_{1} \to \hty_{2}%%,
    \end{array}
\]
The syntax of formulas on the first two lines is identical to that of the modal \(\mu\)-calculus formulas,
except that the variable \(X\) can range over \emph{higher-order} predicates,
rather than just propositions.
Intuitively,
\(\mufml{X}{\hty}{\fml}\) (\(\nufml{X}{\hty}{\fml}\), resp.)
denotes the least (greatest, resp.) predicate of type \(\hty\) such that \(X = \fml\).
Higher-order predicates can be manipulated by using \(\lambda\)-abstractions
and applications.
The type \(\obj\) denotes the type of propositions.
\ignore{
    Formulas are subject to the well-typedness condition:
    they must be simply-typed \(\lambda\)-terms,
    with the following constants:
    \[
        \begin{array}{l}
            \true, \false \COL \obj \quad
            \land, \lor \COL \obj\to\obj\to\obj\quad
            \diafml{a}{}, \boxfml{a}{} \COL \obj\to\obj\quad
            \mu^\hty, \nu^\hty \COL (\hty\to\hty)\to\hty.
        \end{array}
    \]
    Here,
    \(\mufml{X}{\hty}{\fml}\) and \(\nufml{X}{\hty}{\fml}\) are considered short hand forms of
    \(\mu^\hty(\lambda X^\hty.\fml)\) and \(\nu^\hty(\lambda X^\hty.\fml)\).
}
A \emph{type environment} $\HTE$ for simple types is a map from a finite set of variables to the set of simple types.
We often treat $\HTE$ as a set of type bindings of the form $X \COL \hty$,
and write $X \COL \hty \in \HTE$ when $\HTE(X) = \hty$.
A \emph{type judgment relation} $\HTE \vdash \fml \COL \hty$ is derived by the typing rules in Figure~\ref{fig:htyrules}.

\begin{figure}
\[
    \begin{array}{c}
        \\[-24pt]
        \begin{array}{c}
            \\ \hline
            \HTE \vdash \true \COL \obj
        \end{array} \hspace{12pt}
        \begin{array}{c}
            \\ \hline
            \HTE \vdash \false \COL \obj
        \end{array} \hspace{12pt}
        \begin{array}{c}
            X \COL \hty \in \HTE
            \\ \hline
            \HTE \vdash X \COL \hty
        \end{array} \\[12pt]

        \begin{array}{c}
            \HTE \vdash \fml_{1} \COL \obj \hspace{12pt} \HTE \vdash \fml_{2} \COL \obj
            \\ \hline
            \HTE \vdash \fml_{1} \lor \fml_{2} \COL \obj
        \end{array} \hspace{12pt}
        \begin{array}{c}
            \HTE \vdash \fml_{1} \COL \obj \hspace{12pt} \HTE \vdash \fml_{2} \COL \obj
            \\ \hline
            \HTE \vdash \fml_{1} \land \fml_{2} \COL \obj
        \end{array} \hspace{12pt}
        \begin{array}{c}
            \HTE \vdash \fml \COL \obj
            \\ \hline
            \HTE \vdash \diafml{a}{\fml} \COL \obj
        \end{array} \hspace{12pt}
        \begin{array}{c}
            \HTE \vdash \fml \COL \obj
            \\ \hline
            \HTE \vdash \boxfml{a}{\fml} \COL \obj
        \end{array} \\[12pt]

        \begin{array}{c}
            \HTE \cup \{X \COL \hty \} \vdash \fml \COL \hty \hspace{12pt} X \notin \dom(\HTE)
            \\ \hline
            \HTE \vdash \mufml{X}{\hty}{\fml} \COL \hty
        \end{array} \hspace{12pt}
        \begin{array}{c}
            \HTE \cup \{X \COL \hty \} \vdash \fml \COL \hty \hspace{12pt} X \notin \dom(\HTE)
            \\ \hline
            \HTE \vdash \nufml{X}{\hty}{\fml} \COL \hty
        \end{array} \\[12pt]

        \begin{array}{c}
            \HTE \cup \{X \COL \hty_{1} \} \vdash \fml \COL \hty_{2} \hspace{12pt} X \notin \dom(\HTE)
            \\ \hline
            \HTE \vdash \lmdfml{X}{\hty_{1}}{\fml} \COL \hty_{1} \to \hty_{2}
        \end{array} \hspace{12pt}
        \begin{array}{c}
            \HTE \vdash \fml_{1} \COL \hty_{2} \to \hty \hspace{12pt} \HTE \vdash \fml_{2} \COL \hty_{2}
            \\ \hline
            \HTE \vdash \appfml{\fml_{1}}{\fml_{2}} \COL \hty
        \end{array}
    \end{array}
\]
\caption{Typing rules for simple types}
\label{fig:htyrules}
\end{figure}

\noindent{}Note that, for each pair of a type environment $\HTE$ and an HFL formula $\fml$,
there is at most one simple type $\hty$
such that the type judgment relation $\HTE \vdash \fml \COL \hty$ is derivable.
We say an HFL formula $\fml$ has type $\hty$ under a type environment $\HTE$
if the type judgment relation $\HTE \vdash \fml \COL \hty$ is derivable.

For each simple type $\hty$, we define $\order(\hty)$ inductively by:
$\order(\obj) = 0$, $\order(\hty_{1} \to \hty_{2}) = \max(\order(\hty_{1}) + 1, \order(\hty_{2}))$.
The order of an HFL formula \(\fml\) is the highest order of the types of the variables bound by \(\mu\) or \(\nu\) in \(\fml\).
An order-0 HFL formula of type \(\obj\) can be viewed as a modal \(\mu\)-calculus formula, and vice versa.
We write $\FV(\fml)$ for the set of free variables occurring in a formula $\fml$.
An HFL formula $\fml$ is said to be \emph{closed} if $\FV(\fml) = \emptyset$,
and a closed formula is said to be \emph{well-typed} if it has some simple type under the empty type environment.

\paragraph{The semantics.}
Let $\LTS = \LTSRHS$ be an LTS.
The semantics of a well-typed HFL formula of type $\hty$
with respect to $\LTS$ is given as an element of a complete lattice
$(D_{\LTS, \hty}, \sqsubseteq_{\LTS, \hty})$
defined by induction on the structure of $\hty$.
For the base case,
$(D_{\LTS, \obj}, \sqsubseteq_{\LTS, \obj})$ is defined by
$D_{\LTS, \obj} = 2^{\States}$ and $\sqsubseteq_{\LTS, \obj} = \,\subseteq$,
that is,
$(D_{\LTS, \obj}, \sqsubseteq_{\LTS, \obj})$ is the powerset lattice of the state set $\States$.
For the step case,
$D_{\LTS, \hty_{1} \to \hty_{2}}$ is defined as the set of monotonic functions from $D_{\LTS, \hty_{1}}$ to $D_{\LTS, \hty_{2}}$,
and $\sqsubseteq_{\LTS, \hty_{1} \to \hty_{2}}$ is defined as the pointwise ordering over it.

For each type environment $\HTE$,
we define $\Brack{\HTE}_{\LTS}$ as the set of functions $\rho$ such that,
for each $X \in \dom(\HTE)$,
the image $\rho(X)$ is in the semantic domain of its type $\HTE(X)$,
that is,
$\Brack{\HTE}_{\LTS} = \{ \rho \COL \dom(\HTE) \to \mbox{$\bigcup_{\hty}$}D_{\LTS, \hty} \mid \allprd{X \COL \hty \in \HTE}{\rho(X) \in D_{\LTS, \hty}} \}$.
The interpretation of a type judgment relation $\HTE \vdash \fml \COL \hty$ is a function
$\Brack{\HTE \vdash \fml \COL \hty}_{\LTS} \COL \Brack{\HTE}_{\LTS} \to D_{\LTS, \hty}$
defined by induction on the derivation of $\HTE \vdash \fml \COL \hty$ by:
\[
    \begin{array}{l}
        \Brack{\HTE \vdash \true \COL \obj}_{\LTS}(\rho) = \States \\[2pt]
        \Brack{\HTE \vdash \false \COL \obj}_{\LTS}(\rho) = \emptyset \\[2pt]
        \Brack{\HTE \vdash X \COL \hty}_{\LTS}(\rho) = \rho(X) \\[2pt]
        \Brack{\HTE \vdash \fml_{1} \lor \fml_{2} \COL \obj}_{\LTS}(\rho) = \Brack{\HTE \vdash \fml_{1} \COL \obj}_{\LTS}(\rho) \cup \Brack{\HTE \vdash \fml_{2} \COL \obj}_{\LTS}(\rho) \\[2pt]
        \Brack{\HTE \vdash \fml_{1} \land \fml_{2} \COL \obj}_{\LTS}(\rho) = \Brack{\HTE \vdash \fml_{1} \COL \obj}_{\LTS}(\rho) \cap \Brack{\HTE \vdash \fml_{2} \COL \obj}_{\LTS}(\rho) \\[2pt]
        \Brack{\HTE \vdash \diafml{a}{\fml} \COL \obj}_{\LTS}(\rho) = \{ \state \in \States \mid \extprd{\state' \in \Brack{\HTE \vdash \fml \COL \obj}_{\LTS}(\rho)}{\state \transto{a} \state'} \} \\[2pt]
        \Brack{\HTE \vdash \boxfml{a}{\fml} \COL \obj}_{\LTS}(\rho) = \{ \state \in \States \mid \allprd{\state' \in \States}{\state \transto{a} \state' \Rightarrow \state' \in \Brack{\HTE \vdash \fml \COL \obj}_{\LTS}(\rho)} \} \\[2pt]
        \Brack{\HTE \vdash \mufml{X}{\hty}{\fml} \COL \hty}_{\LTS}(\rho) = \bigsqcap_{\LTS, \hty}\{ d\in D_{\LTS,\hty} \mid \Brack{\HTE \vdash \lmdfml{X}{\hty}{\fml} \COL \hty \to \hty}_{\LTS}(\rho)(d) \sqsubseteq_{\LTS, \hty} d \} \\[2pt]
        \Brack{\HTE \vdash \nufml{X}{\hty}{\fml} \COL \hty}_{\LTS}(\rho) = \bigsqcup_{\LTS, \hty}\{ d\in D_{\LTS,\hty} \mid d \sqsubseteq_{\LTS, \hty} \Brack{\HTE \vdash \lmdfml{X}{\hty}{\fml} \COL \hty \to \hty}_{\LTS}(\rho)(d) \} \\[2pt]
        \Brack{\HTE \vdash \lmdfml{X}{\hty_{1}}{\fml} \COL \hty_{1} \to \hty_{2}}_{\LTS}(\rho) = \lambda d\in D_{\LTS,\hty_1}.
        \Brack{\HTE \cup \{ X \COL \hty_{1} \} \vdash \fml \COL \hty_{2} }_{\LTS}(\rho[X \mapsto d]) \\[2pt]
        \Brack{\HTE \vdash \appfml{\fml_{1}}{\fml_{2}} \COL \hty}_{\LTS}(\rho) = \Brack{\HTE \vdash \fml_{1} \COL \hty_{2} \to \hty}_{\LTS}(\rho)(\Brack{\HTE \vdash \fml_{2} \COL \hty_{2}}_{\LTS}(\rho)).
    \end{array}
\]
Here, $\rho[X \mapsto d]$ denotes the function \(f\) such that
\(f(X)=d\) and \(f(Y)=\rho(Y)\) for \(Y\neq X\),
and the unary operator $\bigsqcup_{\LTS, \hty}$ ($\bigsqcap_{\LTS, \hty}$, resp.)
denotes the least upper bound (the greatest lower bound, resp.)
with respect to \(\sqsubseteq_{\LTS, \hty}\).

Finally, for each closed HFL formula $\fml$ of type $\hty$,
we define the interpretation $\Brack{\fml}_{\LTS}$ by
$\Brack{\fml}_{\LTS} = \Brack{\emptyset \vdash \fml \COL \hty}_{\LTS}(\rho_{\emptyset})$,
where $\rho_{\emptyset}$ is the empty map.
We say that a closed propositional HFL formula $\fml$ is \emph{satisfied}
by the state $\state$ when $\state \in \Brack{\fml}_{\LTS}$.

\begin{example}\label{ex:HFL}
    Let \(\fml_1\) be \(\mufml{F}{\obj \to \obj}\lmdfml{X}{\obj}X \lor \diafml{a}(\appfml{F}(\diafml{b}X))\).
    The formula \(\appfml{\fml_1}(\diafml{c}\true)\) can be expanded to:
    \[
        \begin{array}{l}
            \appfml{(\lambda X.\, X \lor \diafml{a} (\appfml{\fml_{1}}(\diafml{b}X)))} (\diafml{c}\true) \\
            \equiv \diafml{c}\true \lor \diafml{a} (\appfml{\fml_{1}}(\diafml{b}\diafml{c}\true)) \\
            \equiv \diafml{c}\true \lor \diafml{a} (\appfml{(\lambda X.\, X \lor \diafml{a} (\appfml{\fml_{1}}(\diafml{b}X)))}(\diafml{b}\diafml{c}\true)) \\
            \equiv \diafml{c}\true \lor \diafml{a} (\diafml{b}\diafml{c}\true \lor \diafml{a} (\appfml{\fml_{1}}(\diafml{b}\diafml{b}\diafml{c}\true))) \\
            \equiv \diafml{c}\true \lor \diafml{a}\diafml{b}\diafml{c}\true \lor \diafml{a}\diafml{a}\diafml{b}\diafml{b}\diafml{c}\true \lor \cdots.
        \end{array}
    \]
    Thus, the formula \(\appfml{\fml_1}(\diafml{c}\true)\) describes the property that
    there exists a transition sequence of the form \(a^nb^nc\) for some \(n\geq 0\).
    As shown by this example, HFL is strictly more expressive than the modal \(\mu\)-calculus.
    \qed
\end{example}

We write \(\LTS \models \fml\) when the initial state of \(\LTS\) satisfies the property described by \(\fml\).
The goal of HFL model checking is to decide, given \(\LTS\) and \(\fml\) as input, whether \(\LTS\models \fml\) holds.

\begin{example}
    To see how HFL model checking can be applied to program verification,
    let us consider the following OCaml-like program,
    which is a variation of the program considered in~\cite{Kobayashi-Tsukada-Watanabe:2018}.
    \begin{quote}
\begin{verbatim}
let rec f x k = if * then (close x; k())
                else (read x; read x; f x k) in
let d = open_in "foo" in f d (fun _ -> ())
\end{verbatim}
    \end{quote}
    Here, the asterisk \texttt{*} in the if-condition is a non-deterministic Boolean value.
    The program first opens the file \texttt{foo},
    and then calls the function \texttt{f} with the opened file as an argument.
    The function \texttt{f} recursively reads the given file even times and closes it upon a non-deterministic condition.

    Suppose we wish to check that the file \texttt{foo} is safely accessed as a read-only file.
    In the reduction methods by Kobayashi et al.~\cite{Kobayashi-Tsukada-Watanabe:2018},
    a program is transformed to an HFL formula that intuitively says
    ``the behavior of the program conforms to the specification described as an LTS.''
    In this case,
    the verification problem is reduced to the HFL model checking problem of deciding whether \(\LTS_{2} \models \fml_{2}\) holds,
    where
    \(\fml_{2} = \appfml{(\nufml{F}{}\lmdfml{k}{}\diafml{\mathit{close}}k \land \diafml{\mathit{read}}\diafml{\mathit{read}}(\appfml{F}k))}(\diafml{\mathit{end}}\true)\)
    and \(\LTS_{2}\) is the following LTS,
    which models the access protocol for read-only files.
    \begin{center}
    \begin{tikzpicture}[scale=0.9, every node/.style={scale=0.9}]
        \node [] (init) at (-1.2, 1) {};
        \node [state] (q0) at (0, 1) {\(\state_{0}\)};
        \node [state] (q1) at (2.0, 1) {\(\state_{1}\)};
        \node [state] (q2) at (4.0, 1) {\(\state_{2}\)};
        \draw [thick, -latex] (init) to (q0);
        \draw [thick, -latex] (q0) to [out = 90, in = 30, loop, looseness=5] node [above right, pos = 0.5] {\(\mathit{read}\)} (q0);
        \draw [thick, -latex, bend right=20] (q0) to node [below, pos = 0.5] {\(\mathit{close}\)} (q1);
        \draw [thick, -latex, bend right=20] (q1) to node [below, pos = 0.5] {\(\mathit{end}\)} (q2);
    \end{tikzpicture}
    \end{center}
    The formula \(\fml_{2}\) can be expanded to
    \(\bigwedge_{n = 0}^{\infty}\diafml{\mathit{read}}^{2n} \diafml{\mathit{close}} \diafml{\mathit{end}}\true\),
    and checking whether \(\LTS_{2} \models \fml_{2}\) holds is equivalent to checking whether
    (every prefix of) any sequence of the form \(\mathtt{read}^{2n}\cdot\mathtt{close}\cdot\mathtt{end}\)
    belongs to the prefix-closure of \(\mathtt{read}^{*}\cdot\mathtt{close}\cdot\mathtt{end}\),
    which is actually true.
    See~\cite{Kobayashi-Tsukada-Watanabe:2018}
    for systematic translations from program verification to HFL model checking.
    \qed
\end{example}

\subsection{Type-Based Characterization of HFL Model Checking}
We now review the type-based characterization of the HFL model checking problem~\cite{Kobayashi-Lozes-Bruse:2017:POPL},
which is going to be used as the basis of our algorithm given in Section~\ref{sec:algo}.

To provide the type-based characterization,
an HFL formula is represented in the form of a sequence of fixpoint equations \(\HESRHS\),
called a \emph{hierarchical equation system} (HES).
Here, for each $j \in \{1, \ldots, n\}$, $F_{j}$ is a distinct variable,
$\fpo_{j}$ is either $\mu$ or $\nu$, and $\fml_{j}$ is a fixpoint-free HFL formula that has type \(\hty_j\)
under the type environment $\{F_{1} \COL \hty_{1}, \ldots, F_{n} \COL \hty_{n}\}$.
We also require that if \(\hty_{j} = \hty_{j,1} \to \cdots \to \hty_{j,\ell} \to \obj\),
then \(\fml_j\) is of the form \(\lmdfml{X_1}{\hty_{j,1}}\cdots\lmdfml{X_\ell}{\hty_{j,\ell}}\psi_j\),
where \(\psi_j\) is a propositional formula that does not contain \(\lambda\)-abstractions.
For each HES $\HES$, we define a closed HFL formula $\toHFL(\HES)$ inductively by:
\[
    \setlength{\arraycolsep}{2pt}
    \begin{array}{l}
        \toHFL(\fpeqn{F}{\hty}{\fpo}{\fml}) = \bndfml{\fpo}{F}{\hty}{\fml} \\
        \toHFL(\HES;\hspace{1pt} \fpeqn{F}{\hty}{\fpo}{\fml}) = \toHFL(\lbrack\bndfml{\fpo}{F}{\hty}{\fml} / F\rbrack \, \HES),
    \end{array}
\]
where $[\fml / X] \, \HES$ denotes the HES obtained by
replacing all free occurrences of the variable $X$ in $\HES$ with the formula $\fml$.
Any HFL formula can be transformed to an HES, and vice versa.
For example, \(\nu X.\mu Y.(\diafml{a}X\lor \diafml{b}Y)\) can be expressed as an HES:
\(X=_\nu Y; Y=_\mu \diafml{a}X\lor \diafml{b}Y\).
We write \(\LTS\models \HES\) when \(\LTS\models \toHFL(\HES)\) holds.

Given an LTS $\LTS = \LTSRHS$,
the set of \emph{(refinement) types} for HFL formulas,
ranged over by \(\ity\), is defined by:
\[
    \ity \,\, ::= \,\, \state \mid \aty \to \ity \hspace{24pt}
    \aty \,\, ::= \,\, \IT{\ity_{1}, \ldots, \ity_{k}},
\]
where \(\state\) ranges over \(\States\).
Intuitively, \(q\) denotes the type of formulas that hold at state \(q\).
The type \(\IT{\ity_1,\ldots,\ity_k}\to\ity\) describes functions that take
a value that has type \(\ity_i\) for every \(i\in\set{1,\ldots,k}\) as input,
and return a value of type \(\ity\)
(thus, \(\set{\ity_1,\ldots,\ity_k}\) is an intersection type).
We often write $\top$ for $\emptyset$,
and \(\ity_1\land \cdots \land \ity_k\) for \(\IT{\ity_1,\ldots,\ity_k}\).
Henceforth, we just call \(\ity\) and \(\sigma\) \emph{types},
and call those ranged over by \(\hty\) \emph{simple types} or \emph{kinds}.

The \emph{refinement relations} $\ity :: \hty$ and $\aty :: \hty$,
read ``\(\ity\) and \(\aty\) are refinements of \(\hty\)'',
are inductively defined by:
\[
    \begin{array}{c}
        \state \in \States
        \\ \hline
        \state :: \obj
    \end{array} \hspace{24pt}
    \begin{array}{c}
        \allprd{\ity \in \aty}{\ity :: \hty}
        \\ \hline
        \aty :: \hty
    \end{array} \hspace{24pt}
    \begin{array}{c}
        \aty :: \hty_{1} \hspace{12pt} \ity :: \hty_{2}
        \\ \hline
        \aty \to \ity :: \hty_{1} \to \hty_{2}
    \end{array}
\]
Henceforth, we consider only those that are refinements of simple types,
excluding out ill-formed types like $\IT{\state, \state \to \state} \to \state$.

A \emph{type environment} \(\ITE\) is a finite set of type bindings of the form $X \COL \ity$,
where $X$ is a variable and $\ity$ is a type.
Note that \(\ITE\) may contain more than one type binding for the same variable.
We write $\dom(\Gamma)$ for the set $\set{ X \mid \extprd{\ity} X \COL \ity \in \ITE}$
and $\ITE(X)$ for the set $\set{ \, \ity \mid X \COL \ity \in \ITE }$.
We also write $\{ X \COL \aty \}$ for the set $\{ X \COL \ity_{1}, \ldots, X \COL \ity_{k} \}$
when $\aty = \{ \ity_{1}, \ldots, \ity_{k} \}$.
The \emph{type judgment relation} $\ITE \vdash_{\LTS} \fml \COL \ity$
for fixpoint-free formulas is defined by the typing rules in Figure~\ref{ityrules}.

\begin{figure}
\[
    \begin{array}{c}
        \begin{array}{c}
            \state \in \States
            \\ \hline
            \ITE \vdash_{\LTS} \true \COL \state
        \end{array} \hspace{8pt} (\textsc{T-True}) \hspace{16pt}
        \begin{array}{c}
            X \COL \ity \in \ITE
            \\ \hline
            \ITE \vdash_{\LTS} X \COL \ity
        \end{array} \hspace{8pt} (\textsc{T-Var}) \\[10pt]

        \begin{array}{c}
            \mbox{$\ITE \vdash_{\LTS} \fml_{i} \COL \state$ for some $i \in \{1, 2\}$}
            \\ \hline
            \ITE \vdash_{\LTS} \fml_{1} \lor \fml_{2} \COL \state
        \end{array} \hspace{8pt} (\textsc{T-Or}) \hspace{16pt}
        \begin{array}{c}
            \mbox{$\ITE \vdash_{\LTS} \fml_{i} \COL \state$ for each $i \in \{1, 2\}$}
            \\ \hline
            \ITE \vdash_{\LTS} \fml_{1} \land \fml_{2} \COL \state
        \end{array} \hspace{8pt} (\textsc{T-And}) \\[10pt]

        \begin{array}{c}
            \mbox{$\ITE \vdash_{\LTS} \fml \COL \state'$ for some $\state' \in \States$ with $\state \transto{a} \state'$}
            \\ \hline
            \ITE \vdash_{\LTS} \diafml{a}{\fml} \COL \state
        \end{array} \hspace{8pt} (\textsc{T-Some}) \\[10pt]

        \begin{array}{c}
            \mbox{$\ITE \vdash_{\LTS} \fml \COL \state'$ for each $\state' \in \States$ with $\state \transto{a} \state'$}
            \\ \hline
            \ITE \vdash_{\LTS} \boxfml{a}{\fml} \COL \state
        \end{array} \hspace{8pt} (\textsc{T-All}) \\[10pt]

        \begin{array}{c}
            \mbox{$\ITE \cup \{ X \COL \aty \} \vdash_{\LTS} \fml \COL \ity$\hspace{10pt}$X \notin \dom(\ITE)$\hspace{10pt}$\aty :: \eta$}
            \\ \hline
            \ITE \vdash_{\LTS} \lmdfml{X}{\eta}{\fml} \COL \aty \to \ity
        \end{array} \hspace{8pt} (\textsc{T-Abs}) \\[10pt]

        \begin{array}{c}
            \mbox{$\ITE \vdash_{\LTS} \fml_{1} \COL \aty \to \ity$\hspace{16pt}$\ITE \vdash_{\LTS} \fml_{2} \COL \ity'$ for each $\ity' \in \aty$}
            \\ \hline
            \ITE \vdash_{\LTS} \appfml{\fml_{1}}{\fml_{2}} \COL \ity
        \end{array} \hspace{8pt} (\textsc{T-App}) \\[10pt]

        \begin{array}{c}
            \ITE \vdash_{\LTS} \fml \COL \ity \hspace{10pt} \ity \leq_{\LTS} \ity'
            \\ \hline
            \ITE \vdash_{\LTS} \fml \COL \ity'
        \end{array} \hspace{8pt} (\textsc{T-Sub}) \hspace{16pt}
        \begin{array}{c}
            \state \in \States
            \\ \hline
            \state \leq_{\LTS} \state
        \end{array} \hspace{8pt} (\textsc{SubT-Base}) \\[10pt]

        \begin{array}{c}
            \allprd{\ity' \in \aty'}{\extprd{\ity \in \aty}{\ity \leq_{\LTS} \ity'}}
            \\ \hline
            \aty \leq_{\LTS} \aty'
        \end{array} \hspace{8pt} (\textsc{SubT-Int}) \hspace{16pt}
        \begin{array}{c}
            \aty' \leq_{\LTS} \aty \hspace{10pt} \ity \leq_{\LTS} \ity'
            \\ \hline
            \aty \to \ity \leq_{\LTS} \aty' \to \ity'
        \end{array} \hspace{8pt} (\textsc{SubT-Fun})
    \end{array}
\]
\caption{Typing rules (where \(\LTS=\LTSRHS\))}
\label{ityrules}
\end{figure}

The typability of an HES \(\HES\) is defined through the \emph{typability game} $\TG(\LTS, \HES)$,
which is an instance of parity games~\cite{Gradel-Thomas-Wilke:2002}.

\begin{definition}[Typability Game]
    Let $\LTS = \LTSRHS$ be an LTS and $\HES = \HESRHS$ be an HES with $\hty_{1} = \obj$.
    The \emph{typability game} $\TG(\LTS, \HES)$ is a quintuple $(V_{0}, V_{1}, v_{0}, E_{0} \cup E_{1}, \Omega)$, where:
\begin{itemize}
\setlength{\itemsep}{0pt}
    \item $V_{0} = \{ F_{j} \COL \ity \mid j \in \{1, \ldots, n\}, \, \ity :: \hty_{j} \}$ is the set of all type bindings.
    \item $V_{1} = \{ \, \ITE \mid \ITE\subseteq V_0\}$ is the set of all type environments.
    \item $v_{0} = F_{1} \COL \state_{0} \in V_{0}$ is the initial position.
    \item $E_{0} = \set{ (F_{j} \COL \ity, \;\ITE) \in V_{0} \times V_{1} \mid \ITE \vdash_{\LTS} \fml_{j} \COL \ity }$.
    \item $E_{1} =\set{ (\ITE, \;F_{j} \COL \ity) \in V_{1} \times V_{0} \mid F_{j} \COL \ity \in \ITE }$.
    \item $\Omega(F_{j} \COL \ity) = \Omega_{j}$ for each $F_{j} \COL \ity \in V_{0}$,
          where $\Omega_{j}$ is inductively defined by:
          $\Omega_{n} = 0$ if $\fpo_{n} = \nu$,
          $\Omega_{n} = 1$ if $\fpo_{n} = \mu$;
          and for \(i<n\), $\Omega_{i} = \Omega_{i+1}$ if $\fpo_{i} = \fpo_{i+1}$,
          and $\Omega_{i} = \Omega_{i+1} + 1$ if $\fpo_{i} \neq \fpo_{i+1}$.
          In other words, \(\Omega_{i}\ (1\leq i<n)\) is the least even (odd, resp.) number no less than \(\Omega_{i+1}\)
          if \(\fpo_i\) is \(\nu\) (\(\mu\), resp.).
    \item $\Omega(\ITE) = 0$ for all $\ITE \in V_{1}$.
\end{itemize}

    \noindent{}A typability game is a two-player game played by \verifier{} and \falsifier{}.
    The set of positions \(V_{x}\) belongs to player \(x\).
    A \emph{play} of a typability game is a sequence of positions $v_{1}v_{2}\dots$
    such that $(v_{i}, v_{i + 1}) \in E_{0} \cup E_{1}$ holds for each adjacent pair $v_{i}v_{i + 1}$.
    A maximal finite play $v_{1}v_{2}\dots v_{k}$ is won by player $x$ iff $v_{k} \in V_{1 - x}$,
    and an infinite play $v_{1}v_{2}\dots$ is won by player $x$ iff $\limsup_{i \to \infty}\Omega(v_{i}) = x \mbox{ (mod $2$)}$.
    We say a typability game is \emph{winning} if the initial position $v_{0}$ is a winning position for \verifier{},
    and call a winning strategy for her from $v_{0}$ simply a \emph{winning strategy} of the game
    (such a strategy can be given as a partial function from $V_{0}$ to $V_{1}$).
\end{definition}

Intuitively, in the position \(F_j \COL \ity\),
\verifier{} is asked to show why \(F_j\) has type \(\ity\),
by providing a type environment \(\ITE\) under which the body \(\fml_j\) of \(F_j\) has type \(\ity\).
\Falsifier{} then challenges \verifier{}'s assumption \(\ITE\),
by picking a type binding \(F'\COL\ity'\in\ITE\) and asking why \(F'\) has type \(\ity'\).
A play may continue indefinitely,
in which case \verifier{} wins if the largest priority visited infinitely often is even.

The following characterization is the basis of our algorithm.
\begin{theorem}[\cite{Kobayashi-Lozes-Bruse:2017:POPL}] \label{thm:KLB}
Let $\LTS$ be an LTS and $\HES = \HESRHS$ be an HES with $\hty_{1} = \obj$.
Then, \(\LTS\models\HES\) if and only if the typability game \(\TG(\LTS, \HES)\) is winning.
\end{theorem}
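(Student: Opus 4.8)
The plan is to factor the equivalence through two layers: a \emph{realizability} interpretation connecting the refinement types of Figure~\ref{ityrules} to the semantic lattices $D_{\LTS,\hty}$, and the standard correspondence between nested least/greatest fixpoints and parity games. For each simple type $\hty$ I would define a relation $d \models \ity$ between $d \in D_{\LTS,\hty}$ and $\ity :: \hty$ by: $S \models \state$ iff $\state \in S$ (at $\hty=\obj$), and $f \models \aty \to \ity$ iff $f(d) \models \ity$ for every $d$ with $d \models \ity'$ for all $\ity' \in \aty$. Extend it to environments by $\rho \models \ITE$ iff $\rho(F) \models \ity$ for every $F \COL \ity \in \ITE$, and write $\ITE_{\rho} = \set{F \COL \ity \mid \rho(F) \models \ity}$ for the canonical environment of $\rho$. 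Two facts about $\models$ I would record first: it is upward closed in $d$ with respect to $\sqsubseteq_{\LTS,\hty}$, and it respects subtyping (if $d \models \ity$ and $\ity \leq_{\LTS} \ity'$ then $d \models \ity'$), both by induction on types/derivations; and a \emph{definability} lemma stating that over these finite lattices an element is determined by the set of types it realizes, i.e.\ $d \sqsubseteq_{\LTS,\hty} e$ iff every type realized by $d$ is realized by $e$. The definability lemma is what makes the finite intersection types an exact, not merely sound, description of $D_{\LTS,\hty}$.

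The second layer is \emph{soundness and completeness of the fixpoint-free type system}: for every fixpoint-free $\fml$ with $\{F_1\COL\hty_1,\dots,F_n\COL\hty_n\} \vdash \fml \COL \hty$, every $\ity :: \hty$, and every $\rho$, I claim that $\ITE \vdash_{\LTS} \fml \COL \ity$ with $\rho \models \ITE$ implies $\Brack{\fml}_{\LTS}(\rho) \models \ity$ (soundness), and conversely $\Brack{\fml}_{\LTS}(\rho) \models \ity$ implies $\ITE_{\rho} \vdash_{\LTS} \fml \COL \ity$ (completeness). Soundness is a routine induction on the derivation, one case per rule, with \textsc{T-Sub} handled by the subtyping fact. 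Completeness is an induction on $\fml$; the only nonroutine case is application, where to type $\fml_1\,\fml_2 \COL \ity$ I would take $\aty$ to be the canonical type of $d = \Brack{\fml_2}_{\LTS}(\rho)$, use the definability lemma to see that $d$ is the $\sqsubseteq$-least realizer of $\aty$, and then invoke monotonicity of $\Brack{\fml_1}_{\LTS}(\rho)$ with upward closure to obtain $\Brack{\fml_1}_{\LTS}(\rho) \models \aty \to \ity$, feeding the two induction hypotheses into \textsc{T-App}. Since the game's $E_0$-edges are exactly the fixpoint-free judgments $\ITE \vdash_{\LTS} \fml_j \COL \ity$, this layer is precisely what justifies one step of play.

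It remains to handle the fixpoints. Let $v^{\ast}_1,\dots,v^{\ast}_n$ be the nested-fixpoint solution of $\HES$ (innermost $F_n$ first, as in $\toHFL$), let $\rho^{\ast} = [F_j \mapsto v^{\ast}_j]$, and note the preliminary fact that the solution satisfies every equation simultaneously, $v^{\ast}_j = \Brack{\fml_j}_{\LTS}(\rho^{\ast})$. The heart of the proof is the lemma that \Verifier{} wins $\TG(\LTS,\HES)$ from $F_j \COL \ity$ iff $v^{\ast}_j \models \ity$; Theorem~\ref{thm:KLB} is its instance $F_1 \COL \state_0$, since $v^{\ast}_1 \models \state_0$ iff $\state_0 \in \Brack{\toHFL(\HES)}_{\LTS}$ iff $\LTS \models \HES$, and $F_1\COL\state_0$ is the initial position $v_0$. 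For the direction $v^{\ast}_j \models \ity \Rightarrow$ \Verifier{} wins, I would have \Verifier{} move from $F_j\COL\ity$ into the canonical environment $\ITE_{\rho^{\ast}}$, a legal move by completeness of the fixpoint-free system applied to $v^{\ast}_j = \Brack{\fml_j}_{\LTS}(\rho^{\ast})$; this already maintains the invariant $v^{\ast}_i \models \ity'$ at every visited binding, but not yet the parity condition. To secure parity I would refine the move using \emph{signatures}: express each $v^{\ast}_j$ through the ordinal approximants of the $\mu$-blocks, attach to each true judgment $v^{\ast}_j \models \ity$ the tuple of approximation stages at the odd (i.e.\ $\mu$) priorities $\geq \Omega_j$, and have \Verifier{} pick a justifying environment all of whose bindings carry a signature that does not increase at higher levels and strictly decreases at the dominating $\mu$-level --- possible because membership in a least fixpoint at stage $\beta+1$ is witnessed by the body evaluated at stage $\beta$. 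Along any infinite play this forbids an odd priority from being largest infinitely often, so the largest such priority is even and \Verifier{} wins. The converse is obtained dually --- by the greatest-fixpoint reading, constructing a \Falsifier{} strategy when $v^{\ast}_j \not\models \ity$ --- and combined with positional determinacy to conclude $W_0 = \set{F_j \COL \ity \mid v^{\ast}_j \models \ity}$.

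The main obstacle is exactly this signature-based alignment of the ordinal approximants of the nested $\mu$/$\nu$ fixpoints with the limsup parity condition of $\TG(\LTS,\HES)$: the bookkeeping guaranteeing that the \emph{dominating} $\mu$-level strictly decreases (while higher levels are held fixed and $\nu$-levels may cycle freely) is the classical but delicate Streett--Emerson/Walukiewicz progress-measure argument, and verifying that \Verifier{}'s canonical-environment moves can always be chosen to realize the required signature decrease --- which rests on the definability lemma and on the body being typable from a strictly earlier approximant --- is where the real content of the theorem lies. The fixpoint-free soundness and completeness, and the appeal to determinacy, are by comparison routine.
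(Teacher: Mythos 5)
First, a fact about what you are being measured against: this paper never proves Theorem~\ref{thm:KLB}. The bracketed citation in the theorem header is the point --- the result is imported wholesale from \cite{Kobayashi-Lozes-Bruse:2017:POPL} and used as a black box (it is invoked, for instance, at the end of the proof of Theorem~\ref{thm:sound}). So there is no in-paper proof to compare your proposal to; it can only be assessed against the cited work. Assessed that way, your architecture is the right and essentially standard one: a realizability relation between refinement types and the lattices $D_{\LTS,\hty}$, soundness and completeness of the fixpoint-free type system (which is exactly what legitimizes a single $E_0$-edge $\ITE \vdash_{\LTS} \fml_j \COL \ity$ of the game), and a signature argument aligning the ordinal approximants of the nested fixpoints with the parity condition. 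It is worth noting that this paper does deploy precisely that progress-measure technology --- Lemma~\ref{lem:progress} in the appendix, following \cite{Jurdzinski:2000} --- but for a different statement (Lemma~\ref{lem:regress}, used for completeness of the algorithm), not for Theorem~\ref{thm:KLB}.

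Two steps of your sketch conceal genuine work. (i) In the completeness induction, abstraction is at least as non-routine as application, and your recorded lemmas do not cover it: instantiating the bound variable with the least realizer $d_0$ of $\aty$ and applying the induction hypothesis yields a derivation whose bindings for $X$ are \emph{all} types realized by $d_0$, a set that may strictly contain $\aty$, whereas \textsc{T-Abs} demands a derivation from $\{X \COL \aty\}$ exactly. Repairing this requires completeness of the syntactic subtyping $\leq_{\LTS}$ for semantic entailment --- every type realized by the least realizer of $\aty$ has a subtype (in the sense of $\leq_{\LTS}$) belonging to $\aty$ --- so that the surplus bindings can be eliminated via \textsc{T-Var} plus \textsc{T-Sub}. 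That lemma is true, but it does not follow from upward closure and your definability lemma alone: its proof needs that realizer sets are principal filters and, crucially, \emph{prime} ones (if a join realizes a type $\ity$, some joinand already does, shown by evaluating at the least realizers of the argument types), followed by an induction on the simple type. (ii) The converse direction is not a formal dualization, because the game is asymmetric. What you actually need is that whenever $v^{\ast}_j \not\models \ity$, \emph{every} environment $\ITE$ to which \verifier{} can legally move from $F_j \COL \ity$ contains some binding $F_{j'} \COL \ity'$ with $v^{\ast}_{j'} \not\models \ity'$ --- this is an application of fixpoint-free \emph{soundness}, not a mirror image of the forward argument --- and then a dual signature built from the downward approximants of the $\nu$-blocks, strictly decreasing at the dominating even level. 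With (i) and (ii) filled in, your outline is a faithful reconstruction of the cited theorem's proof; as a contribution relative to this paper it is moot, since the paper deliberately takes Theorem~\ref{thm:KLB} as given.
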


\begin{example} \label{ex:hflmc}
    Let \(\HESex{}\) be the following HES:
    \[
        S =_{\nu} \diafml{a}(\appfml{F}(\diafml{b}S));\hspace{8pt}
        F =_{\mu} \lmdfml{X}{\obj}X \lor \diafml{c}S \lor \diafml{a}(\appfml{F}(\diafml{b}X)).
    \]
    It expresses the property that
    there is an infinite sequence that can be partitioned into chunks of the form
    \(a^{k}b^{k}\) or \(a^{k}c\) (where \(k\ge 1\)),
    like \(a^3b^3a^{2}ca^{2}b^{2}a^{3}c\cdots\).

    Let \(\LTSex{}\) be an LTS shown on the left side of Figure~\ref{fig:lts}.
    It satisfies the HES \(\HESex{}\) as the sequence \(abacabac\cdots\) is enabled at the initial state \(\state_{0}\).
    The corresponding typability game \(\TG(\LTSex{}, \HESex{})\) is defined as shown (partially) on the right side of Figure~\ref{fig:lts},
    and a winning strategy (depicted by two-headed arrows) is witnessed by the type judgments
    \(\{S \COL \state_{2}, F \COL \state_{1} \to \state_{1}\} \vdash_{\LTSex{}} \fml_{S} \COL \state_{0}\),
    \(\{F \COL \top \to \state_{0}\} \vdash_{\LTSex{}} \fml_{S} \COL \state_{2}\),
    \(\emptyset \vdash_{\LTSex{}} \fml_{F} \COL \state_{1} \to \state_{1}\),
    and \(\{S \COL \state_{0}\} \vdash_{\LTSex{}} \fml_{F} \COL \top \to \state_{0}\),
    where \(\fml_{S}\) and \(\fml_{F}\) denote the right-hand side formulas of the variables \(S\) and \(F\), respectively.
    \qed
    \begin{figure}
        \begin{tikzpicture}[scale=0.86, every node/.style={scale=0.9}]
            \node [] (init) at (-1.0, 1) {};
            \node [state] (q0) at (0, 1) {\(\state_{0}\)};
            \node [state] (q1) at (1.45, 2) {\(\state_{1}\)};
            \node [state] (q2) at (2.9, 1) {\(\state_{2}\)};
            \draw [thick, -latex] (init) to (q0);
            \draw [thick, -latex] (q0) to node [above, pos = 0.35] {\(a\)} (q1);
            \draw [thick, -latex] (q1) to node [above, pos = 0.6] {\(b\)} (q2);
            \draw [thick, -latex] (q2) to node [below, pos = 0.5] {\(a\)} (q0);
            \draw [thick, -latex] (q0) to [out = 140, in = 90, loop, looseness=4] node [above, pos = 0.5] {\(c\)} (q0);

            \node [rectangle, draw] (S0) at (8.25, 0) {\(S \COL \state_{0}\)};
            \node [rectangle, draw] (S2) at (7.125, 2) {\(S \COL \state_{2}\)};
            \node [rectangle, draw] (FT0) at (5.00, 1) {\(F \COL \top \to \state_{0}\)};
            \node [rectangle, draw] (FT1) at (12.00, 2) {\(F \COL \top \to \state_{1}\)};
            \node [rectangle, draw] (F11) at (9.5, 2) {\(F \COL \state_{1} \to \state_{1}\)};

            \node [] (e) at (9.5, 3) {\(\emptyset\)};
            \node [] (s0) at (6.125, 0) {\(\{S \COL \state_{0}\}\)};
            \node [] (ft0) at (5.00, 2) {\(\{F \COL \top \to \state_{0}\}\)};
            \node [] (ft1) at (11.25, 1) {\(\{F \COL \top \to \state_{1}\}\)};
            \node [] (s2f10) at (7.125, 3) {\(\{S \COL \state_{2}, F \COL \state_{1} \to \state_{0}\}\)};
            \node [] (s2f11) at (8.25, 1) {\(\{S \COL \state_{2}, F \COL \state_{1} \to \state_{1}\}\)};

            \node [] (s0dummy) at (4.65, 3) {\(\cdots\)};
            \node [] (F10dummy) at (4.75, 0) {\(\cdots\)};
            \node [] (ft1dummy) at (3.75, 0) {\(\cdots\)};
            \node [] (FT0dummy) at (12.75, 0) {\(\cdots\)};

            \draw [thick, ->, shorten >= -2.0] (S0) to [bend right = 20] (ft1);
            \draw [thick, densely dotted, -latex, shorten <= -2.0] (ft1) to (FT1);
            \draw [thick, ->>, shorten >= -2.0] (S0) to (s2f11);
            \draw [thick, densely dotted, -latex, shorten <= -2.0] (s2f11) to (F11);
            \draw [thick, ->>, shorten >= -1.5] (F11) to (e);
            \draw [thick, densely dotted, -latex, shorten <= -2.0] (s2f11) to (S2);
            \draw [thick, ->, shorten >= -2.0] (S2.60) to (s2f10.300);
            \draw [thick, densely dotted, -latex, shorten <= -2.0] (s2f10) to (s0dummy);
            \draw [thick, densely dotted, -latex, shorten <= -2.0] (s2f10.240) to (S2.123);
            \draw [thick, ->>, shorten >= -3.0] (S2) to (ft0);
            \draw [thick, densely dotted, -latex, shorten <= -2.0] (ft0) to (FT0);
            \draw [thick, ->>, shorten >= -2.0] (FT0) to (s0);
            \draw [thick, densely dotted, -latex, shorten <= -2.0] (s0) to (S0);
            \draw [thick, ->, shorten >= -2.0] (F10dummy) to (s0);
            \draw [thick, ->] (FT0) [bend left = 10] to (ft1dummy);
            \draw [thick, ->, shorten >= -2.0] (FT0dummy) [bend left = 10] to (ft1);
        \end{tikzpicture}
        \caption{LTS \(\LTSex{}\) (on the left side) and a part of the corresponding typability game \(\TG(\LTSex{}, \HESex{})\) (on the right side)}
        \label{fig:lts}
    \end{figure}
\end{example}

\section{A Practical Algorithm for HFL Model Checking}
\label{sec:algo}

We present our algorithm for HFL model checking in this section.

Theorem~\ref{thm:KLB} immediately yields a naive model checking algorithm,
which first constructs the typability game $\TG(\LTS, \HES)$
(note that \(\TG(\LTS,\HES)\) is finite),
and solves it by using an algorithm for parity game solving.
Unfortunately, the algorithm does not work in practice,
since the size of \(\TG(\LTS,\HES)\) is too large;
it is \(k\)-fold exponential in the size of \(\LTS\) and \(\HES\),
for order-\(k\) HES.

The basic idea of our algorithm is to construct a subgame $\TG'(\LTS, \HES)$
 of $\TG(\LTS, \HES)$,
 so that $\TG'(\LTS, \HES)$ is winning if and only if the original game \(\TG(\LTS, \HES)\) is winning,
and that $\TG'(\LTS, \HES)$ is often significantly smaller than $\TG(\LTS, \HES)$.
The main question is of course how to construct such a subgame.
Our approach is to consider a series of recursion-free\footnote{
    We say an HES is recursion-free if there is no cyclic dependency on fixpoint variables,
    so that fixpoint variables can be completely
    eliminated by unfolding them;  we omit the formal definition.
} approximations \(\HES^{(0)}, \HES^{(1)},\HES^{(2)},\ldots\) of \(\HES\),
which are obtained by unfolding fixpoint variables in \(\HES\) a certain number of times,
and are free from fixpoint operators.
The key observations are:
(i) for sufficiently large \(m\) (that may depend on \(\LTS\) and \(\HES\)),
\(\LTS\models\HES^{(m)}\) if and only if \(\LTS\models\HES\),
(ii) for such \(m\),
a winning strategy for \(\TG(\LTS,\HES)\) can be constructed by using only the types used in a winning strategy for \(\TG(\LTS,\HES^{(m)})\), and
(iii) (a superset of) the types needed in a winning strategy for \(\TG(\LTS,\HES^{(m)})\) can be computed effectively
(and with reasonable efficiency for typical inputs),
based on a method similar to saturation-based algorithms for HORS model checking~\cite{Broadbent-Kobayashi:2013,Suzuki-Fujima-Kobayashi-Tsukada:2017}.
(These observations are not trivial;
they will be justified when we discuss the correctness of the algorithm in Section~\ref{sec:prfs}.)

In the rest of this section,
we first explain more details about the intuitions behind our algorithm in Section~\ref{sec:approx}.\footnote{
    Those intuitions may not be clear for non-expert readers.
    In such a case, readers may safely skip the subsection
    (except the definitions) and proceed to Section~\ref{subsec:algo}.
}
We also introduce some definitions such as \(\HES^{(m)}\) during the course of explanation.
These concepts are not directly used in the actual algorithm,
but would help readers understand intuitions behind the algorithm.
We then describe the algorithm in Section~\ref{subsec:algo}.

\subsection{The Idea of the Algorithm}
\label{sec:approx}
We first define a non-recursive HES as an approximation of $\HES$.
By the Kleene fixpoint theorem,
we can approximate $\HES$ by unfolding fixpoint variables finitely often,
and the approximation becomes exact when the depth of unfolding is sufficiently large.
Such an approximation can be naturally represented by a non-recursive HES $\HES^{(m)}$ defined as follows.
\begin{definition}[Non-Recursive HES $\HES^{(m)}$]\label{def:approx}
    Let $\HES = \HESRHS$ be an HES and $m$ be a positive integer.
    We define $\HES^{(m)} = (\fpeqnx{F_{1}^{(m)}}{\hty_{1}}{\fpo_{1}}{\fml_{1}^{(m)}}; \cdots)$
    as a non-recursive HES consisting of equations of the form
    $\fpeqnx{F_{j}^{\Beta}}{\eta_{j}}{\fpo_{j}}{\fml_{j}^{\Beta}}$.\footnote{
        Since \(\HES^{(m)}\) does not contain recursion,
        the order of equations (other than the first one) does not matter.
    }
    Here, $\Beta = (\beta_{1}, \ldots, \beta_{j})$ is a tuple of integers satisfying
    $0 \leq \beta_{k} \leq m$ for each $k \in \{1, \ldots, j\}$,
    and $\fml_{j}^{\Beta}$ is an HFL formula defined by:
    \[
        \fml_{j}^{\Beta} = \left\{
            \begin{array}{ll}
                \lmdfml{X_{1}}{\hty_{j,1}} \cdots \lmdfml{X_{\ell}}{\hty_{j,\ell}} \widehat{\fpo_{j}} & (\mbox{if \(\beta_{j} = 0\)}) \\[2pt]
                \lbrack F_{1}^{\Beta(1)}/F_{1}, \ldots, F_{n}^{\Beta(n)}/F_{n} \rbrack \,\fml_{j} & (\mbox{if \(\beta_{j} \neq 0\)}).
            \end{array}
        \right.
    \]
    Here, $\hty_{j} = \hty_{j, 1} \to \cdots \to \hty_{j, \ell} \to \obj$,
    $\widehat{\nu} = \true$, $\widehat{\mu} = \false$,
    and $\Beta(k)$ is defined by:
    \[
        \begin{array}{c}
            \Beta(k) = \left\{
                \begin{array}{ll}
                    (\beta_{1}, \ldots, \beta_{k}) & (\mbox{if \(k < j\)}) \\
                    (\beta_{1}, \ldots, \beta_{j} - 1, \underbrace{m, \ldots, m}_{\hspace{-32pt}\mbox{\footnotesize{$(k - j)$ times}}\hspace{-32pt}}) & (\mbox{if \(j \leq k\)}).\\[-17pt]
                \end{array}
            \right.\\[24pt]
        \end{array}
    \]

    \noindent{}We call the superscript $\Beta$ an \emph{\ordinal{}}.
    Intuitively, an \ordinal{} $\Beta$ indicates how many unfoldings are left to be done to obtain the formula represented by $\HES^{(m)}$.
\end{definition}

\begin{example}
    Recall the HES \(\HESex{}\) in Example~\ref{ex:hflmc}:
    \[
        S =_{\nu} \diafml{a}(\appfml{F}(\diafml{b}S));\hspace{8pt}
        F =_{\mu} \lmdfml{X}{\obj}X \lor \diafml{c}S \lor \diafml{a}(\appfml{F}(\diafml{b}X)).
    \]
    Then, a finite approximation \(\HESex^{(1)}\) is:
    \[
        \begin{array}{l}
            \hspace{18pt} S^{(1)} =_{\nu} \diafml{a}(\appfml{F^{(0, 1)}}(\diafml{b}S^{(0)})); \hspace{3pt}
            S^{(0)} =_{\nu} \true; \\
            \hspace{18pt} F^{(0, 1)} =_{\mu} \lmdfml{X}{\obj} X \lor \diafml{c}S^{(0)} \lor \diafml{a}(\appfml{{F}^{(0, 0)}}(\diafml{b}X)); \hspace{3pt}
            F^{(0, 0)} =_{\mu} \lmdfml{X}{\obj} \false; \\
            \hspace{18pt} F^{(1, 1)} =_{\mu} \lmdfml{X}{\obj} X \lor \diafml{c}S^{(1)} \lor \diafml{a}(\appfml{{F}^{(1, 0)}}(\diafml{b}X)); \hspace{3pt}
            F^{(1, 0)} =_{\mu} \lmdfml{X}{\obj} \false. \hspace{12pt} \qed
            \\[7pt]
        \end{array}
    \]
\end{example}
For \(\HES^{(m)}\), its validity can be checked by ``unfolding'' all the fixpoint variables.
The operation of unfolding is formally expressed by the following rewriting relation.

\begin{definition}[Rewriting Relation on HFL formulas]
\label{def:rewriting}
    Let $\HES = \HESRHS$ be an HES.
    The rewriting relation $\rewritesto_{\HES}$ is defined by the rule:
    \[
        C[\appfml{\appfml{\appfml{F_{j}}{\chi_{1}}}{\cdots}}{\chi_{\ell}}] \rewritesto_{\HES}
        C[[\chi_{1} / X_{1}, \ldots, \chi_{\ell} / X_{\ell}]\, \psi_{j}] \mbox{ if $\fml_{j} = \lmdfml{X_{1}}{}\cdots\lmdfml{X_{\ell}}{}\psi_{j}$}.
    \]
    Here, \(C\) ranges over the set of contexts defined by:
    \[
        C ::= \Hole \mid C\lor \chi \mid \chi\lor C \mid C\land \chi \mid \chi\land C \mid \diafml{a}C\mid \boxfml{a}C,
    \]
    and \(C[\chi]\) denotes the formula obtained from \(C\) by replacing \(\Hole\) with \(\chi\).
    We write $\rewritesto_{\HES}^{*}$ for the reflexive transitive closure of the relation $\rewritesto_{\HES}$.
\end{definition}

Note that the relation $\rewritesto_{\HES}$ preserves simple types and the semantics of formulas.
By the strong normalization property of the simply-typed \(\lambda\)-calculus,
if the HES $\HES$ does not contain recursion, it is strongly-normalizing.
Thus, for \(\HES^{(m)}\), the initial variable \(F_1^{(m)}\)
(which is assumed to have type \(\obj\))
can be rewritten to a formula \(\chi\) without any fixpoint variables,
such that an LTS \(\LTS\) satisfies \(\chi\) if and only if \(\LTS\) satisfies \(\HES^{(m)}\)
(and for sufficiently large \(m\), if and only if \(\LTS\) satisfies \(\HES\)).
Furthermore, if the initial state \(q_0\) of \(\LTS\) satisfies \(\chi\),
then from the reduction sequence:
\[
    F_1^{(m)} = \chi_0 \rewritesto_{\HES^{(m)}} \chi_1 \rewritesto_{\HES^{(m)}} \cdots \rewritesto_{\HES^{(m)}} \chi_{m'}=\chi,
\]
one can compute a series of type environments
\(\Gamma_{m'} = \emptyset, \Gamma_{m'-1},\ldots, \Gamma_1,\Gamma_0=\{F_1^{(m)}\COL q_0\}\)
such that \(\Gamma_i \vdash_{\LTS} \chi_i : \state_0\) in a backward manner,
by using the standard subject expansion property of intersection type systems
(i.e., the property that typing is preserved by backward reductions)~\cite{Barendregt-Statman-Dekkers:2013}.
These type environments provide sufficient type information,
so that a winning strategy for the typability game \(\TG(\LTS,\HES^{(m)})\)
can be expressed only by using type bindings in
\(\Gamma^{(m)} = \Gamma_{m'}\cup \cdots \cup \Gamma_0\).
If \(m\) is sufficiently large,
by using the same type bindings (but ignoring \ordinals{}),
we can also express a winning strategy for \(\TG(\LTS,\HES)\).
Thus, if we can compute (a possible overapproximation of) \(\ITE^{(m)}\) above,
we can restrict the game \(\TG(\LTS,\HES)\) to the subgame \(\TG'(\LTS,\HES)\)
consisting of only types occurring in \(\ITE^{(m)}\),
without changing the winner.

The remaining issue is how to compute an overapproximation of \(\ITE^{(m)}\).
It is unreasonable to compute it directly based on the definition above,
as the ``sufficiently large'' \(m\) is huge in general,
and the number \(m'\) of reduction steps may also be too large.
Instead, we relax the rewriting relation \(\rewritesto_\HES\) for the original HES \(\HES\) by adding the following rules:
\[
    C[F_{j}\,\chi_1\,\cdots\,\chi_\ell] \rewritesto'_\HES \left\{
        \begin{array}{ll}
            C[\true]  & \mbox{ if \(\fpo_j=\nu\)} \\
            C[\false] & \mbox{ if \(\fpo_j=\mu\)}.
        \end{array}
    \right.
\]
The resulting relation \(\rewritesto'_\HES\) simulates \(\rewritesto_{\HES^{(m)}}\) for arbitrary \(m\),
in the sense that for any reduction sequence:
\[
    F_1^{(m)} = \chi_0 \rewritesto_{\HES^{(m)}} \chi_1 \rewritesto_{\HES^{(m)}} \cdots \rewritesto_{\HES^{(m)}} \chi_{m'}=\chi,
\]
there exists a corresponding reduction sequence:
\[
    F_1 = \chi'_0 \rewritesto'_{\HES} \chi'_1 \rewritesto'_{\HES} \cdots \rewritesto'_{\HES} \chi'_{m'}=\chi,
\]
where each \(\chi'_i\) is the formula obtained by removing \ordinals{} from \(\chi_i\).
Thus, to compute \(\ITE^{(m)}\),
it suffices to compute type environments for \(\chi'_i\)'s,
based on the subject expansion property.
We can do so by using the function $\mathcal{F}_{\LTS, \HES}$ defined below,
without explicitly constructing reduction sequences.

\begin{definition}[Backward Expansion Function $\mathcal{F}_{\LTS, \HES}$]
\label{def:BEF}
    Let $\LTS$ be an LTS, and $\HES = \HESRHS$ be an HES.
    Let $\mathcal{T}_{\HES}$ denote the set of all type environments for the fixpoint variables of $\HES$,
    that is,
    $\mathcal{T}_{\HES} = \{ \, \ITE \mid \dom(\ITE) \subseteq \{ F_{1}, \ldots, F_{n} \}, \, \allprd{F_{j} \COL \ity \in \ITE}{\ity :: \hty_{j}} \}$.
    The function $\mathcal{F}_{\LTS, \HES} \COL \mathcal{T}_{\HES} \to \mathcal{T}_{\HES}$ is a monotonic function defined by:
    \[
        \setlength{\arraycolsep}{1pt}
        \begin{array}{rl}
            \mathcal{F}_{\LTS, \HES}(\ITE) = \ITE \cup \{
                & F_{j} \COL \ity \mid \ity :: \hty_{j} \\
                & \fml_{j} = \lmdfml{X_{1}}{}\cdots\lmdfml{X_{\ell}}{}\psi_{j}, \\
                & \ity = \aty_{1} \to \cdots \to \aty_{\ell} \to \state, \\
                & \extprd{\Delta}\dom(\Delta) \subseteq \FV(\psi_{j}) \cap \{ X_{1}, \ldots, X_{\ell} \}, \\
                & \ITE \cup \Delta \vdash_{\LTS} \psi_{j} \COL \state, \, \allprd{k \in \{1, \ldots, \ell\}}{\aty_{k} = \Delta(X_{k})}, \\
                & \allprd{X_{i} \in \dom(\Delta)}\extprd{\fml \in \Flow_{\HES}(X_{i})}\allprd{\ity' \in \Delta(X_{i})} \ITE \vdash_{\LTS} \fml \COL \ity'\hspace{1pt}
            \}.
        \end{array}
    \]
    Here, $\Flow_{\HES}(X_i)$ denotes the set
    \(\set{\hspace{2pt} \xi_{i} \mid F_{1} \rewritesto^*_{\HES} C[\appfml{F} \xi_{1} \cdots\hspace{1pt} \xi_{\ell}] \hspace{2pt}}\),
    where \(X_i\) is the \(i\)-th formal parameter of \(F\)
    (i.e., the equation of \(F\) is of the form
    \(F =_{\fpo} \lmdfml{X_{1}}{}\cdots\lmdfml{X_{\ell}}{}\psi\)).\footnote{
        Without loss of generality,
        we assume that \(X_1,\ldots,X_\ell\) are distinct from each other and do not occur in the other equations.
    }
\end{definition}

The following lemma justifies the definition of \(\mathcal{F}_{\LTS, \HES}\)
\ifaplas (see~\cite{Hosoi-Kobayashi-Tsukada:2019:full} for a proof).
\else (see Appendix~\ref{app:proof:algo} for a proof).
\fi
It states that the function \(\mathcal{F}_{\LTS, \HES}\) expands type environments in such a way that
we can go backwards through the rewriting relation \(\rewritesto_{\HES}\) without losing the typability.
\begin{lemma} \label{lem:inverse}
    If $F_{1} \rewritesto_{\HES}^{*} \fml \rewritesto_{\HES} \fml'$
    and $\Gamma \vdash_{\LTS} \fml' \COL \state$,
    then $\mathcal{F}_{\LTS, \HES}(\ITE) \vdash_{\LTS} \fml \COL \state$.
\end{lemma}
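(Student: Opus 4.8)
The plan is to prove the statement in two layers: an \emph{anti-substitution} (backward substitution) lemma that inverts the $\beta$-reduction performed at the contracted redex, together with a routine structural induction on the rewriting context $C$ that propagates typing through the propositional connectives and the modalities. Throughout I would use that every formula reachable from $F_{1}$ by $\rewritesto_{\HES}$ is $\lambda$-free, because the rewrite rule always consumes the $\lambda$-prefix of an equation together with its arguments and the bodies $\psi_{j}$ contain no $\lambda$-abstraction; consequently no instance of \textsc{T-Abs} ever occurs in a relevant derivation. This in turn makes the weakening property ``$\Gamma \vdash_{\LTS} \fml \COL \ity$ and $\Gamma \subseteq \Gamma'$ imply $\Gamma' \vdash_{\LTS} \fml \COL \ity$'' hold without the freshness side-condition of \textsc{T-Abs}, so I may freely enlarge environments.

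First I would isolate the single contracted step: since $\fml \rewritesto_{\HES} \fml'$, we may write $\fml = C[F_{j}\chi_{1}\cdots\chi_{\ell}]$ and $\fml' = C[[\chi_{1}/X_{1},\ldots,\chi_{\ell}/X_{\ell}]\psi_{j}]$ for some context $C$, where $\fml_{j} = \lmdfml{X_{1}}{}\cdots\lmdfml{X_{\ell}}{}\psi_{j}$. From the hypothesis $F_{1} \rewritesto_{\HES}^{*} \fml$ and the definition of $\Flow_{\HES}$ I immediately obtain $\chi_{i} \in \Flow_{\HES}(X_{i})$ for every $i$; recorded once and for all, this is exactly what will later discharge the existential flow clause in the definition of $\mathcal{F}_{\LTS,\HES}$.

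The technical heart is the anti-substitution lemma: for every $\lambda$-free $\psi$ and every type $\ity$, if $\Gamma \vdash_{\LTS} [\chi_{1}/X_{1},\ldots,\chi_{\ell}/X_{\ell}]\psi \COL \ity$, then there is a $\Delta$ with $\dom(\Delta) \subseteq \FV(\psi) \cap \set{X_{1},\ldots,X_{\ell}}$ such that $\Gamma \cup \Delta \vdash_{\LTS} \psi \COL \ity$ and $\Gamma \vdash_{\LTS} \chi_{i} \COL \ity'$ for every $X_{i} \in \dom(\Delta)$ and every $\ity' \in \Delta(X_{i})$. I would prove it by induction on $\psi$, inverting the derivation at each step. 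The base cases ($\true$, a fixpoint variable, or some $X_{i}$) are immediate, taking $\Delta = \emptyset$ or $\Delta = \set{X_{i} \COL \ity}$; the connective and modality cases recurse on the immediate subformulas and take the union of the resulting $\Delta$'s, for which the typings $\Gamma \vdash_{\LTS} \chi_{i} \COL \ity'$ are inherited from the sub-$\Delta$'s; the application case $\psi = \psi_{1}\,\psi_{2}$ uses a generation lemma for \textsc{T-App} to extract an intersection $\aty$ with $\Gamma \vdash_{\LTS} [\cdots]\psi_{1} \COL \aty \to \ity$ and $\Gamma \vdash_{\LTS} [\cdots]\psi_{2} \COL \ity''$ for each $\ity'' \in \aty$, then applies the induction hypothesis to $\psi_{1}$ and to $\psi_{2}$ at each $\ity'' \in \aty$. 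I expect this application case to be the main obstacle: the interaction of intersection types with the subsumption rule \textsc{T-Sub} means I must first normalise derivations (observing that, at base type, \textsc{T-Sub} is trivial by \textsc{SubT-Base}, and tracking how \textsc{SubT-Fun} and \textsc{SubT-Int} relate the intersections) before the inversion is clean.

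Granting the anti-substitution lemma, I would finish by structural induction on $C$, for the fixed redex $F_{j}\chi_{1}\cdots\chi_{\ell}$ and with the state at the hole as an extra universally quantified parameter. For the $\lor$, $\land$, $\diafml{a}$ and $\boxfml{a}$ cases one inverts the matching typing rule on $\fml'$, applies the induction hypothesis to the subcontext (at the successor state dictated by \textsc{T-Some} or \textsc{T-All} in the modal cases), and reapplies the same rule, using weakening on the side formulas; the $\boxfml{a}$ case invokes the hypothesis at every $a$-successor, which is harmless since $\mathcal{F}_{\LTS,\HES}(\Gamma)$ is one environment containing \emph{all} admissible bindings at once. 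In the base case $C = \Hole$ we have $\Gamma \vdash_{\LTS} [\chi_{1}/X_{1},\ldots,\chi_{\ell}/X_{\ell}]\psi_{j} \COL \state$, and anti-substitution yields $\Delta$ with $\Gamma \cup \Delta \vdash_{\LTS} \psi_{j} \COL \state$ and $\Gamma \vdash_{\LTS} \chi_{i} \COL \ity'$ for all $\ity' \in \Delta(X_{i})$. Taking $\ity = \Delta(X_{1}) \to \cdots \to \Delta(X_{\ell}) \to \state$ (with $\Delta(X_{k}) = \top$ for $X_{k} \notin \dom(\Delta)$), every clause of the definition of $\mathcal{F}_{\LTS,\HES}$ is satisfied, the flow clause being witnessed by $\chi_{i} \in \Flow_{\HES}(X_{i})$ together with $\Gamma \vdash_{\LTS} \chi_{i} \COL \ity'$; hence $F_{j} \COL \ity \in \mathcal{F}_{\LTS,\HES}(\Gamma)$. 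Finally \textsc{T-Var} gives $\mathcal{F}_{\LTS,\HES}(\Gamma) \vdash_{\LTS} F_{j} \COL \ity$, and $\ell$ applications of \textsc{T-App}, each premise $\mathcal{F}_{\LTS,\HES}(\Gamma) \vdash_{\LTS} \chi_{i} \COL \ity'$ discharged by weakening from $\Gamma \vdash_{\LTS} \chi_{i} \COL \ity'$, yield $\mathcal{F}_{\LTS,\HES}(\Gamma) \vdash_{\LTS} F_{j}\chi_{1}\cdots\chi_{\ell} \COL \state$, closing the induction.
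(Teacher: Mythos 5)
Your proposal is correct and takes essentially the same route as the paper: the paper also proves an anti-substitution lemma (stated for a single variable and applied repeatedly, where you use a simultaneous substitution), then handles the rewrite step by inducting over the context, discharging the base case exactly as you do --- via the recorded flow membership $\chi_i \in \Flow_{\HES}(X_i)$, the definition of $\mathcal{F}_{\LTS, \HES}$, and \textsc{T-Var} followed by repeated \textsc{T-App}. The only differences are presentational: the paper dismisses the context induction as ``trivial'' and leaves weakening and the interaction with \textsc{T-Sub} implicit, where you spell these out.
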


Let \(\ITE_0\) be the set of strongest type bindings (with respect to subtyping) for the \(\nu\)-variables of \(\HES\),
that is,
\(\ITE_{0} = \set{ F_{j} \COL \ity \mid \fpo_{j} = \nu, \, \ity :: \hty_{j}, \, \ity = \top \to \cdots \to \top \to \state }\).
The following lemma states that,
if we are allowed to use the strongest type bindings contained in \(\ITE_{0}\),
then we can also go backwards through the relaxed rewriting relation
\(\rewritesto_{\HES}'\) using the same function \(\mathcal{F}_{\LTS, \HES}\).

\begin{lemma} \label{lem:inverse2}
    If $F_{1} {\rewritesto'_{\HES}}^{*} \fml \rewritesto'_{\HES} \fml'$,
    $\Gamma \vdash_{\LTS} \fml' \COL \state$,
    and \(\ITE\hspace{-1pt}\supseteq\hspace{-1pt}\ITE_0\),
    then $\mathcal{F}_{\LTS, \HES}(\ITE) \vdash_{\LTS} \fml \COL \state$.
\end{lemma}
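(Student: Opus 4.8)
The plan is to proceed by case analysis on the final rewriting step $\fml \rewritesto'_{\HES} \fml'$. Since $\rewritesto'_{\HES}$ is the union of the original relation $\rewritesto_{\HES}$ and the two truncation rules, there are three cases: a genuine $\beta$-step, a truncation of a $\nu$-variable (which produces $\true$), and a truncation of a $\mu$-variable (which produces $\false$). Throughout I will use that $\mathcal{F}_{\LTS,\HES}$ is extensive, i.e.\ $\mathcal{F}_{\LTS,\HES}(\ITE)\supseteq\ITE$ (immediate from its definition $\mathcal{F}_{\LTS,\HES}(\ITE)=\ITE\cup\set{\cdots}$), together with the standard weakening property of the type system (enlarging the environment preserves derivability). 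The extra hypothesis $\ITE\supseteq\ITE_{0}$ will be used only in the two truncation cases.

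For the $\beta$-step case, write $\fml = C[F_{j}\,\chi_{1}\cdots\chi_{\ell}]$ and $\fml' = C[[\chi_{1}/X_{1},\ldots,\chi_{\ell}/X_{\ell}]\,\psi_{j}]$. Inspecting the proof of Lemma~\ref{lem:inverse}, the hypothesis $F_{1}\rewritesto_{\HES}^{*}\fml$ is used there only to guarantee that each actual argument $\chi_{i}$ lies in $\Flow_{\HES}(X_{i})$, so that the final side condition in the definition of $\mathcal{F}_{\LTS,\HES}$ can be discharged by taking $\chi_{i}$ itself as the witnessing flow formula; everything else depends only on the given typing $\ITE\vdash_{\LTS}\fml'\COL\state$ and on the particular context $C$, and is insensitive to which rewriting relation produced $\fml$. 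Hence it suffices to re-establish, for the relaxed relation, the following sub-lemma: if $F_{1}{\rewritesto'_{\HES}}^{*}C[F_{j}\,\chi_{1}\cdots\chi_{\ell}]$, then $\chi_{i}\in\Flow_{\HES}(X_{i})$ for every $i$. I expect this sub-lemma to be the main obstacle, since $\Flow_{\HES}$ is defined through $\rewritesto_{\HES}$ whereas the prefix here uses $\rewritesto'_{\HES}$.

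I would prove the sub-lemma by a \emph{postponement of truncations}. The key observation is that a truncation can only fire at a head position (the hole of a context $C$ lies under the connectives $\lor,\land$ and modalities $\diafml{a},\boxfml{a}$ only, never under an application), and the subterm it creates, $\true$ or $\false$, is neither a $\beta$-redex nor able to migrate into an argument position. Two distinct head-position redexes are always disjoint, so a truncation and an immediately following $\beta$-step act on disjoint subterms and can be swapped without changing the endpoint; bubbling every truncation to the end reorganizes $F_{1}{\rewritesto'_{\HES}}^{*}\fml$ as a pure $\beta$-reduction $F_{1}\rewritesto_{\HES}^{*}\widehat{\fml}$ followed by a block of truncations $\widehat{\fml}{\rewritesto'_{\HES}}^{*}\fml$. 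If $\fml=C[F_{j}\,\chi_{1}\cdots\chi_{\ell}]$ still exhibits this head redex, it was already present in $\widehat{\fml}$ with the \emph{same} arguments: the trailing truncations act at head positions and therefore cannot descend into the argument positions occupied by the $\chi_{i}$, so they leave those arguments untouched. Thus $F_{1}\rewritesto_{\HES}^{*}\widehat{\fml}=C'[F_{j}\,\chi_{1}\cdots\chi_{\ell}]$ for some context $C'$, which is exactly $\chi_{i}\in\Flow_{\HES}(X_{i})$.

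For the two truncation cases, suppose $\fml=C[F_{j}\,\chi_{1}\cdots\chi_{\ell}]$ and $\fml'=C[\widehat{\fpo_{j}}]$. If $\fpo_{j}=\nu$, then $\fml'=C[\true]$ and $\ITE_{0}$ supplies the binding $F_{j}\COL\top\to\cdots\to\top\to\state''$ for every $\state''\in\States$; since $\top=\emptyset$ imposes no constraint on arguments, repeated use of \textsc{T-App} yields $\mathcal{F}_{\LTS,\HES}(\ITE)\vdash_{\LTS}F_{j}\,\chi_{1}\cdots\chi_{\ell}\COL\state''$ for every $\state''$, exactly matching what \textsc{T-True} permits for the occurrence of $\true$ in the hole. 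Replaying the derivation of $\ITE\vdash_{\LTS}C[\true]\COL\state$ under the larger environment $\mathcal{F}_{\LTS,\HES}(\ITE)\supseteq\ITE\supseteq\ITE_{0}$, and substituting these derivations at the hole, gives $\mathcal{F}_{\LTS,\HES}(\ITE)\vdash_{\LTS}\fml\COL\state$. If $\fpo_{j}=\mu$, then $\fml'=C[\false]$; since no typing rule concludes a judgment with subject $\false$, any derivation of $\ITE\vdash_{\LTS}C[\false]\COL\state$ contains no sub-derivation for the hole occurrence (intuitively it branches away at some $\lor$ via \textsc{T-Or}, or discharges a $\boxfml{a}$ vacuously via \textsc{T-All} when no $a$-successor exists). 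The very same derivation tree, read under $\mathcal{F}_{\LTS,\HES}(\ITE)\supseteq\ITE$ and with the hole filled by $F_{j}\,\chi_{1}\cdots\chi_{\ell}$ instead of $\false$, derives $\mathcal{F}_{\LTS,\HES}(\ITE)\vdash_{\LTS}\fml\COL\state$, completing the proof.
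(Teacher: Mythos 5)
You are at a disadvantage that is worth naming first: the paper contains no proof of Lemma~\ref{lem:inverse2} to compare yours against. The appendix explicitly declines to prove it, on the grounds that it is not used in the correctness proofs of Section~\ref{sec:prfs} (completeness instead goes through \(\mathcal{F}_{\LTS,\HES^{(m)}}\) and Lemma~\ref{lem:subseteq}) and serves only to explain the intuition behind the algorithm. So your proposal must be judged on its own merits, and on its own merits it is correct. You correctly isolate the two points where the statement genuinely goes beyond Lemma~\ref{lem:inverse}: (i) in the \(\beta\)-step case, the prefix \(F_{1}{\rewritesto'_{\HES}}^{*}\fml\) does not directly give \(\chi_{i}\in\Flow_{\HES}(X_{i})\), since \(\Flow_{\HES}\) is defined via \(\rewritesto_{\HES}\); and (ii) the truncation steps, which is exactly where \(\ITE\supseteq\ITE_{0}\) enters (in fact only the \(\nu\)-truncation needs it; the \(\mu\)-case needs nothing beyond \(\mathcal{F}_{\LTS,\HES}(\ITE)\supseteq\ITE\)).

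Your postponement argument for (i) is sound, and it is the real content of the lemma. Both kinds of redex occur only at head positions (the hole of a context \(C\) passes only through \(\lor,\land,\diafml{a},\boxfml{a}\)); two distinct head positions are incomparable, because a head position cannot pass through an application node; and a truncation creates a leaf \(\true\)/\(\false\) that neither is, nor can later become part of, a redex. Hence adjacent truncation/\(\beta\) pairs commute, all truncations can be pushed to the end of the prefix, and the application atom \(F_{j}\,\chi_{1}\cdots\chi_{\ell}\) fired in the last step survives, with its arguments intact, back into the pure-\(\beta\) reduct \(\widehat{\fml}\) — which is precisely membership in \(\Flow_{\HES}(X_{i})\). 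Your handling of (ii) is also right: \(\true\) typed at any state \(\state''\) can be replaced by \(F_{j}\,\chi_{1}\cdots\chi_{\ell}\) typed via \(F_{j}\COL\top\to\cdots\to\top\to\state''\in\ITE_{0}\) and vacuous instances of (\textsc{T-App}), while \(\false\) is never the subject of any judgment in a derivation, so the hole may be refilled arbitrarily under the enlarged environment. What separates this from a fully formal proof is only routine machinery (induction on the context for the congruence and replacement steps, weakening of type environments, and the observation that reachable formulas are \(\lambda\)-free so binder side conditions never interfere); none of it hides a difficulty.
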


Let us write \((\mathcal{F}_{\LTS,\HES})^\omega(\Gamma_0)\) for \(\bigcup_{i\in\omega}(\mathcal{F}_{\LTS, \HES})^i(\Gamma_0)\).
As an immediate corollary of Lemma~\ref{lem:inverse2}, we have:
if \(F_1 = \chi'_0 \rewritesto'_{\HES} \chi'_1 \rewritesto'_{\HES} \cdots \rewritesto'_{\HES} \chi'_{m'} = \chi\)
and \(\emptyset \vdash_{\LTS} \chi \COL \state_0\),
then \((\mathcal{F}_{\LTS,\HES})^\omega(\Gamma_0) \vdash_{\LTS} \chi'_i \COL \state_0\) for every \(i\).
Thus, \((\mathcal{F}_{\LTS, \HES})^\omega(\Gamma_0)\) serves as an overapproximation of \(\ITE^{(m)}\) mentioned above.

\subsection{The Algorithm}
\label{subsec:algo}

\begin{algorithm}
    \caption{The proposed HFL model checking algorithm}
    \label{lst:pseudocode}
    \begin{algorithmic}
        \State $\ITE := \ITE_{0}$
        \While{$\ITE \neq \mathcal{F}_{\LTS, \HES}'(\ITE)$}
            \State $\Gamma := \mathcal{F}_{\LTS, \HES}'(\ITE)$
        \EndWhile
        \State \Return whether the subgame $\SG(\LTS, \HES, \Gamma)$ is winning
    \end{algorithmic}
\end{algorithm}

Based on the intuitions explained in Section~\ref{sec:approx},
we propose the algorithm shown in Algorithm~\ref{lst:pseudocode}.

In the algorithm,
the function $\mathcal{F}_{\LTS, \HES}'$ is an overapproximation of the function $\mathcal{F}_{\LTS, \HES}$,
obtained by replacing $\Flow_{\HES}$ in the definition of $\mathcal{F}_{\LTS, \HES}$
with an overapproximation $\Flow_{\HES}'$ satisfying $\allprd{X}{\Flow_{\HES}(X) \subseteq \Flow_{\HES}'(X)}$.
This is because it is in general too costly to compute the exact flow set $\Flow_{\HES}(X)$.
The overapproximation $\Flow_{\HES}'$ can typically be computed by flow analysis algorithms for functional programs,
such as 0-CFA~\cite{Shivers:1991}.
The first four lines compute \(\bigcup_{i\in\omega}(\mathcal{F}'_{\LTS, \HES})^i(\Gamma_0)\),
which is an overapproximation of \(\bigcup_{i\in\omega}(\mathcal{F}_{\LTS, \HES})^i(\Gamma_0)\) discussed in the previous subsection.

\(\SG(\LTS,\HES,\ITE)\) on the last line denotes the subgame of \(\TG(\LTS,\HES)\),
obtained by restricting the game arena.
It is defined as follows.
\begin{definition}[Subgame]
\label{def:subgame}
    Let $\LTS = \LTSRHS$ be an LTS,
    $\HES = \HESRHS$ be an HES with $\hty_{1} = \obj$,
    and $\ITE \in \mathcal{T}_{\HES}$ be a type environment for $\HES$.
    The \emph{subgame} $\SG(\LTS, \HES, \ITE)$ is a parity game defined the same as $\TG(\LTS, \HES)$
    except that the set of positions is restricted to the subsets of $\ITE$.
    That is,
    for \(\TG(\LTS,\HES)=(V_0',V_1',v_{0}',E'_0\cup E'_1,\Omega')\),
    $\SG(\LTS, \HES, \ITE)$ is the parity game $(V_{0}, V_{1}, v_{0}, E_{0} \cup E_{1}, \Omega)$,
    where:
    \begin{itemize}
        \item $V_{0} = \ITE \cup \{v_{0}'\}$,
              $V_{1} = \{ \, \ITE' \mid \ITE' \subseteq \ITE \}$, \\
              $v_{0} = v_{0}'$,
              $E_{0} = E_0'\cap (V_{0} \times V_{1})$,
              $E_{1} = E_1'\cap (V_{1} \times V_{0})$.
        \item \(\Omega\) is the restriction of \(\Omega'\) to \(V_0\cup V_1\).
    \end{itemize}
\end{definition}

The following theorem claims the correctness of the algorithm.
\begin{theorem}[Correctness] \label{thm:main}
    Let $\LTS$ be an LTS and $\HES = \HESRHS$ be an HES with $\hty_{1} = \obj$.
    Algorithm~\ref{lst:pseudocode} terminates.
    Furthermore, it returns ``yes'' if and only if \(\LTS\models \HES\).
\end{theorem}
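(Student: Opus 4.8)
The plan is to prove the two claims—termination and the equivalence ``returns yes'' \(\iff\) \(\LTS\models\HES\)—separately, and to split the equivalence into a soundness direction (the subgame being winning implies \(\LTS\models\HES\)) and a completeness direction (\(\LTS\models\HES\) implies the subgame is winning). For termination I would first observe that \(\mathcal{T}_{\HES}\) is finite, since for each simple type \(\hty_j\) there are only finitely many refinement types \(\ity::\hty_j\) (the set \(\States\) being finite). The overapproximation \(\mathcal{F}_{\LTS,\HES}'\) is, like \(\mathcal{F}_{\LTS,\HES}\), extensive by construction, i.e.\ \(\ITE\subseteq\mathcal{F}_{\LTS,\HES}'(\ITE)\). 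Hence the iterates \((\mathcal{F}_{\LTS,\HES}')^{i}(\ITE_{0})\) form an ascending chain in the finite lattice \(\mathcal{T}_{\HES}\), so the loop stabilizes at \(\Gamma=(\mathcal{F}_{\LTS,\HES}')^{\omega}(\ITE_{0})\) after finitely many steps, and the resulting finite parity game \(\SG(\LTS,\HES,\Gamma)\) is solvable in finite time.

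For soundness I would show that any winning strategy for \verifier{} in \(\SG(\LTS,\HES,\Gamma)\) is already a winning strategy in the full game \(\TG(\LTS,\HES)\), whence \(\LTS\models\HES\) follows by Theorem~\ref{thm:KLB}. The key point is that the restricted arena is a trap for \falsifier{}: every \falsifier{} position of the subgame is some \(\ITE'\subseteq\Gamma\), and \falsifier{}'s moves lead to bindings \(F_j\COL\ity\in\ITE'\subseteq\Gamma\), which are exactly \verifier{}'s positions of the subgame; so \falsifier{}'s options coincide in the two games. Since \verifier{}'s strategy only moves to type environments contained in \(\Gamma\), every consistent play stays inside the subarena, and, because priorities are inherited by restriction, the parity condition is decided identically. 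Thus the same strategy wins in \(\TG(\LTS,\HES)\). Notably this direction holds for an \emph{arbitrary} \(\Gamma\), so using the overapproximation \(\mathcal{F}_{\LTS,\HES}'\) in place of \(\mathcal{F}_{\LTS,\HES}\) cannot break soundness.

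The completeness direction carries the real weight and is organized around observations (i)--(iii) of Section~\ref{sec:approx}. Assuming \(\LTS\models\HES\), I would use the Kleene fixpoint theorem on the finite semantic lattices \(D_{\LTS,\hty}\) to fix an \(m\) large enough that \(\LTS\models\HES^{(m)}\); since \(\HES^{(m)}\) is recursion-free, normalizing its initial variable under \(\rewritesto_{\HES^{(m)}}\) yields a fixpoint-free \(\chi\) with \(\emptyset\vdash_{\LTS}\chi\COL\state_{0}\). Using the simulation of \(\rewritesto_{\HES^{(m)}}\) by \(\rewritesto'_{\HES}\) together with the subject-expansion Lemma~\ref{lem:inverse2}, I would then read off, backward along \(F_{1}\rewritesto'_{\HES}\cdots\rewritesto'_{\HES}\chi\), type environments all contained in \((\mathcal{F}_{\LTS,\HES})^{\omega}(\ITE_{0})\); and since \(\Flow_{\HES}(X)\subseteq\Flow_{\HES}'(X)\) yields \((\mathcal{F}_{\LTS,\HES})^{\omega}(\ITE_{0})\subseteq(\mathcal{F}_{\LTS,\HES}')^{\omega}(\ITE_{0})=\Gamma\), every binding so collected already lives in \(\Gamma\). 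These bindings supply \verifier{}'s moves for a strategy in \(\TG(\LTS,\HES)\) that never leaves \(\SG(\LTS,\HES,\Gamma)\), witnessing that the subgame is winning.

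The main obstacle will be the transfer of the \emph{winning} condition across the approximation, i.e.\ observation~(ii): the reductions for \(\HES^{(m)}\) and Lemma~\ref{lem:inverse2} certify only \emph{typability}, whereas I must guarantee that the assembled strategy meets the parity acceptance condition of the full, infinite-play game \(\TG(\LTS,\HES)\). The bounded unfolding index \(\Beta\) forces each \(\mu\)-variable (odd priority) to be exhausted after finitely many unfoldings while each \(\nu\)-variable (even priority) may be turned into \true; the crux is to show that, once indices are erased, this acyclicity becomes a progress/signature measure ensuring that along every infinite play consistent with the extracted strategy the largest priority visited infinitely often is even. I expect this to need the careful index-to-priority bookkeeping developed in Section~\ref{sec:prfs}, and it is the step whose correctness is least mechanical.
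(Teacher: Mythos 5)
Your termination argument and your soundness argument are both correct and coincide with the paper's (Theorem~\ref{thm:sound}): the iterates of $\mathcal{F}'_{\LTS,\HES}$ ascend in a finite lattice, and a winning strategy for \verifier{} in the subgame transfers verbatim to $\TG(\LTS,\HES)$ because restricting positions to subsets of $\Gamma$ removes options of \verifier{} only, while from any $\ITE'\subseteq\Gamma$ the moves of \falsifier{} are unchanged; Theorem~\ref{thm:KLB} then closes that direction.

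The completeness direction, however, contains a genuine gap, which you partly acknowledge in your last paragraph. What your argument actually proves is that $F_1\COL q_0$, and more generally the bindings read off backward from one reduction sequence, belong to $\Gamma=(\mathcal{F}'_{\LTS,\HES})^{\omega}(\ITE_0)$. Membership in $\Gamma$ does not witness that the subgame is winning --- if it did, the algorithm could skip solving the parity game and simply test $F_1\COL q_0\in\Gamma$, which is wrong: for $\HES = (S =_{\nu} F;\; F =_{\mu} \diafml{a}F)$ and the one-state LTS with the single transition $q_0\transto{a}q_0$ we have $\LTS\nmodels\HES$, yet $S\COL q_0\in\ITE_0\subseteq\Gamma$. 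So, under the hypothesis $\LTS\models\HES$, you must still (a) assemble from the backward-typing data a single positional strategy for \verifier{}, choosing one justifying environment per binding, and (b) prove that every infinite play conforming to it satisfies the parity condition; typability alone (Lemma~\ref{lem:inverse} or~\ref{lem:inverse2}) gives no control over priorities. This is exactly the step you defer to ``careful index-to-priority bookkeeping,'' and it is the entire technical content of the paper's Section~\ref{sec:prfs}: the pruning operator $\Regress$, which discards an indexed binding $F_j^{\Beta}\COL\ity$ unless it is derivable from bindings with \ordinals{} below $\Beta$ ($\prec_j$ for $\mu$, $\preceq_{j-1}$ for $\nu$); Lemma~\ref{lem:progress}, which turns the surviving \ordinals{} into a parity progress measure so that fixpoints of $\Regress$ are winning regions (Lemma~\ref{lem:regress}); and the pigeonhole argument of Lemma~\ref{lem:regress2}, which shows that $F_1^{(m)}\COL q_0$ survives the pruning once $m$ exceeds the number of types refining the relevant kind --- a largeness requirement on $m$ separate from, and in addition to, the Kleene-theorem requirement $\LTS\models\HES\Leftrightarrow\LTS\models\HES^{(m)}$ that you invoke. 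Note finally that your chosen route makes this hole harder to fill: by passing to the relaxed relation $\rewritesto'_{\HES}$ and Lemma~\ref{lem:inverse2} you erase the \ordinals{} at the outset, but they are precisely the data from which the progress measure is built. The paper's proof never uses Lemma~\ref{lem:inverse2} (it is stated only for intuition); instead it saturates $\mathcal{F}_{\LTS,\HES^{(m)}}$ on the indexed system, and applies $\Forget$ and Lemma~\ref{lem:overapproximation-of-Em} only after the winning region has been identified.
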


\begin{example}
    Recall the HES \(\HESex{}\) and the LTS \(\LTSex{}\) in Example~\ref{ex:hflmc}.
    The fixpoint computation from the initial type environment
    \(\ITE_{0} = \{S \COL \state_{0}, \, S \COL \state_{1}, \, S \COL \state_{2} \}\)
    by the function \(\mathcal{F}_{\LTSex{}, \HESex{}}\)
    (with a few simple optimizations)\footnote{
        Using subtyping relations,
        we can refrain from unnecessary type derivations like
        \(\{S \COL \state_{0}, X \COL \state_{1}\} \vdash_{\LTSex{}} \psi_{F} \COL \state_{0}\)
        in this example.
        \ifaplas
            See~\cite{Hosoi-Kobayashi-Tsukada:2019:full} for more details.
        \else
            See Appendix~\ref{app:opt} for more details.
        \fi
    } proceeds as shown in Table~\ref{tab:expansion}.
    Note that the flow set \(\Flow_{\HESex{}}(X)\)
    for the formal parameter \(X\) of the fixpoint variable \(F\)
    is calculated as \(\{\diafml{b}S, \, \diafml{b}\diafml{b}S, \ldots\}\),
    and thus the only candidates for the type environment \(\Delta\)
    in the algorithm are \(\Delta = \emptyset\) and \(\Delta = \{X \COL \state_{1}\}\).
    The expansion reaches the fixpoint after two iterations,\footnote{
        Actually, our prototype model checker reported in Section~\ref{sec:expr}
        does not derive the type binding \(F \COL \top \to \state_{2}\),
        and thus the computation terminates in one iteration.
        This is because it uses information of types in flow analysis,
        which reveals that it does not affect the result whether formulas of the form \(\appfml{F} \fml\) have type \(\state_{2}\).
        \ifaplas
            See~\cite{Hosoi-Kobayashi-Tsukada:2019:full} for details.
        \else
            See Appendix~\ref{app:opt}.
        \fi
    } and the algorithm returns ``yes'' as the resulting type environment contains
    sufficient type bindings to construct the winning strategy depicted in Figure~\ref{fig:lts} of Example~\ref{ex:hflmc}.
    \qed
    \begin{table}
        \caption{Fixpoint computation by the function \(\mathcal{F}_{\LTSex{}, \HESex{}}\)}
        \begin{tabular}{|c|c|c|}
            \hline
            Iteration number $k$ &
            Type environment $\ITE_{k}$ &
            Newly derivable type judgments
            \\ \hline
            0 &
            \(
                \{S \COL \state_{0}, \, S \COL \state_{1}, \, S \COL \state_{2}\}
            \) &
            \( \setlength{\arraycolsep}{1pt}\begin{array}{ccc}
                \{X \COL \state_{1}\} & \vdash_{\LTSex{}} & \psi_{F} \COL \state_{1} \\
                \{S \COL \state_{0}\} & \vdash_{\LTSex{}} & \psi_{F} \COL \state_{0}
                \end{array}
            \)
            \\ \hline
            1 &
            \(
                \ITE_{0} \cup \{F \COL \state_{1} \to \state_{1}, \, F \COL \top \to \state_{0}\}
            \) &
            \( \begin{array}{c}
                    \{F \COL \top \to \state_{0}\} \vdash_{\LTSex{}} \psi_{F} \COL \state_{2} \\
                \end{array}
            \)
            \\ \hline
            2 &
            \(
                \ITE_{1} \cup \{F \COL \top \to \state_{2}\}
            \) & - \\ \hline
        \end{tabular}
        \label{tab:expansion}
    \end{table}
\end{example}

\section{Correctness of the Algorithm}
\label{sec:prfs}

We sketch a proof of Theorem~\ref{thm:main} in this section.
A more detailed proof is found in
\ifaplas
    \cite{Hosoi-Kobayashi-Tsukada:2019:full}.
\else
    Appendix~\ref{app:proof:prfs}.
\fi
We discuss soundness and completeness (Theorems~\ref{thm:sound} and~\ref{thm:complete} below) separately,
from which Theorem~\ref{thm:main} follows.

\subsection{Soundness of the Algorithm}
\label{subsec:soundness}

The soundness of the algorithm follows immediately from the fact that the replacement of \(\TG(\LTS,\HES)\)
with the subgame $\SG(\LTS, \HES, \ITE)$ restricts only the moves of \verifier{},
so that the resulting game is harder for her to win.

\begin{theorem}[Soundness] \label{thm:sound}
    Let $\LTS$ be an LTS and $\HES = \HESRHS$ be an HES with $\hty_{1} = \obj$.
    If $\LTS \nmodels \HES$, then the algorithm returns ``no'', that is,
    the subgame $\SG(\LTS, \HES, (\mathcal{F}_{\LTS, \HES}')^{\omega}(\ITE_{0}))$ is not winning.
\end{theorem}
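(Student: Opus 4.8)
The plan is to prove the theorem by its contrapositive, via a strategy-transfer argument. For an arbitrary $\ITE \in \mathcal{T}_{\HES}$ (in particular $\ITE = (\mathcal{F}_{\LTS, \HES}')^{\omega}(\ITE_{0})$), I would show that every winning strategy for \verifier{} in the subgame $\SG(\LTS, \HES, \ITE)$ is already a winning strategy in the full typability game $\TG(\LTS, \HES)$. Granting this, soundness is immediate: if $\LTS \nmodels \HES$ then by Theorem~\ref{thm:KLB} the game $\TG(\LTS, \HES)$ is not winning, hence $\SG(\LTS, \HES, \ITE)$ cannot be winning either, and the algorithm returns ``no''.

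First I would make precise the sense in which the restriction is one-sided. Writing $\TG(\LTS, \HES) = (V_{0}', V_{1}', v_{0}', E_{0}' \cup E_{1}', \Omega')$, Definition~\ref{def:subgame} gives $V_{0} = \ITE \cup \{v_{0}'\}$, $V_{1} = \{\ITE' \mid \ITE' \subseteq \ITE\}$, $E_{0} = E_{0}' \cap (V_{0} \times V_{1})$, $E_{1} = E_{1}' \cap (V_{1} \times V_{0})$, and $\Omega$ the restriction of $\Omega'$. The key point is that \falsifier{}'s options never leave the subgame: from any $\ITE' \in V_{1}$ (so $\ITE' \subseteq \ITE$), every \falsifier{} move in $\TG(\LTS, \HES)$ selects some $F_{j} \COL \ity \in \ITE'$, and since $\ITE' \subseteq \ITE \subseteq V_{0}$ the target already lies in $V_{0}$, so the corresponding edge is retained in $E_{1}$. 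Thus at a \falsifier{} position of the subgame, \falsifier{} has exactly the same moves in both games, whereas \verifier{}'s moves from a position $F_{j} \COL \ity$ are cut down from arbitrary $\ITE' \subseteq V_{0}'$ with $\ITE' \vdash_{\LTS} \fml_{j} \COL \ity$ to only those with $\ITE' \subseteq \ITE$.

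Next I would transfer a winning strategy. Let $\strategy$ be a winning strategy for \verifier{} in $\SG(\LTS, \HES, \ITE)$, and consider any play of $\TG(\LTS, \HES)$ from $v_{0}'$ consistent with $\strategy$. By induction on its length, every visited position lies in the subgame: $v_{0}' \in V_{0}$ is the base case; at a subgame \verifier{} position, $\strategy$ picks a subgame move, which by $E_{0} \subseteq E_{0}'$ is a legal full-game move landing in $V_{1}$; and at a subgame \falsifier{} position, the observation above keeps every full-game \falsifier{} move inside $V_{0}$. Since $\strategy$ is winning in the subgame it never gets stuck at a reachable \verifier{} position, so the full-game play likewise never ends at a \verifier{} dead-end; any maximal finite such play therefore ends in $V_{1}$ and is won by \verifier{}, and because $\Omega$ is the restriction of $\Omega'$, any infinite such play has the same $\limsup$ of priorities and the same winner as in the subgame. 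Hence $\strategy$ is winning in $\TG(\LTS, \HES)$.

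Putting these together, if $\SG(\LTS, \HES, \ITE)$ is winning the transfer yields a winning strategy for $\TG(\LTS, \HES)$, so $\LTS \models \HES$ by Theorem~\ref{thm:KLB}; contrapositively, $\LTS \nmodels \HES$ forces the subgame to be non-winning and the algorithm to answer ``no''. I expect the only delicate point to be the inductive bookkeeping, specifically verifying that \falsifier{} cannot escape the subgame; everything hinges on the inclusion $\ITE' \subseteq \ITE$ holding at \falsifier{} positions, which is precisely what makes the restriction affect \verifier{} alone. Note that no property of the particular $\ITE$ computed by the fixpoint iteration is used, so soundness holds for every choice of $\ITE$.
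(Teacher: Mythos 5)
Your proposal is correct and follows essentially the same approach as the paper: both prove the contrapositive by transferring a winning strategy from the subgame to the full game $\TG(\LTS,\HES)$, with the key observation that the restriction to $\ITE$ cuts down only \verifier{}'s moves while \falsifier{}'s moves from any subgame position $\ITE' \subseteq \ITE$ remain entirely inside the subgame, then concluding via Theorem~\ref{thm:KLB}. The paper states this compactly; your inductive bookkeeping that plays consistent with the strategy never leave the subgame is exactly the detail the paper leaves implicit.
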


\begin{proof}
    We show the contraposition.
    Suppose that $\SG(\LTS, \HES, (\mathcal{F}_{\LTS, \HES}')^{\omega}(\ITE_{0}))$ is a winning game.
    Then, there exists a winning strategy $\strategy$ of \verifier{} for the node $F_{1} \COL q_{0}$ in that game.
    This strategy $\strategy$ also gives a winning strategy of \verifier{} for the node $F_{1} \COL \state_{0}$
    in the original typability game $\TG(\LTS, \HES)$;
    note that for each position \(\ITE\in V_1\) of \(\SG(\LTS, \HES, (\mathcal{F}_{\LTS, \HES}')^{\omega}(\ITE_{0}))\),
    the set of possible moves of \falsifier{} in \(\TG(\LTS, \HES)\)
    is the same as that in \(\SG(\LTS, \HES, (\mathcal{F}_{\LTS, \HES}')^{\omega}(\ITE_{0}))\).
    Therefore, $\LTS \models \HES$ follows from Theorem~\ref{thm:KLB}.
    \qed
\end{proof}

\subsection{Completeness of the Algorithm}
\label{subsec:completeness}

The completeness of the algorithm is stated as Theorem~\ref{thm:complete} below.

\begin{theorem}[Completeness] \label{thm:complete}
    Let $\LTS$ be an LTS and $\HES = \HESRHS$ be an HES with $\hty_{1} = \obj$.
    If $\LTS \models \HES$, then the algorithm returns ``yes'', that is,
    the subgame $\SG(\LTS, \HES, (\mathcal{F}_{\LTS, \HES}')^{\omega}(\ITE_{0}))$ is winning.
\end{theorem}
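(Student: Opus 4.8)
The plan is to realize the three observations (i)--(iii) of Section~\ref{sec:approx} as a chain of reductions that ends in a strategy-transfer argument. First I would fix an $m$ large enough that $\LTS\models\HES$ implies $\LTS\models\HES^{(m)}$; such an $m$ exists by the Kleene fixpoint theorem, since $\HES^{(m)}$ replaces each fixpoint by its $m$-th approximant and, on the finite lattice $D_{\LTS,\hty}$, these approximants reach the actual fixpoint after finitely many steps. Because $\HES^{(m)}$ is recursion-free, its initial variable $F_1^{(m)}$ normalizes under $\rewritesto_{\HES^{(m)}}$ to a fixpoint-free propositional formula $\chi$ with $\state_0\in\Brack{\chi}_{\LTS}$, and for such $\chi$ semantic satisfaction coincides with typability in the refinement system, giving $\emptyset\vdash_{\LTS}\chi\COL\state_0$.

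Next I would transport this typing backward through the reduction. The sequence $F_1^{(m)}\rewritesto_{\HES^{(m)}}^{*}\chi$ projects, by erasing \ordinals{}, to a relaxed reduction $F_1=\chi_0'\rewritesto'_{\HES}\cdots\rewritesto'_{\HES}\chi_{m'}'=\chi$, so the corollary of Lemma~\ref{lem:inverse2} yields $(\mathcal{F}_{\LTS,\HES})^{\omega}(\ITE_0)\vdash_{\LTS}\chi_i'\COL\state_0$ for every $i$, and hence also $\ITE^{*}:=(\mathcal{F}_{\LTS,\HES}')^{\omega}(\ITE_0)\vdash_{\LTS}\chi_i'\COL\state_0$, using $\Flow_{\HES}\subseteq\Flow_{\HES}'$ together with monotonicity of the typing judgement in its environment. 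Running the same subject-expansion bookkeeping on the \emph{decorated} sequence $F_1^{(m)}\rewritesto_{\HES^{(m)}}^{*}\chi$ produces, at each contracted redex, a witnessing environment; collecting these defines \verifier{}'s moves and thereby a strategy $\strategy$ in $\TG(\LTS,\HES^{(m)})$, which has only finite plays since $\HES^{(m)}$ is recursion-free. The bound above guarantees that every type binding used by $\strategy$, once its \ordinal{} $\Beta$ is stripped, lies in $\ITE^{*}$.

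The third step is to read the erased version of $\strategy$ as a strategy $\strategy'$ in $\SG(\LTS,\HES,\ITE^{*})$ and to show it makes $v_0=F_1\COL\state_0$ a winning position for \verifier{} (memoryless determinacy of parity games then supplies an actual positional winning strategy). Since erasing \ordinals{} turns $\fml_j^{\Beta}$ back into $\fml_j$ and preserves the typing judgements driving $E_0$, the erased moves are legal edges of $\TG(\LTS,\HES)$ confined to positions in $\ITE^{*}$, hence legal in the subgame. It remains to verify the parity condition for plays consistent with $\strategy'$: a finite maximal play ends in an environment from which \falsifier{} cannot move, so it is won by \verifier{} by definition, and the real content lies in the infinite plays, which now appear because $\HES$ (unlike $\HES^{(m)}$) is recursive. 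Here I would use that the \ordinal{} $\Beta$ attached to each occurrence in $\HES^{(m)}$ is a well-founded measure that strictly decreases whenever a $\mu$-variable at the current dominant priority is unfolded; turning this into a progress measure on the erased play shows that no odd priority can be the largest priority visited infinitely often, so $\limsup_{i}\Omega(v_i)$ is even.

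I expect this last parity argument to be the main obstacle. The difficulty is that \ordinal{} erasure is many-to-one, collapsing infinitely many decorated positions onto a single position of $\TG(\LTS,\HES)$, so an infinite play of $\SG(\LTS,\HES,\ITE^{*})$ need not be the image of any single finite play of the $\HES^{(m)}$-game. I would therefore have to choose $m$ uniformly large and set up the \ordinal{}-based ranking so that it descends along every $\mu$-dominated recurrence, then invoke the finiteness of $\TG(\LTS,\HES)$ to convert ``each $\mu$ is unfolded only boundedly often'' into ``the dominant infinitely-recurring priority is even.'' Extra care is needed at the interface of observations~(i) and~(ii): the single $m$ must simultaneously make $\HES^{(m)}$ semantically exact and be large enough that the stripped strategy survives arbitrarily long $\nu$-recurrences without exhausting its \ordinals{}.
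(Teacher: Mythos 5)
Your first two steps (choosing $m$ by Kleene approximation, normalizing $F_1^{(m)}$, and pushing the typing of the normal form backwards so that everything lands inside $\ITE^{*}=(\mathcal{F}'_{\LTS,\HES})^{\omega}(\ITE_{0})$) match the paper's route in spirit (Lemmas~\ref{lem:inverse3} and~\ref{lem:overapproximation-of-Em}), and your instinct that the rest is a parity-progress-measure argument is also the paper's. But your third step rests on a claim that is false exactly where the difficulty lives: you assert that ``erasing \ordinals{} turns $\fml_j^{\Beta}$ back into $\fml_j$ and preserves the typing judgements driving $E_0$.'' This holds only when $\beta_j \neq 0$. When $\beta_j = 0$ (necessarily $\fpo_j=\nu$ for bindings in the saturated set, cf.\ Lemma~\ref{lem:seqbase}), the decorated body is $\lmdfml{X_1}{}\cdots\lmdfml{X_\ell}{}\true$, whose erasure is \emph{not} $\fml_j$; the decorated strategy's move at such a position is to the empty environment, and that move is in general not a legal edge of $\TG(\LTS,\HES)$, where the original recursive body $\fml_j$ must be typed. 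These base positions are exactly where every finite play of the recursion-free game ends, and exactly where the infinite plays of the erased game must continue, so the erased strategy is not even total there. Concretely, for $F_1 =_{\nu} F_2;\; F_2 =_{\mu} \diafml{a}F_1 \lor \diafml{b}F_2$ over a single state $\state_0$ with only a $b$-loop, the saturated set is $\{F_1^{(0)}\COL\state_0\}$, whose erasure contains $F_1\COL\state_0$; your transfer would declare it winning via the move to $\emptyset$, yet Verifier has no legal move from $F_1\COL\state_0$ using the original body $F_2$ --- and indeed $\LTS\nmodels\HES$ here.

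What is missing is a mechanism for \emph{re-justifying} the erased $\nu$-base bindings against the original bodies, together with a proof that this re-justification does not destroy $F_1^{(m)}\COL\state_0$. The paper supplies both. The function $\Regress$ prunes every decorated binding $F_j^{\Beta}\COL\ity$ that cannot be re-derived as $\ITE_{\fpo_j}^{\Beta}\vdash_{\LTS}\fml_j\COL\ity$ from bindings with smaller \ordinals{}, where ``smaller'' for $\nu$-variables means $\preceq_{j-1}$: the $j$-th coordinate may jump back up. This $\nu$-reset is absent from your well-founded measure (which only records descent at $\mu$-unfoldings) and is precisely what allows infinite plays to be consistent with a progress measure; the minimal surviving \ordinal{} of each binding then yields a genuine parity progress measure (Lemmas~\ref{lem:regress} and~\ref{lem:progress}). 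The hardest remaining content is Lemma~\ref{lem:regress2}: if $F_1^{(m)}\COL\state_0$ were pruned, the chain of prunings (Lemmas~\ref{lem:seqbase},~\ref{lem:seqstep}) would pass through bindings $F_{j_\ell}^{(\beta_1,\ldots,\beta_{j_\ell-1},i)}\COL\ity_{\ell_i}$ for all $i\in\{0,\ldots,m\}$; for $m$ large enough two of these types coincide, and since the corresponding $\Regress$-environments are then identical (the order $\preceq_{j_\ell-1}$ ignores the coordinates from $j_\ell$ on), the two bindings would have to be pruned at the same stage, contradicting the strict descent of pruning stages. This pigeonhole argument is what cashes out ``sufficiently large $m$''; your proposal correctly names this obstacle in its last paragraph but offers no counterpart to the mechanism that resolves it.
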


The proof follows the intuitions provided in Section~\ref{sec:approx}.
Given a type environment \(\ITE\) for \(\HES^{(m)}\),
we write \(\Forget(\ITE)\) for the type environment obtained by removing all the \ordinals{} from \(\ITE\).
Theorem~\ref{thm:complete} follows immediately from Lemmas~\ref{lem:approx-is-complete} and~\ref{lem:overapproximation-of-Em} below.
\begin{lemma}
\label{lem:approx-is-complete}
    If the typability game \(\TG(\LTS,\HES)\) is winning,
    then for sufficiently large \(m\),
    the subgame $\SG(\LTS, \HES, \Forget((\mathcal{F}_{\LTS, \HES^{(m)}})^{\omega}(\emptyset)))$ is also winning.
\end{lemma}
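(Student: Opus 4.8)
The plan is to bridge from the recursive game to its recursion-free approximation through the semantics, transport a winning strategy for the approximation back to $\TG(\LTS,\HES)$, and then discharge the parity condition by reading the \ordinals{} of Definition~\ref{def:approx} as a ranking.

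\emph{Passing to a finite approximation.} By Theorem~\ref{thm:KLB} the hypothesis is equivalent to $\LTS\models\HES$. First I would note that every semantic domain $D_{\LTS,\hty}$ is a \emph{finite} complete lattice, since it is built from the finite powerset lattice $2^{\States}$ by repeatedly forming monotone function spaces; hence each fixpoint in $\HES$ is reached after finitely many Kleene iterations. The non-recursive HES $\HES^{(m)}$ is exactly the $m$-fold unfolding, the reset of inner \ordinals{} to $m$ in $\Beta(k)$ being precisely what re-initialises an inner fixpoint whenever an enclosing one is unfolded; it follows that for all sufficiently large $m$ we have $\LTS\models\HES^{(m)}$ iff $\LTS\models\HES$. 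Fix such an $m$. As $\HES^{(m)}$ is recursion-free, $F_1^{(m)}$ reduces under $\rewritesto_{\HES^{(m)}}$ (Definition~\ref{def:rewriting}) to a fixpoint-free, variable-free normal form $\chi$ with $\state_0\in\Brack{\chi}_\LTS$, and for such a propositional $\chi$ this is equivalent to derivability of $\emptyset\vdash_{\LTS}\chi\COL\state_0$.

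\emph{Backward expansion.} Writing the reduction as $F_1^{(m)}=\chi_0\rewritesto_{\HES^{(m)}}\cdots\rewritesto_{\HES^{(m)}}\chi_{m'}=\chi$, I would apply Lemma~\ref{lem:inverse} (with $\HES$ instantiated to $\HES^{(m)}$) repeatedly, starting from $\emptyset\vdash_{\LTS}\chi\COL\state_0$, to obtain type environments $\ITE_{m'}=\emptyset,\ITE_{m'-1},\dots,\ITE_0$ with $\ITE_i\vdash_{\LTS}\chi_i\COL\state_0$; since $\mathcal{F}_{\LTS,\HES^{(m)}}$ is monotone and extensive, each $\ITE_i$ is contained in $(\mathcal{F}_{\LTS,\HES^{(m)}})^\omega(\emptyset)$. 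Reading these derivations as moves yields a strategy for \verifier{} in $\TG(\LTS,\HES^{(m)})$ that only ever visits bindings in $(\mathcal{F}_{\LTS,\HES^{(m)}})^\omega(\emptyset)$, and it wins because $\HES^{(m)}$ is recursion-free, so every play is finite and ends at the empty environment.

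\emph{Forgetting \ordinals{} and the parity condition.} Finally I would transfer this to the original game, letting \verifier{}'s reply at a position $F_j\COL\ity$ of $\SG(\LTS,\HES,\Forget((\mathcal{F}_{\LTS,\HES^{(m)}})^\omega(\emptyset)))$ be the \ordinal{}-erased environment she would play at $F_j^{\Beta}\COL\ity$ for a witnessing \ordinal{} $\Beta$, propagating the witness to $\Beta(k)$ whenever \falsifier{} challenges some $F_k\COL\ity'$. When $\beta_j>0$, the formula $\fml_j^{\Beta}$ differs from $\fml_j$ only by the \ordinal{} superscripts on the $F_k$, so erasing \ordinals{} turns a derivation of $\fml_j^{\Beta}$ into one of $\fml_j$ and each move is legal in $\TG(\LTS,\HES)$. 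It remains to check that \verifier{} never gets stuck and wins every infinite play. A $\mu$-stub $\lambda\vec X.\false$ is untypeable, so \falsifier{} can never steer the witness into one; a $\nu$-stub $\lambda\vec X.\true$ carries only the strongest type $\top\to\cdots\to\top\to\state$, and here I would use that $m$ is large to replace the exhausted witness by a fresh one of the same type with a positive last \ordinal{}, letting \verifier{} keep typing the real body $\fml_j$. Tracking the witnessing \ordinals{} as a ranking, each component is decremented only when the corresponding variable is unfolded and is reset only when an enclosing variable is unfolded through a $\nu$-stub, so the least level reset infinitely often along an infinite play is a $\nu$-variable; since $\Omega_j$ is even exactly when $\fpo_j=\nu$ and the priorities are non-increasing in $j$, the largest priority seen infinitely often is even and \verifier{} wins.

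The crux, and the step I expect to require the most care, is this last transfer: reconciling the well-founded, stub-terminated structure of the recursion-free approximation with the parity winning condition of the recursive game. The two delicate points are (i) showing that "sufficiently large $m$" really does supply a positive-\ordinal{} witness for every reachable strongest $\nu$-type, so that erasing \ordinals{} never strands \verifier{} at a $\true$-stub she cannot justify against $\fml_j$, and (ii) verifying that the \ordinal{}-reset discipline of Definition~\ref{def:approx} forces the dominating infinitely-recurring priority to originate from a $\nu$-variable. I would isolate (ii) as a small progress-measure lemma about sequences of \ordinals{} evolving under $\Beta(\cdot)$, proving by downward induction on the level that each component either stabilises or is reset by strictly earlier activity, so that the least infinitely-reset level exists and sits at a $\nu$-variable.
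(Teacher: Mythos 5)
Your first two steps (finite semantic approximation, then backward saturation via Lemma~\ref{lem:inverse} applied to \(\HES^{(m)}\)) are sound and match the paper's intuition, but the step you yourself flag as crux (i) is a genuine gap, and it is precisely where the mathematical content of the lemma lives. The problem is that the saturated set \((\mathcal{F}_{\LTS,\HES^{(m)}})^{\omega}(\emptyset)\) contains \emph{every} \(\nu\)-stub binding \(F_j^{\Beta}\COL\top\to\cdots\to\top\to\state\) with \(\beta_j=0\): the stub body \(\lmdfml{\widetilde{X}}{}\true\) is typable under \(\Delta=\emptyset\), so these bindings are added unconditionally in the very first saturation step, and their membership carries no information whatsoever about whether the \emph{real} body \(\fml_j\) admits that type. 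Your erased strategy will inevitably play environments containing such bindings (along any infinite play the last components of the witness \ordinals{} keep decreasing, so stubs are hit), \falsifier{} can challenge exactly such a binding, and then the existence of a ``fresh witness of the same type with a positive last \ordinal{}'' --- i.e.\ a binding \(F_j^{\Beta'}\COL\top\to\cdots\to\top\to\state\) in the saturated set with \(\beta'_j>0\), hence with a justification against the real body --- is not a consequence of saturation or of \(m\) being large per se; it is essentially the statement to be proved, and your proposal offers no argument for it. There is a secondary flaw as well: even granting existence, you may only substitute a fresh witness whose first \(j-1\) components do not increase (otherwise the component of an enclosing \(\mu\)-variable can grow at the reset, and your claim that ``the least level reset infinitely often is a \(\nu\)-variable'' fails); your ranking sketch does not address this interaction between replacement and the order \(\preceq_{j-1}\).

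The paper closes both holes with machinery your proposal lacks: the pruning function \(\Regress\), which repeatedly deletes every binding \(F_j^{\Beta}\COL\ity\) whose \emph{real} body \(\fml_j\) cannot be typed from bindings of smaller \ordinal{} (\(\prec_j\) for \(\mu\), \(\preceq_{j-1}\) for \(\nu\)). Lemma~\ref{lem:regress} shows that any \(\Regress\)-fixpoint inside the saturated set lies in \verifier{}'s winning region: assigning to each erased binding its \(\preceq_n\)-minimal surviving witness turns the fixpoint property into a positional parity progress measure (Lemma~\ref{lem:progress}), which disposes of your point (ii) without any dynamic witness replacement. Lemma~\ref{lem:regress2} then shows that \(F_1\COL q_0\) survives the pruning, and this is where ``sufficiently large \(m\)'' genuinely does its work: a removal chain (Lemmas~\ref{lem:seqbase} and~\ref{lem:seqstep}) must end at a \(\nu\)-stub and pass through all \ordinals{} \((\beta_1,\ldots,\beta_{j_\ell-1},i)\) for \(i=m,\ldots,0\); once \(m\) exceeds the finite number of refinement types of \(\hty_{j_\ell}\), two entries carry the same type, and since they agree on the first \(j_\ell-1\) components they induce the same environment \(\ITE_{\nu}^{\Beta}\), so \(\Regress\) would have to remove them simultaneously --- contradicting that removal times along the chain strictly decrease. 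Without \(\Regress\) (or an equivalent pruning-plus-pigeonhole argument) your ``replace the exhausted witness'' step cannot be justified, so the proposal as it stands does not prove the lemma.
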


\begin{lemma}
\label{lem:overapproximation-of-Em}
    \(\Forget((\mathcal{F}_{\LTS, \HES^{(m)}})^{\omega}(\emptyset))) \subseteq (\mathcal{F}_{\LTS, \HES})^{\omega}(\ITE_0)\).
\end{lemma}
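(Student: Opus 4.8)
The plan is to fix $\ITE^{\ast} = (\mathcal{F}_{\LTS, \HES})^{\omega}(\ITE_0)$ and prove, by induction on $i$, that $\Forget((\mathcal{F}_{\LTS, \HES^{(m)}})^{i}(\emptyset)) \subseteq \ITE^{\ast}$; the claim then follows by taking the union over $i$. Since the domain $\mathcal{T}_{\HES}$ is finite and $\mathcal{F}_{\LTS, \HES}$ is monotone and inflationary (it only adds bindings), the ascending chain from $\ITE_0$ stabilises, so $\ITE^{\ast}$ is a fixpoint: $\mathcal{F}_{\LTS, \HES}(\ITE^{\ast}) = \ITE^{\ast}$. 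Hence it suffices, for each newly derived binding $F_j^{\Beta} \COL \ity$ of $\mathcal{F}_{\LTS, \HES^{(m)}}$, to exhibit its forgetful image $F_j \COL \ity$ as satisfying the defining conditions of $\mathcal{F}_{\LTS, \HES}$ relative to $\ITE^{\ast}$, so that it already belongs to $\mathcal{F}_{\LTS, \HES}(\ITE^{\ast}) = \ITE^{\ast}$.

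I would first record three routine facts. (a) \emph{Weakening}: $\ITE \vdash_{\LTS} \fml \COL \ity$ and $\ITE \subseteq \ITE'$ imply $\ITE' \vdash_{\LTS} \fml \COL \ity$, since only \textsc{T-Var} inspects the environment. (b) $\Forget$ \emph{preserves typing}: collapsing every index, both in the formula and in the environment, turns a derivation of $\ITE \vdash_{\LTS} \fml \COL \ity$ into one of $\Forget(\ITE) \vdash_{\LTS} \Forget(\fml) \COL \ity$, because the only change is that uses of \textsc{T-Var} on $F_{j'}^{\Beta'} \COL \ity'$ become uses on $F_{j'} \COL \ity'$. (c) The formal parameters $X_1,\ldots,X_\ell$ of $F_j$ and of every $F_j^{\Beta}$ coincide, and substituting indexed for unindexed fixpoint variables leaves them untouched; hence $\FV(\psi_j^{\Beta}) \cap \{X_1,\ldots,X_\ell\} = \FV(\psi_j) \cap \{X_1,\ldots,X_\ell\}$ and $\Forget$ acts as the identity on the witness $\Delta$ and on the type $\ity$.

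The key technical ingredient is a \emph{flow-projection} lemma: $\Forget(\Flow_{\HES^{(m)}}(X_i)) \subseteq \Flow_{\HES}(X_i)$. I would prove, by induction on the reduction length, that whenever $F_1^{(m)} \rewritesto^{*}_{\HES^{(m)}} C[\appfml{\appfml{F_j^{\Beta}}{\xi_1}}{\cdots}\,\xi_\ell]$ with $F_j^{\Beta}$ at an application head, then $F_1 \rewritesto^{*}_{\HES} \Forget(C)[\appfml{\appfml{F_j}{\Forget(\xi_1)}}{\cdots}\,\Forget(\xi_\ell)]$. The crucial point, and the place where the index bookkeeping really matters, is that the reduction steps on the spine above this live occurrence must all unfold variables with $\beta \neq 0$: an unfolding with $\beta = 0$ produces the constant $\true$ or $\false$, a leaf that has no descendants and so can never be an ancestor of the displayed occurrence. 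Each $\beta \neq 0$ unfolding projects, under $\Forget$, to a genuine $\rewritesto_{\HES}$ step (a $\beta = 0$ unfolding would only project to the relaxed $\rewritesto'_{\HES}$), so the spine projects to a bona fide $\HES$-reduction and the arguments $\Forget(\xi_i)$ indeed land in $\Flow_{\HES}(X_i)$. I expect this reduction-tracking argument to be the main obstacle, since it is where the approximation $\HES^{(m)}$ and the original $\HES$ must be carefully synchronised.

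With these in hand the inductive step is a direct case analysis on a newly derived binding $F_j^{\Beta} \COL \ity$, writing $\ITE = (\mathcal{F}_{\LTS, \HES^{(m)}})^{i}(\emptyset)$. If $\beta_j = 0$ and $\fpo_j = \mu$, the body is $\false$, which is untypable, so no binding arises. If $\beta_j = 0$ and $\fpo_j = \nu$, the body is $\true$, forcing $\Delta = \emptyset$ and $\ity = \top \to \cdots \to \top \to \state$, whose forgetful image lies in $\ITE_0 \subseteq \ITE^{\ast}$. If $\beta_j \neq 0$, I transport the witness $(\ity,\Delta)$ for $\mathcal{F}_{\LTS, \HES^{(m)}}$ to one for $\mathcal{F}_{\LTS, \HES}$ over $\ITE^{\ast}$: from $\ITE \cup \Delta \vdash_{\LTS} \psi_j^{\Beta} \COL \state$, fact (b) yields $\Forget(\ITE) \cup \Delta \vdash_{\LTS} \psi_j \COL \state$, and the induction hypothesis $\Forget(\ITE) \subseteq \ITE^{\ast}$ with weakening gives $\ITE^{\ast} \cup \Delta \vdash_{\LTS} \psi_j \COL \state$; the flow-projection lemma together with (b) converts the flow condition for $\Flow_{\HES^{(m)}}$ over $\ITE$ into the corresponding one for $\Flow_{\HES}$ over $\ITE^{\ast}$, taking $\Forget$ of each flow witness. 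Thus $F_j \COL \ity \in \mathcal{F}_{\LTS, \HES}(\ITE^{\ast}) = \ITE^{\ast}$, completing the induction and hence the proof.
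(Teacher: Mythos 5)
Your proposal is correct and takes essentially the same route as the paper's own proof: an induction over the saturation iterations with the identical case analysis on \(\beta_j\) (\(\beta_j=0\) with \(\mu\) impossible since \(\false\) is untypable, \(\beta_j=0\) with \(\nu\) landing in \(\ITE_0\), and \(\beta_j\neq 0\) transported using the facts that \(\Forget\) preserves typing and that flow sets project, i.e.\ \(\Forget(\Flow_{\HES^{(m)}}(X_k^{\Beta}))\subseteq\Flow_{\HES}(X_k)\)). The only differences are cosmetic: you compare each iterate against the limit fixpoint \((\mathcal{F}_{\LTS,\HES})^{\omega}(\ITE_0)\) whereas the paper proves the sharper iterate-by-iterate inclusion \(\Forget((\mathcal{F}_{\LTS,\HES^{(m)}})^{i}(\emptyset))\subseteq(\mathcal{F}_{\LTS,\HES})^{i}(\ITE_0)\), and you spell out a reduction-tracking proof of the flow-projection fact that the paper simply asserts in one line.
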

Note that Lemma~\ref{lem:overapproximation-of-Em} implies that
the game $\SG(\LTS, \HES, (\mathcal{F}_{\LTS, \HES}')^{\omega}(\ITE_{0}))$ is more advantageous for \verifier{}
than the game $\SG(\LTS, \HES, \Forget((\mathcal{F}_{\LTS, \HES^{(m)}})^{\omega}(\emptyset)))$,
which is winning when \(\LTS \models \HES\) by Lemma~\ref{lem:approx-is-complete}.

Lemma~\ref{lem:overapproximation-of-Em} should be fairly obvious,
based on the intuitions given in Section~\ref{sec:approx}.
Technically, it suffices to show that
\(F_j^{\Beta}\COL\ity \in (\mathcal{F}_{\LTS, \HES^{(m)}})^{i}(\emptyset)\) implies
\(F_j \COL \ity \in (\mathcal{F}_{\LTS, \HES})^{i}(\ITE_0)\) by induction on \(i\),
with case analysis on \(\beta_j\).
If \(\beta_j = 0\), then \(\fpo_j = \nu\) and the body of \(F_j^{\Beta}\) is \(\lmdfml{\widetilde{X}}{}\true\).
Thus, \(F_j \COL \ity = F_j \COL \top \to \cdots \to \top \to q \in \ITE_0 \subseteq (\mathcal{F}_{\LTS, \HES})^{i}(\ITE_0)\).
If \(\beta_j>0\),
then the body of \(F_j^{\Beta}\) is the same as that of \(F_j\) except \ordinals{}.
Thus, \(F_j \COL \ity \in (\mathcal{F}_{\LTS, \HES})^{i}(\ITE_0)\)
follows from the induction hypothesis and the definition of the function \(\mathcal{F}_{\LTS, \HES}\).
\ifaplas
    See \cite{Hosoi-Kobayashi-Tsukada:2019:full} for details.
\else
    See Appendix~\ref{app:proof:prfs} for details.
\fi

To prove Lemma~\ref{lem:approx-is-complete},
we define another function \(\Regress\) on type environments.
Let \(\Le{k}\) be the lexicographic ordering on the first \(k\) elements of tuples of integers,
and \(\Lt{k}\) be its strict version.
We write \(\Beta_1=_k\Beta_2\) if \(\Beta_1\Le{k}\Beta_2\) and \(\Beta_2\Le{k}\Beta_1\).
For example, $(1, 2) =_{0} (1, 2, 3)$, $(1, 2) =_{1} (1, 2, 3)$,
$(1, 2) =_{2} (1, 2, 3)$, and $(1, 2) \Lt{3} (1, 2, 3)$.
%%Note that \ordinals{} \(\Beta\) (combined with the order \(\Le{k}\))
%%correspond to \emph{signatures}~\cite{Street-Emerson:1989,Walukiewicz:2001},
%%which can be used to witness a winning strategy of a parity game through a proper assignment of them
%%(a \emph{parity progress measure}~\cite{Jurdzinski:2000}) to positions of the game.
Note that \ordinals{} \(\Beta\) combined with the order \(\Le{k}\)
can be used to witness a winning strategy of a parity game through a proper assignment of them
(a \emph{parity progress measure}~\cite{Jurdzinski:2000}) to positions of the game.
The function \(\Regress\) is defined as follows.
\begin{definition}[Function $\Regress$]
    First, we define $\ITE_{\mu}^{\Beta}$ and $\ITE_{\nu}^{\Beta}$
    for a type environment $\ITE$ for $\HES^{(m)}$ and an \ordinal{} $\Beta$ of length $j$ by:
    \[
        \begin{array}{l}
            \ITE_{\mu}^{\Beta} = \{ F_{j'} \COL \ity \mid \extprd{F_{j'}^{\Beta'} \COL \ity \in \ITE} \Beta' \prec_{j} \Beta \} \\
            \ITE_{\nu}^{\Beta} = \{ F_{j'} \COL \ity \mid \extprd{F_{j'}^{\Beta'} \COL \ity \in \ITE} \Beta' \preceq_{j - 1} \Beta \},
        \end{array}
    \]
    that is, $\ITE_{\fpo}^{\Beta}$ is a type environment for the original HES $\HES$
    consisting of all type bindings in $\ITE$ with \ordinals{} ``smaller'' than $\Beta$
    (the meaning of ``smaller'' depends on the fixpoint operator $\fpo$).
    Using this $\ITE_{\fpo}^{\Beta}$,
    we define $\Regress$ as a monotonic function on type environments for $\HES^{(m)}$ by:
    \[
        \Regress(\ITE) = \{ F_{j}^{\Beta} \COL \ity \in \ITE \mid \ITE_{\fpo_{j}}^{\Beta} \vdash_{\LTS} \fml_{j} \COL \ity \},
    \]
    that is, $\Regress(\ITE)$ consists of all $F_{j}^{\Beta} \COL \ity \in \ITE$
    such that the right-hand side formula $\fml_{j}$ of $F_{j}$ in the original HES $\HES$
    has type $\ity$ under the type environment $\ITE_{\fpo_{j}}^{\Beta}$.
\end{definition}

Note that \(\Regress\) is a monotonic function on a finite domain.
We write \(\Regress^\omega(\ITE)\) for \(\bigcap_{i\in\omega}\Regress^i(\ITE)\),
which is the greatest \(\ITE'\) such that \(\ITE'\subseteq \ITE\) and \(\Regress(\ITE')=\ITE'\).
Lemma~\ref{lem:approx-is-complete} follows immediately from the following two lemmas
(Lemmas~\ref{lem:regress} and~\ref{lem:regress2}).

\begin{lemma}
\label{lem:regress}
    If \(\ITE\subseteq (\mathcal{F}_{\LTS, \HES^{(m)}})^{\omega}(\emptyset)\) is a fixpoint of \(\Regress\),
    then \(\Forget(\ITE)\) is a subset of the winning region of \verifier{} for
    $\SG(\LTS, \HES, \Forget((\mathcal{F}_{\LTS, \HES^{(m)}})^{\omega}(\emptyset)))$.
\end{lemma}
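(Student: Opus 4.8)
The plan is to show that every position $F_j \COL \ity \in \Forget(\ITE)$ lies in \verifier{}'s winning region by exhibiting a winning strategy for her in the subgame; since winning regions do not depend on whether strategies use memory, I would build a finite-memory strategy $\strategy$ whose memory is the \emph{current \ordinal{}} $\Beta$, maintained so that the invariant $F_j^\Beta \COL \ity \in \ITE$ holds at each $V_0$-position $F_j \COL \ity$ visited. At such a position \verifier{} moves to the type environment $\ITE_{\fpo_j}^\Beta$. This move is legal: because $\ITE$ is a fixpoint of $\Regress$, we have $\ITE_{\fpo_j}^\Beta \vdash_{\LTS} \fml_j \COL \ity$, i.e.\ $(F_j \COL \ity,\, \ITE_{\fpo_j}^\Beta) \in E_0$. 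It also stays inside the subgame: by the definitions of $\ITE_\mu^\Beta$ and $\ITE_\nu^\Beta$, every binding of $\ITE_{\fpo_j}^\Beta$ arises from a binding of $\ITE$ with its \ordinal{} deleted, so $\ITE_{\fpo_j}^\Beta \subseteq \Forget(\ITE) \subseteq \Forget((\mathcal{F}_{\LTS, \HES^{(m)}})^\omega(\emptyset))$, which is a legal $V_1$-position. When \falsifier{} replies with some $F_{j'} \COL \ity' \in \ITE_{\fpo_j}^\Beta$, the defining condition of $\ITE_{\fpo_j}^\Beta$ furnishes a witness $\Beta'$ with $F_{j'}^{\Beta'} \COL \ity' \in \ITE$ and $\Beta' \Lt{j} \Beta$ when $\fpo_j = \mu$, or $\Beta' \Le{j-1} \Beta$ when $\fpo_j = \nu$; \verifier{} stores one such $\Beta'$, preserving the invariant.

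Next I would check that $\strategy$ wins every maximal play. \Verifier{} can always move, so a finite maximal play must end at a $V_1$-position at which \falsifier{} is stuck, and such plays are won by \verifier{} by definition. The real content is the infinite case, where I must show that the largest priority seen infinitely often is even. As $V_1$-positions carry priority $0$ and the $V_0$-positions (listed as $u_0, u_1, \dots$ with stored \ordinals{} $\Beta_0, \Beta_1, \dots$ and variable indices $j_0, j_1, \dots$) carry $\Omega_{j_s}$, it suffices to prove that the largest value $p$ of $\Omega_{j_s}$ occurring infinitely often is even.

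For this I would argue by contradiction, assuming $p$ odd, and write $\{\, j \mid \Omega_j = p \,\} = \{a, \dots, b\}$; by the definition of $\Omega$ this set is a contiguous block of $\mu$-variables, with $\Omega_j > p$ for $j < a$ and $\Omega_j < p$ for $j > b$. Fix a stage $T$ past which $\Omega_{j_s} \le p$, equivalently $j_s \ge a$, holds at every $V_0$-position. I then track the lexicographic value of the \emph{first $b$ components} of the stored \ordinals{}, comparing successive ones under $\Le{b}$. A case analysis on the step $u_s \to u_{s+1}$ shows $\Beta_{s+1} \Le{b} \Beta_s$ in all cases: if $\fpo_{j_s} = \mu$ then $\Beta_{s+1} \Lt{j_s} \Beta_s$, whose first point of difference lies within the first $j_s$ positions, hence yields $\Le{b}$ in general and the strict $\Lt{b}$ exactly when $j_s \le b$; if $\fpo_{j_s} = \nu$ then its priority is even, so $j_s \notin \{a,\dots,b\}$, and with $j_s \ge a$ this forces $j_s > b$, whence $\Beta_{s+1} \Le{j_s - 1} \Beta_s$ with $j_s - 1 \ge b$ again gives $\Beta_{s+1} \Le{b} \Beta_s$. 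Moreover, each visit to priority $p$ is precisely a step with $\fpo_{j_s} = \mu$ and $a \le j_s \le b$, which by the above forces the strict drop $\Beta_{s+1} \Lt{b} \Beta_s$. Thus after $T$ the first-$b$ profiles form a sequence that is non-increasing in the finite, hence well-founded, order induced by $\Le{b}$, yet decreases strictly infinitely often — a contradiction. Therefore $p$ is even and the infinite play is won by \verifier{}.

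I expect the main obstacle to be this parity step: choosing the correct truncation length $b$ (the deepest level of the top odd block) and verifying, using only the ordering constraints $\Lt{j}$ and $\Le{j-1}$ delivered by $\ITE_\mu^\Beta$ and $\ITE_\nu^\Beta$ — rather than the concrete \ordinal{}-rewriting $\Beta(k)$ of $\HES^{(m)}$ — that no transition ever raises the first $b$ components while every priority-$p$ visit lowers them strictly. The delicate points are the convention by which shorter \ordinals{} compare under $\Le{b}$ and the implication $\fpo_{j_s} = \nu \wedge j_s \ge a \Rightarrow j_s > b$; once these are settled, the infinite-descent contradiction, and with it the lemma, follow routinely.
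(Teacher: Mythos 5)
Your proof is correct, but it packages the argument differently from the paper's. Both proofs exploit exactly the same fact — the fixpoint property of $\Regress$ yields, at each position $F_j \COL \ity$, a legal move $\ITE_{\fpo_j}^{\Beta}$ whose members carry witness indices satisfying $\Lt{j}$ (for $\mu$) or $\Le{j-1}$ (for $\nu$) — so in both cases the \ordinals{} serve as a parity progress measure. The paper, however, builds a \emph{positional} strategy: it assigns to each $F_j \COL \ity \in \Forget(\ITE)$ the $\Le{n}$-least index $\xi(F_j \COL \ity)$ among $\{\Beta \mid F_j^{\Beta} \COL \ity \in \ITE\}$ (well defined by totality, Lemma~\ref{lem:ord}(iv)), always plays a subset of $\ITE_{\fpo_j}^{\xi(F_j \COL \ity)}$, and delegates the winning argument to a separate reusable lemma (Lemma~\ref{lem:progress}), whose proof shows that every cycle of the finite strategy graph is even by truncating at the minimal variable index occurring in the cycle. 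You instead carry the witness index as \emph{memory} along the play and verify the parity condition of each infinite play directly, by infinite descent in the finite total preorder $\Le{b}$, truncating at the deepest variable $b$ of the block carrying the top priority $p$ seen infinitely often; your supporting observations (contiguity of that block, the parity of $\Omega_j$ matching $\fpo_j$, and the implication $\fpo_{j_s} = \nu \wedge j_s \ge a \Rightarrow j_s > b$) are all sound. The trade-off: your route dispenses with the minimal-index assignment, the totality of $\Le{n}$, and the auxiliary cycle lemma, but since the paper's strategies are positional (partial functions from $V_0$ to $V_1$), passing from ``\verifier{} has a finite-memory winning strategy'' to ``the position lies in her winning region'' implicitly appeals to memoryless determinacy of parity games (or to a definition of winning region admitting arbitrary strategies) — a standard fact, but one the paper's construction never needs.
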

This is intuitively because,
for each \(F_{j} \COL \ity \in \Forget(\ITE)\),
we can find \(F_{j}^{\Beta} \COL \ity \in \ITE\)
such that choosing \(\Gamma_{\fpo_{j}}^{\Beta}\) at \(F_{j} \COL \ity\) gives a winning strategy
\ifaplas
    for \verifier{}.
\else
    for \verifier{}; see Appendix~\ref{app:proof:prfs} for details.
\fi
Now it remains to show:
\begin{lemma}
\label{lem:regress2}
    If the typability game \(\TG(\LTS,\HES)\) is winning,
    then for sufficiently large \(m\),
    \(F_1 \COL q_0 \in \Forget(\Regress^\omega((\mathcal{F}_{\LTS, \HES^{(m)}})^{\omega}(\emptyset)))\).
\end{lemma}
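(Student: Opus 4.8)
The plan is to reduce the statement to the construction of a single witnessing type environment. Observe first that \(\Regress\) is deflationary (\(\Regress(\ITE)\subseteq\ITE\) by definition), so \(\Regress^{\omega}(S)=\bigcap_{i}\Regress^{i}(S)\) is the greatest \(\Regress\)-fixpoint contained in \(S\); hence, to prove the lemma it suffices to exhibit, for sufficiently large \(m\), a type environment \(\ITE\) for \(\HES^{(m)}\) such that (i) \(\Regress(\ITE)=\ITE\), (ii) \(\ITE\subseteq(\mathcal{F}_{\LTS,\HES^{(m)}})^{\omega}(\emptyset)\), and (iii) \(F_{1}^{\Beta_{0}}\COL q_{0}\in\ITE\) for some \ordinal{} \(\Beta_{0}\). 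Any such \(\ITE\) is then contained in \(\Regress^{\omega}((\mathcal{F}_{\LTS,\HES^{(m)}})^{\omega}(\emptyset))\), and forgetting \ordinals{} yields \(F_{1}\COL q_{0}\in\Forget(\Regress^{\omega}((\mathcal{F}_{\LTS,\HES^{(m)}})^{\omega}(\emptyset)))\) as required. To fix \(m\), I would first use Theorem~\ref{thm:KLB} to turn the winning hypothesis into \(\LTS\models\HES\), and then invoke the Kleene fixpoint theorem to choose \(m\) large enough that \(\LTS\models\HES^{(m)}\); the same \(m\) will additionally have to dominate the range of the progress measure built below, which is harmless because \(\TG(\LTS,\HES)\) is a fixed finite parity game.

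The environment \(\ITE\) is where the \ordinals{} enter. Since \(\TG(\LTS,\HES)\) is won by \verifier{}, I would fix a memoryless winning strategy \(\strategy\) together with a parity progress measure \(\rho\)~\cite{Jurdzinski:2000} defined on her winning region \(W\): along every \(\strategy\)-consistent edge, \(\rho\) decreases strictly when leaving a position of odd priority and weakly when leaving a position of even priority, comparisons being taken lexicographically up to the relevant priority. Recalling that \(\Omega_{j}\) is odd exactly when \(\fpo_{j}=\mu\) and even exactly when \(\fpo_{j}=\nu\), I would translate \(\rho\) into \(\HES^{(m)}\)-indices: to each type binding \(F_{j}\COL\ity\in W\) assign a length-\(j\) \ordinal{} \(\Beta\) recording \(\rho(F_{j}\COL\ity)\), and set \(\ITE=\{\,F_{j}^{\Beta}\COL\ity\mid F_{j}\COL\ity\in W\,\}\), with \(\Beta_{0}\) the \ordinal{} assigned to the initial position \(F_{1}\COL q_{0}\). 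The whole point of using length-\(j\) tuples compared by the family \(\Le{k}\) is that the strict/weak dichotomy of the progress condition matches the \(\Lt{j}\)/\(\Le{j-1}\) dichotomy built into \(\ITE_{\mu}^{\Beta}\) and \(\ITE_{\nu}^{\Beta}\): a \(\mu\)-binding (odd priority) forces a strict drop of the \(j\)-th coordinate, whereas a \(\nu\)-binding (even priority) constrains only the first \(j-1\) coordinates and so may persist.

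It then remains to verify (i)--(iii). Property (iii) is immediate, since \(F_{1}\COL q_{0}\in W\) by hypothesis. For (i), I would take any \(F_{j}^{\Beta}\COL\ity\in\ITE\); the move \(\strategy(F_{j}\COL\ity)=\ITE'\) satisfies \(\ITE'\vdash_{\LTS}\fml_{j}\COL\ity\) and consists only of bindings \(F_{j'}\COL\ity'\in W\), and the progress condition guarantees that the \ordinals{} these bindings receive are smaller than \(\Beta\) in exactly the sense demanded by \(\ITE_{\fpo_{j}}^{\Beta}\); hence \(\ITE_{\fpo_{j}}^{\Beta}\vdash_{\LTS}\fml_{j}\COL\ity\), so \(F_{j}^{\Beta}\COL\ity\in\Regress(\ITE)\), while the converse inclusion holds because \(\Regress\) only deletes bindings. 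For (ii), I would appeal to Lemma~\ref{lem:inverse} instantiated at \(\HES^{(m)}\): since \(\LTS\models\HES^{(m)}\), the initial variable \(F_{1}^{(m)}\) rewrites by \(\rewritesto_{\HES^{(m)}}\) to a fixpoint-free normal form typable at \(q_{0}\) under \(\emptyset\), and repeated backward expansion captures every binding needed to type the intermediate formulas; using that typability in the finite approximation is monotone in the \ordinal{} and that \(m\) dominates every coordinate, each \(F_{j}^{\Beta}\COL\ity\in\ITE\) lands in \((\mathcal{F}_{\LTS,\HES^{(m)}})^{\omega}(\emptyset)\).

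The hard part will be the translation of the second paragraph together with property (ii): I must recast the priority-indexed progress measure of \(\TG(\LTS,\HES)\) as the level-indexed rigid \ordinals{} that \(\HES^{(m)}\) actually uses, get the correspondence between priorities and equation levels right across the blocks of consecutive equations that share a fixpoint operator (over which \(\Omega_{j}\) is constant), and show that the resulting bindings are genuinely realised by the rigid backward expansion -- in particular reconciling the fully-unfolded \(\nu\)-leaves, whose bodies have collapsed to \(\true\), with the recursive bodies \(\fml_{j}\) that \(\Regress\) inspects. This \ordinal{} bookkeeping is the crux on which both the fixpoint property (i) and the containment (ii) rest.
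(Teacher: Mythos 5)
Your high-level reduction is sound: since \(\Regress\) is monotonic and deflationary, \(\Regress^{\omega}(S)\) is the greatest \(\Regress\)-fixpoint contained in \(S\), so it indeed suffices to exhibit one \(\Regress\)-fixpoint \(\ITE\subseteq(\mathcal{F}_{\LTS,\HES^{(m)}})^{\omega}(\emptyset)\) containing some \(F_{1}^{\Beta_{0}}\COL q_{0}\); and your properties (i) and (iii) do hold for a progress-measure-indexed environment (this is just the converse direction of Lemma~\ref{lem:progress}, i.e., Jurdzinski's existence theorem, plus the priority-to-level translation). The genuine gap is property (ii). The indices produced by a progress measure on \(\TG(\LTS,\HES)\) are constrained only by inequalities along strategy edges, whereas membership in \((\mathcal{F}_{\LTS,\HES^{(m)}})^{\omega}(\emptyset)\) is governed by the rigid unfolding structure of \(\HES^{(m)}\): the body of \(F_{j}^{\Beta}\) refers to the variables \(F_{k}^{\Beta(k)}\) whose indices are functions of \(\Beta\), and Lemma~\ref{lem:inverse} puts into the saturation only bindings arising along actual reduction sequences from \(F_{1}^{(m)}\), not bindings with arbitrary small indices. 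Your appeal to ``typability is monotone in the \ordinal{}'' is also false for \(\nu\)-variables (\(F_{j}^{\Beta}\) with \(\beta_{j}=0\) has body \(\lmdfml{\widetilde{X}}{}\true\) and types everything, while larger indices type less). Concretely: take \(F_{1}=_{\nu}\diafml{a}F_{1}\land\boxfml{b}F_{2};\ F_{2}=_{\mu}\diafml{c}F_{2}\lor\diafml{d}\true\) over the LTS with \(q_{0}\transto{a}q_{0}\), \(q_{0}\transto{b}q_{1}\), \(q_{1}\transto{c}q_{2}\), \(q_{2}\transto{d}q_{2}\). The least progress measure assigns \(F_{2}\COL q_{2}\mapsto(\beta_{1},0)\) and \(F_{2}\COL q_{1}\mapsto(\beta_{1},1)\); the induced environment is a \(\Regress\)-fixpoint satisfying your (i) and (iii), yet neither binding lies in \((\mathcal{F}_{\LTS,\HES^{(m)}})^{\omega}(\emptyset)\): the body of \(F_{2}^{(\beta_{1},0)}\) is \(\false\), hence never derivable, and \(F_{2}^{(\beta_{1},1)}\COL q_{1}\) needs \(F_{2}^{(\beta_{1},0)}\COL q_{2}\) in the environment, so it is never derivable either (the saturation contains \(F_{2}^{(\beta_{1},\beta_{2})}\COL q_{1}\) only for \(\beta_{2}\geq 2\)). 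So the ``\ordinal{} bookkeeping'' you defer to the last paragraph is not bookkeeping at all: your construction, as stated, produces a witness violating (ii).

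The paper avoids this problem by arguing in the opposite direction. It never pushes a strategy-derived environment into the saturation; instead it starts \emph{inside} the saturation \((\mathcal{F}_{\LTS,\HES^{(m)}})^{\omega}(\emptyset)\) (so your (ii) holds by construction, and only the single binding \(F_{1}^{(m)}\COL q_{0}\) needs Lemma~\ref{lem:inverse3}) and shows by contradiction that \(\Regress\) never removes \(F_{1}^{(m)}\COL q_{0}\): a removal would induce, via Lemmas~\ref{lem:seqstep} and~\ref{lem:seqbase}, a chain of removed bindings descending to a \(\nu\)-variable whose last index coordinate is \(0\); by Lemma~\ref{lem:beta} this chain passes through all indices \((\beta_{1},\ldots,\beta_{j_{\ell}-1},i)\) for \(i=m,\ldots,1\), so for sufficiently large \(m\) two of these bindings share the same type by pigeonhole; but \(\nu\)-bindings whose indices agree on the first \(j_{\ell}-1\) coordinates have identical environments \(\ITE_{\nu}^{\Beta}\) and hence would be removed by \(\Regress\) simultaneously --- a contradiction. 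To salvage your route you would have to choose the index assignment not from an arbitrary progress measure but from data that already certifies saturation membership (e.g., indices recording Kleene approximant depths, together with a completeness argument for the intersection type system with respect to those approximants); that is a substantial argument which your sketch does not contain, and it is precisely the content that the paper's contradiction argument supplies.
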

We prepare a few further definitions and
\ifaplas
    lemmas.
\else
    lemmas (see Appendix~\ref{app:proof:prfs} for proofs).
\fi
Let $\ITEomega{m}$ be \((\mathcal{F}_{\LTS, \HES^{(m)}})^{\omega}(\emptyset)\)
and $\ITEomegaP{m}$ be \(\Regress^{\omega}(\ITEomega{m})\).
For each $k = 1, 2, \ldots$,
we define $D_{k}$ as the set of type bindings removed by the $k$-th application of $\Regress$ to $\ITEomega{m}$,
that is, $D_{k} = \Regress^{k - 1}(\ITEomega{m}) \backslash \Regress^{k}(\ITEomega{m})$.

\begin{replemma}{lem:seqbase}
    If $F_{j}^{\Beta} \COL \ity \in D_{k}$ and $\beta_{j} = 0$,
    then $\fpo_{j} = \nu$.
\end{replemma}

\begin{replemma}{lem:seqstep}
    If $F_{j}^{\Beta} \COL \ity \in D_{k}$ and $\beta_{j} \neq 0$,
    then there exists $k'$ satisfying $1 \leq k' < k$
    such that $F_{j'}^{\Beta(j')} \COL \ity' \in D_{k'}$ holds for some $j'$ and $\ity'$.
\end{replemma}

We are now ready to prove Lemma~\ref{lem:regress2}.
\begin{proof}[Proof of Lemma~\ref{lem:regress2}]
    We show the lemma by contradiction.
    Since the typability game \(\TG(\LTS,\HES)\) is winning,
    we have $F_{1}^{(m)} \COL q_{0} \in \ITEomega{m}$ for sufficiently large \(m\)
    (this is intuitively because \(\TG(\LTS,\HES^{(m)})\) is also winning;
    \ifaplas
        see \cite{Hosoi-Kobayashi-Tsukada:2019:full}
    \else
        see Appendix~\ref{app:proof:prfs}, Lemma~\ref{lem:inverse3}
    \fi
    for a formal proof).
    Suppose it were the case that $F_{1}^{(m)} \COL q_{0} \notin \ITEomegaP{m}$.
    Then there must be a positive integer $k$ such that $F_{1}^{(m)} \COL q_{0} \in D_{k}$.
    Therefore, by Lemma~\ref{lem:seqstep},
    there exists a sequence of type bindings
    $F_{1}^{(m)} \COL q_{0} = F_{j_{0}}^{\Beta_{0}} \COL \ity_{0}, \, F_{j_{1}}^{\Beta_{1}} \COL \ity_{1}, \, \ldots, \, F_{j_{\ell}}^{\Beta_{\ell}} \COL \ity_{\ell}$
    such that
    (i) $\Beta_{\ell}$ ends with 0, and
    (ii) $\Beta_{i} = \Beta_{i - 1}(j_{i})$ and $F_{j_{i}}^{\Beta_{i}} \COL \ity_{i} \in D_{k_{i}}$ hold for each $i \in \{1, \ldots, \ell\}$,
    where $k = k_{0} > k_{1} > \cdots > k_{\ell}$.
    Moreover, we have $\fpo_{j_{\ell}} = \nu$ by Lemma~\ref{lem:seqbase}.
    Let $\Beta_{\ell} = (\beta_{1}, \ldots, \beta_{j_{\ell} - 1}, 0)$.
    Then, $(\beta_{1}, \ldots, \beta_{j_{\ell} - 1}, m)$,
    $(\beta_{1}, \ldots, \beta_{j_{\ell} - 1}, m - 1)$, $\ldots$,
    and $(\beta_{1}, \ldots, \beta_{j_{\ell} - 1}, 1)$
    must exist in the sequence $\Beta_{0}, \, \Beta_{1}, \, \ldots, \, \Beta_{\ell - 1}$
    \ifaplas
        in this order (see \cite{Hosoi-Kobayashi-Tsukada:2019:full} for a proof).
    \else
        in this order (see Appendix~\ref{app:proof:prfs}, Lemma~\ref{lem:beta}).
    \fi

    For each $i \in \{0, \ldots, m\}$,
    let $\ell_{i}$ be the integer with $\Beta_{\ell_{i}} = (\beta_{1}, \ldots, \beta_{j_{\ell} - 1}, i)$.
    Since the number of intersection types $\ity'$ satisfying $\ity' :: \hty_{j_{\ell}}$ is finite,
    there must exist duplicate types in the sequence
    $\ity_{\ell_{0}}, \ity_{\ell_{1}}, \ldots, \ity_{\ell_{m}}$ for sufficiently large $m$.
    Let $\ity_{\ell_{a}}$ and $\ity_{\ell_{b}}$ be such a pair with $\ell_{a} < \ell_{b}$.
    Then, we have $F_{j_{\ell}}^{\Beta_{\ell_{a}}} \COL \ity_{\ell_{a}} \in D_{k_{\ell_{a}}}$
    and $F_{j_{\ell}}^{\Beta_{\ell_{b}}} \COL \ity_{\ell_{b}} \in D_{k_{\ell_{b}}}$.
    However, since $\fpo_{j_{\ell}} = \nu$ and $\Beta_{\ell_{a}} =_{j_{\ell} - 1} \Beta_{\ell_{b}}$,
    we have $\ITE_{\fpo_{j_{\ell}}}^{\Beta_{\ell_{a}}} = \ITE_{\fpo_{j_{\ell}}}^{\Beta_{\ell_{b}}}$ for any $\ITE$.
    Therefore, by the definition of the function $\Regress$ and the assumption $\ity_{\ell_{a}} = \ity_{\ell_{b}}$,
    the type bindings $F_{j_{\ell}}^{\Beta_{\ell_{a}}} \COL \ity_{\ell_{a}}$
    and $F_{j_{\ell}}^{\Beta_{\ell_{b}}} \COL \ity_{\ell_{b}}$
    must be removed by $\Regress$ at the same time.
    This contradicts the assumption $\ell_{a} < \ell_{b}$.
    Therefore, $F_{1}^{(m)} \COL q_{0} \in \ITEomegaP{m}$ holds for sufficiently large $m$.
    \qed
\end{proof}

\section{Implementation and Experiments}
\label{sec:expr}

We have implemented a prototype HFL model checker \HFLMC{}\footnote{
    The source code and the benchmark problems used in the experiments are available at
    \url{https://github.com/hopv/homusat}.
} based on the algorithm discussed in Section~\ref{sec:algo}.
As mentioned in footnotes in Section~\ref{sec:algo},
some optimization techniques are used to improve the performance of \HFLMC{}.
\ifaplas
    See \cite{Hosoi-Kobayashi-Tsukada:2019:full} for an explanation on these optimizations.
\else
    See Appendix~\ref{app:opt} for a brief explanation on several of these optimizations.
\fi

We have carried out experiments to evaluate the efficiency of \HFLMC{}.
As benchmark problems,
we used HORS model checking problems used as benchmarks for HORS model checkers
\textsc{TravMC2}~\cite{Neatherway-Ong:2014},
\textsc{HorSatP}~\cite{Suzuki-Fujima-Kobayashi-Tsukada:2017},
and \textsc{HorSat2}~\cite{Kobayashi:2016}.
These benchmarks include many typical instances of higher-order model checking,
such as the ones obtained from program verification problems.
They were converted to HFL model checking problems via the translation by Kobayashi et al.~\cite{Kobayashi-Lozes-Bruse:2017:POPL}.
The resulting set of benchmarks consists of 136 problems of orders up to 8.
Whereas the LTS size is moderate (around 10) for most of the instances,
there are several instances with large state sets (including those with $|Q| > 100$).
HES sizes vary from less than 100 to around 10,000;
note that in applications to higher-order program verification~\cite{Kobayashi-Tsukada-Watanabe:2018,Watanabe-Tsukada-Oshikawa-Kobayashi:2019},
the size of an HES corresponds to the size of a program to be verified.
As to the number of alternations between \(\mu\) and \(\nu\) within the HES,
over half of the instances (83 out of 136) have no alternation
(that is, they are \(\mu\)-only or \(\nu\)-only),
but there are a certain number of instances that have one or more alternations,
up to a maximum of 4.
\ifaplas\else
    See Appendix~\ref{sec:bench} for the distributions of the orders,
    the LTS sizes,
    and the numbers of alternations between \(\mu\) and \(\nu\) in the benchmark set.
\fi
The experiments were conducted on a machine with 2.3 GHz Intel Core i5 processor and 8 GB memory.
As a reference, we have compared the result with \textsc{HorSatP},
one of the state-of-the-art HORS model checkers,\footnote{
    For the restricted class of properties expressed by trivial automata,
    \textsc{HorSat2} is the state-of-the-art.
} run for the original problems.

The results are shown in Figures~\ref{fig:results} and~\ref{fig:ht}.
Figure~\ref{fig:results} compares the running times of \HFLMC{} with those of \horsatp{}.
As the figure shows, \HFLMC{} often outperforms \textsc{HorSatP}.
Although it is not that this result indicates the proposed algorithm is superior as a higher-order model checking algorithm to \textsc{HorSatP}
(the two model checkers differ in the degree of optimization),\footnote{
    Actually,
    as the two algorithms are both based on type-based saturation algorithm~\cite{Broadbent-Kobayashi:2013},
    various type-oriented optimization techniques used in \HFLMC{} can also be adapted to \textsc{HorSatP} and are expected to improve its performance.
} the fact that \HFLMC{} works fast for various problems obtained via the mechanical conversion from HORS to HFL,
which increases the size of inputs and thus makes them harder to solve,
is promising.
Evaluation of the efficiency of the proposed algorithm against a set of problems obtained directly as HFL model checking problems is left for future work.

Figure~\ref{fig:ht} shows the distribution of the running times of \HFLMC{} with respect to the input HES size.
As the figure shows, despite the \(k\)-EXPTIME worst-case complexity,
the actual running times do not grow so rapidly.
This is partially explained by the fact that the time complexity of HFL model checking is
fixed-parameter polynomial in the size of HES~\cite{Kobayashi-Lozes-Bruse:2017:POPL}.

\begin{figure}[t]
    \input{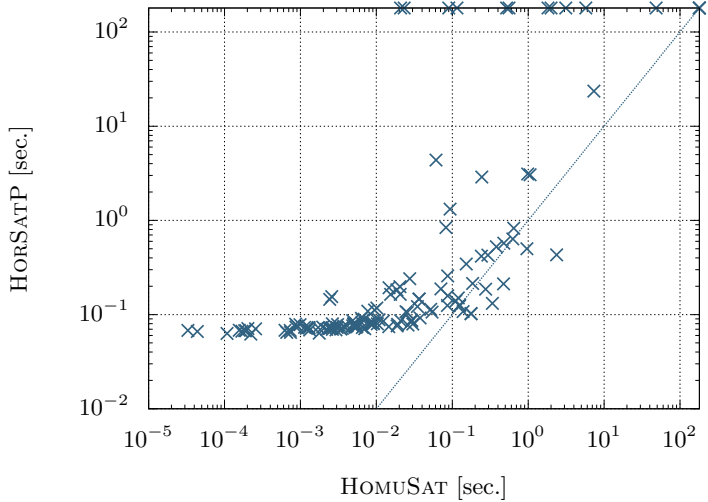}
    \caption{The experimental results: comparison with \horsatp{} ($\mbox{timeout} = \mbox{180 sec.}$)}
    \label{fig:results}
\end{figure}

\begin{figure}
    \input{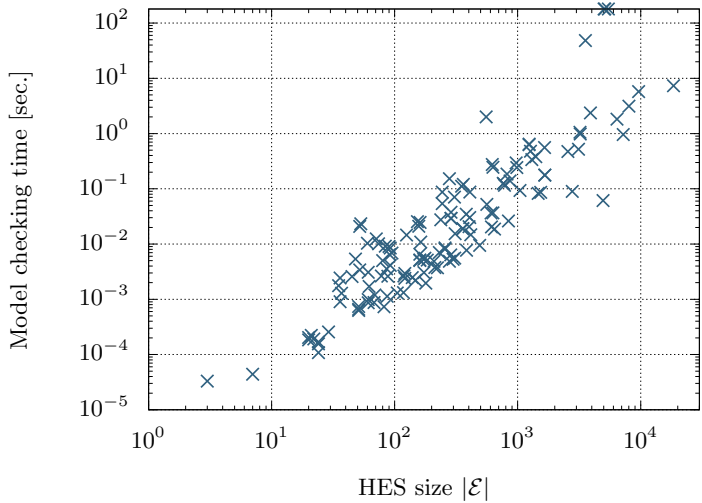}
    \caption{HES size $|\HES|$ versus time required for model checking ($\mbox{timeout} = \mbox{180 sec.}$)}
    \label{fig:ht}
\end{figure}

\section{Related Work}
\label{sec:rlwk}

The logic HFL has been introduced by Viswanathan and Viswanathan~\cite{Viswanathan-Viswanathan:2004}.
Later, Lange and his colleagues studied its various theoretical properties~\cite{Axelsson-Lange:2007,Axelson-Lange-Somla:2007,Lange-Lozes-Guzman:2014}.
In particular, they have shown that HFL model checking is \(k\)-EXPTIME complete for order \(k\) HFL formulas.
There has been, however, no practical HFL model checker.
Lozes has implemented a prototype HFL model checker,
but it is restricted to order-1 HFL,
and scales only for LTS of size up to 10 or so~\cite{LozesPC}.

Our algorithm is based on the type-based characterization of HFL model checking~\cite{Kobayashi-Lozes-Bruse:2017:POPL},
and type-based saturation algorithms for HORS model checking~\cite{Broadbent-Kobayashi:2013,Suzuki-Fujima-Kobayashi-Tsukada:2017}.
In particular,
the idea of restricting the arena of the typability game follows that of Suzuki et al.~\cite{Suzuki-Fujima-Kobayashi-Tsukada:2017}.
The detail of the algorithms are however different;
in particular,
the initial type environment in Suzuki et al.~\cite{Suzuki-Fujima-Kobayashi-Tsukada:2017}
contains \(F\COL\top\to\cdots \to \top \to q\) for any recursive function \(F\),
whereas in our algorithm,
\(\ITE_0\) contains \(F\COL\top\to\cdots \to \top \to q\) only for fixpoint variables bound by \(\nu\).
The use of a smaller initial type environment may be one of the reasons why our model checker tends to outperform theirs even for HORS model checking problems.
Another difference is in the correctness proofs.
In our opinion,
our proof is significantly simpler and streamlined than theirs.
Their proof manipulates infinite derivation trees.
In contrast,
our proof is a natural generalization of the correctness proof for the restricted fragment of HORS model checking
(which corresponds to the \(\mu\)-only or \(\nu\)-only fragment of HFL model checking)~\cite{Broadbent-Kobayashi:2013},
using the standard concept of parity progress measures.

\section{Conclusion}
\label{sec:cncl}

We have proposed the first practical algorithm for HFL model checking,
and proved its correctness.
We have confirmed through experiments that,
despite the huge worst-case complexity,
our prototype HFL model checker runs fast for typical instances of higher-order model checking.

\subsubsection*{Acknowledgments.}
We would like to thank anonymous referees for useful comments.
This work was supported by JSPS KAKENHI Grant Number JP15H05706.

\bibliographystyle{splncs04}
\bibliography{main}

\ifaplas\else

    \section*{Appendix}

    \appendix

    \section{Proofs for Section~\ref{sec:algo}}\label{app:proof:algo}

We only prove Lemma~\ref{lem:inverse},
since the other lemma (Lemma~\ref{lem:inverse2})
is not used in the proofs in Section~\ref{sec:prfs};
it was introduced just to explain the intuitions behind the algorithm.

First, we prove that typing is closed under the inverse of substitutions.

\begin{lemma} \label{invbase}
    Let $\fml$ and $\chi$ be HFL formulas containing no fixpoint operators or $\lambda$-abstractions.
    If $\ITE \vdash_{\LTS} [\chi / X] \, \fml \COL \ity$,
    then there exists a type environment $\Delta$
    such that $\dom(\Delta) \subseteq \FV(\fml) \cap \{ X \}$,
    $\ITE \cup \Delta \vdash_{\LTS} \fml \COL \ity$,
    and $\allprd{X \COL \ity' \in \Delta} \ITE \vdash_{\LTS} \chi \COL \ity'$.
\end{lemma}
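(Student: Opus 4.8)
The plan is to prove the statement by induction on the structure of $\fml$, using the fact that the substitution $[\chi/X]$ commutes with every formula constructor: since neither $\fml$ nor $\chi$ contains binders (no $\lambda$, no $\mu/\nu$), the substitution is capture-free and, for a compound $\fml$, the formula $[\chi/X]\fml$ has the same outermost constructor as $\fml$. Before the induction I would record two routine facts about the system of Figure~\ref{ityrules}. First, \emph{weakening}: if $\ITE \vdash_{\LTS} \fml \COL \ity$ and $\ITE \subseteq \ITE'$, then $\ITE' \vdash_{\LTS} \fml \COL \ity$, by an easy induction on the derivation (only \textsc{T-Var} inspects the environment). Second, \emph{reflexivity of subtyping}: $\ity \leq_{\LTS} \ity$ and $\aty \leq_{\LTS} \aty$ for all types, by mutual induction on their structure. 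Alongside these I would prove a family of \emph{generation} (inversion) lemmas, one per constructor, whose role is to absorb the subsumption rule \textsc{T-Sub}.

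Each generation lemma is proved by induction on the typing derivation, with \textsc{T-Sub} as the only nontrivial case. For the propositional constructors the derived type is always a state, and since the sole subtyping rule with a state on the left is \textsc{SubT-Base} (giving $\state \leq_{\LTS} \state$), a \textsc{T-Sub} step cannot alter the type: from $\ITE \vdash_{\LTS} \psi_1 \lor \psi_2 \COL \ity$ one extracts that $\ity$ is a state $\state$ and $\ITE \vdash_{\LTS} \psi_i \COL \state$ for some $i$, and similarly for $\land$ (for both $i$), for $\diafml{a}{\psi}$ (some successor $\state'$ with $\state \transto{a} \state'$), and for $\boxfml{a}{\psi}$ (every successor). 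For application, from $\ITE \vdash_{\LTS} \appfml{\psi_1}\psi_2 \COL \ity$ I would extract an intersection $\aty$ with $\ITE \vdash_{\LTS} \psi_1 \COL \aty \to \ity$ and $\ITE \vdash_{\LTS} \psi_2 \COL \ity'$ for every $\ity' \in \aty$; here the \textsc{T-Sub} case uses \textsc{SubT-Fun} together with reflexivity $\aty \leq_{\LTS} \aty$ to push a codomain subtyping $\ity'' \leq_{\LTS} \ity$ up into the arrow type of $\psi_1$.

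With these in hand the main induction is direct. For $\fml = X$ we have $[\chi/X]\fml = \chi$, so I set $\Delta = \{X \COL \ity\}$ and check the three conditions via \textsc{T-Var} and $\FV(X) \cap \{X\} = \{X\}$. For $\fml = Y \neq X$ and $\fml = \true$ I take $\Delta = \emptyset$, and for $\fml = \false$ the hypothesis is vacuous since $\false$ is untypable (there is no \textsc{T-False} rule). For each compound $\fml$ I apply the matching generation lemma to $[\chi/X]\fml$, invoke the induction hypothesis on the immediate subformula(s), take $\Delta$ to be the (finite) union of the resulting environments, use weakening to retype the subformulas under $\ITE \cup \Delta$, and reassemble the judgment with the corresponding rule; the side conditions $\dom(\Delta) \subseteq \FV(\fml) \cap \{X\}$ and $\allprd{X \COL \ity' \in \Delta} \ITE \vdash_{\LTS} \chi \COL \ity'$ are preserved under union because $\FV(\fml_i) \subseteq \FV(\fml)$. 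The main obstacle is precisely the interleaving of \textsc{T-Sub}: naive inversion on the last applied rule fails, so the whole argument rests on the generation lemmas and, more specifically, on the observation that $\leq_{\LTS}$ is the identity on states and strictly structural on arrows, which is exactly what lets a subsumption step be absorbed without changing the conclusion type in the propositional cases and only its codomain in the application case.
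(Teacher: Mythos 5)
Your proof is correct and follows essentially the same route as the paper's: structural induction on $\fml$ with the identical case analysis and the same witnesses $\Delta$ (the singleton $\{X \COL \ity\}$ in the variable case, unions of the environments returned by the induction hypothesis in the compound cases). The only difference is presentational: the paper performs the inversion steps silently, whereas you make explicit the weakening, subtyping-reflexivity, and generation lemmas that justify them --- in particular the observation that \textsc{T-Sub} is harmless because $\leq_{\LTS}$ is the identity on state types.
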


\begin{proof}
    The proof proceeds by induction on the structure of the formula $\fml$.

    \begin{itemize}
        \item If $\fml = \true$/$\false$,
              then $\Delta = \emptyset$ satisfies the condition.

        \item If $\fml = Y$ and $Y \neq X$,
              then $\Delta = \emptyset$ satisfies the condition.

        \item If $\fml = X$,
              then $\Delta = \{ X \COL \ity \}$ satisfies the condition.

        \item If $\fml = \fml_{1} \lor \fml_{2}$,
              then $\ity = \state \in \States$,
              and we have either $\ITE \vdash_{\LTS} [\chi / X] \, \fml_{1} \COL \state$
              or $\ITE \vdash_{\LTS} [\chi / X] \, \fml_{2} \COL \state$.
              Suppose that $\ITE \vdash_{\LTS} [\chi / X] \, \fml_{i} \COL \state$.
              By the induction hypothesis,
              there exists $\Delta_{i}$
              such that $\dom(\Delta_{i}) \subseteq \FV(\fml_{i}) \cap \{ X \}$,
              $\ITE \cup \Delta_{i} \vdash_{\LTS} \fml_{i} \COL \state$,
              and $\allprd{X \COL \ity' \in \Delta_{i}} \ITE \vdash_{\LTS} \chi \COL \ity'$ hold.
              It is not difficult to see that this $\Delta_{i}$ satisfies the condition for $\Delta$.

        \item If $\fml = \fml_{1} \land \fml_{2}$,
              then $\ity = \state \in \States$,
              and we have both $\ITE \vdash_{\LTS} [\chi / X] \, \fml_{1} \COL \state$
              and $\ITE \vdash_{\LTS} [\chi / X] \, \fml_{2} \COL \state$.
              Therefore, by the induction hypothesis,
              there exists $\Delta_{i}$ for each $i = 1, 2$
              such that $\dom(\Delta_{i}) \subseteq \FV(\fml_{i}) \cap \{ X \}$,
              $\ITE \cup \Delta_{i} \vdash_{\LTS} \fml_{i} \COL \state$,
              and $\allprd{X \COL \ity' \in \Delta_{i}} \ITE \vdash_{\LTS} \chi \COL \ity'$ hold.
              Hence, $\Delta = \Delta_{1} \cup \Delta_{2}$ satisfies the condition.

        \item If $\fml = \diafml{a}{\fml'}$,
              then $\ity = \state \in \States$,
              and there exists $\state' \in \States$
              such that $\state \transto{a} \state'$
              and $\ITE \vdash_{\LTS} [\chi / X] \, \fml' \COL \state'$.
              Therefore, by the induction hypothesis,
              there exists $\Delta'$
              such that $\dom(\Delta') \subseteq \FV(\fml') \cap \{ X \}$,
              $\ITE \cup \Delta' \vdash_{\LTS} \fml' \COL \state'$,
              and $\allprd{X \COL \ity' \in \Delta'} \ITE \vdash_{\LTS} \chi \COL \ity'$ hold.
              Hence, $\Delta = \Delta'$ satisfies the condition.

        \item If $\fml = \boxfml{a}{\fml'}$,
              then $\ity = \state \in \States$,
              and for each $\state' \in \States$ with $\state \transto{a} \state'$,
              we have $\ITE \vdash_{\LTS} [\chi / X] \, \fml' \COL \state'$.
              Let $\States' = \{ \state' \in \States \mid \state \transto{a} \state' \} = \bigcup_{i \in I}\{ \state_{i} \}$.
              By the induction hypothesis,
              there is $\Delta_{i}$ for each $i \in I$
              such that $\dom(\Delta_{i}) \subseteq \FV(\fml') \cap \{ X \}$,
              $\ITE \cup \Delta_{i} \vdash_{\LTS} \fml' \COL \state_{i}$,
              and $\allprd{X \COL \ity' \in \Delta_{i}} \ITE \vdash_{\LTS} \chi \COL \ity'$ hold.
              Therefore, $\Delta = \bigcup_{i \in I}\Delta_{i}$ satisfies the condition.

        \item If $\fml = \fml_{1} \, \fml_{2}$,
              then there is an intersection type $\aty = \{ \ity_{1}, \ldots, \ity_{k} \}$
              satisfying $\ITE \vdash_{\LTS} [\chi / X] \, \fml_{1} \COL \aty \to \ity$
              and $\ITE \vdash_{\LTS} [\chi / X] \, \fml_{2} \COL \ity_{i}$ for each $i \in \{ 1, \ldots, k \}$.
              Therefore, by the induction hypothesis,
              there exists $\Delta_{0}$
              such that $\dom(\Delta_{0}) \subseteq \FV(\fml_{1}) \cap \{ X \}$,
              $\ITE \cup \Delta_{0} \vdash_{\LTS} \fml_{1} \COL \aty \to \ity$,
              and $\allprd{X \COL \ity' \in \Delta_{0}} \ITE \vdash_{\LTS} \chi \COL \ity'$.
              Moreover, for each $i \in \{ 1, \ldots, k \}$,
              there exists $\Delta_{i}$
              such that $\dom(\Delta_{i}) \subseteq \FV(\fml_{2}) \cap \{ X \}$,
              $\ITE \cup \Delta_{i} \vdash_{\LTS} \fml_{2} \COL \ity_{i}$,
              and $\allprd{X \COL \ity' \in \Delta_{i}} \ITE \vdash_{\LTS} \chi \COL \ity'$.
              Therefore, $\Delta = \bigcup_{i = 0}^{k}\Delta_{i}$ satisfies the condition.
    \end{itemize}\qed
\end{proof}

Now we are ready to prove Lemma~\ref{lem:inverse}.
\begin{proof}[Proof of Lemma~\ref{lem:inverse}]
    The proof is by induction on the derivation of $\fml \rewritesto_{\HES} \fml'$.
    For the base case,
    suppose that $\fml = F_{j} \, \chi_{1} \, \cdots \, \chi_{\ell}$
    and $\fml' = [\chi_{1} / X_{1}, \ldots, \chi_{\ell} / X_{\ell}] \, \psi_{j}$.
    By repeatedly applying Lemma~\ref{invbase},
    we can construct a type environment $\Delta$
    such that $\dom(\Delta) \subseteq \FV(\psi_{j}) \cap \{ X_{1}, \ldots, X_{\ell} \}$,
    $\ITE \cup \Delta \vdash_{\LTS} \psi_{j} \COL \state$,
    and $\allprd{X_{k} \COL \ity \in \Delta} \ITE \vdash_{\LTS} \chi_{k} \COL \ity$.
    Let $\aty_{k} = \{ \, \ity \mid X_{k} \COL \ity \in \Delta \, \}$.
    Since $\chi_{k} \in \Flow_{\HES}(X_{k})$ holds for every $k \in \{ 1, \ldots, \ell \}$,
    we have $F_{j} \COL \aty_{1} \to \cdots \to \aty_{\ell} \to \state \in \mathcal{F}_{\LTS, \HES}(\ITE)$
    by the definition of the function $\mathcal{F}_{\LTS, \HES}$.
    Therefore, by repeatedly applying the typing rule $(\textsc{T-App})$,
    we have $\mathcal{F}_{\LTS, \HES}(\ITE) \vdash_{\LTS} \fml \COL \state$.
    The induction steps are trivial.
    \qed
\end{proof}

    \section{Proofs for Section~\ref{sec:prfs}}\label{app:proof:prfs}

First, we see some properties of the truncated lexicographic order $\preceq_{k}$ on \ordinals{}.

\begin{lemma} \label{lem:ord}
    \[
        \begin{array}{rl}
            \mbox{\rm{(i).}} & \mbox{If $i \leq j$, then $\Beta \prec_{i} \Beta'$ implies that $\Beta \prec_{j} \Beta'$.} \\
            \mbox{\rm{(ii).}} & \mbox{If $i \leq j$, then $\Beta \preceq_{j} \Beta'$ implies that $\Beta \preceq_{i} \Beta'$.} \\
            \mbox{\rm{(iii).}} & \mbox{$\Beta(\ell) \prec_{j} \Beta$ and $\Beta(\ell) \preceq_{j - 1} \Beta$ hold for each $\Beta$ of length $j$.} \\
            \mbox{\rm{(iv).}} & \mbox{For a set of $\Beta$'s of length at most $n$, $\preceq_{n}$ gives a total order.}
        \end{array}
    \]
\end{lemma}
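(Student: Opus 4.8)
The plan is to first make the orders fully explicit. For indices $\Beta$ and $\Beta'$, let $\Beta|_{k}$ denote the truncation of $\Beta$ to its first $\min(|\Beta|,k)$ components, and observe that $\Beta \preceq_{k} \Beta'$ (resp.\ $\Beta \prec_{k} \Beta'$) holds exactly when $\Beta|_{k}$ is $\le$ (resp.\ $<$) $\Beta'|_{k}$ in the standard prefix-lexicographic order on finite integer sequences, where a proper prefix counts as strictly smaller (this is what makes $(1,2) \prec_{3} (1,2,3)$ in the examples). Since the prefix-lexicographic order is a total order, $\preceq_{k}$ is a total preorder whose strict part is $\prec_{k}$ and whose induced equivalence is $=_{k}$; in particular $\Beta \preceq_{k} \Beta' \iff \neg(\Beta' \prec_{k} \Beta)$. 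I will use this duality to obtain (ii) from (i) at no extra cost.

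For (i), assuming $i \le j$ and $\Beta \prec_{i} \Beta'$, I would extract a witness for the strict inequality at level $i$: either a least position $p \le i$ at which $\Beta$ and $\Beta'$ first disagree with $\beta_{p} < \beta'_{p}$, or else the situation where $\Beta|_{i}$ is a proper prefix of $\Beta'|_{i}$, which forces $|\Beta| < i$. In the first case the same $p$ (which satisfies $p \le i \le j$) still witnesses $\Beta \prec_{j} \Beta'$; in the second case $\Beta$ remains a proper prefix of $\Beta'|_{j}$, because $|\Beta| < i \le j$ and the length of $\Beta'|_{j}$ is at least that of $\Beta'|_{i}$. Either way $\Beta \prec_{j} \Beta'$. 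Part (ii) then follows immediately as the contrapositive of (i) with $\Beta$ and $\Beta'$ interchanged, via the preorder duality above.

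For (iii) I would unfold the definition of $\Beta(\ell)$ from Definition~\ref{def:approx} and split on whether $\ell < j$ or $\ell \ge j$, where $j = |\Beta|$. When $\ell < j$, $\Beta(\ell) = (\beta_{1},\ldots,\beta_{\ell})$ is a proper prefix of $\Beta$, which gives $\Beta(\ell) \prec_{j} \Beta$ directly and also $\Beta(\ell) \preceq_{j-1} \Beta$ since $\Beta(\ell)$ is a prefix of the first $j-1$ components of $\Beta$. When $\ell \ge j$, $\Beta(\ell)$ agrees with $\Beta$ on the first $j-1$ components and carries $\beta_{j}-1 < \beta_{j}$ in position $j$, yielding $\Beta(\ell) \prec_{j} \Beta$ together with $\Beta(\ell) =_{j-1} \Beta$, hence $\Beta(\ell) \preceq_{j-1} \Beta$. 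Finally, for (iv), reflexivity, transitivity and totality of $\preceq_{n}$ are inherited from the prefix-lexicographic order, so only antisymmetry needs the hypothesis: when every index has length at most $n$ we have $\Beta|_{n} = \Beta$, so $\Beta =_{n} \Beta'$ forces $\Beta = \Beta'$, upgrading the total preorder to a genuine total order. Since each part reduces to routine manipulation of the prefix-lexicographic order, there is no deep obstacle; the one place demanding genuine care is the treatment of indices of different lengths, i.e.\ the proper-prefix case, which is precisely where the truncation bookkeeping in (i) and the length restriction in (iv) do their work.
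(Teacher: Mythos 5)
Your proof is correct. The paper offers no argument for this lemma---its proof is literally ``Trivial.''---so there is no authorial approach to compare against; your writeup supplies exactly the routine prefix-lexicographic bookkeeping the authors left implicit, and it correctly handles the two points that genuinely need care: the convention that a proper prefix counts as strictly smaller (needed to match the example $(1,2) \prec_{3} (1,2,3)$) and the totality/duality $\Beta \preceq_{k} \Beta' \iff \neg(\Beta' \prec_{k} \Beta)$, which you use to derive (ii) from (i) and which also underlies the antisymmetry argument in (iv).
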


\begin{proof}
    Trivial.\qed
\end{proof}

The next lemma implies that,
if we start from the initial \ordinal{} $\Beta_{0} = (m)$
and inductively define $\Beta_{i} = \Beta_{i - 1}(\ell_{i})$
by a sequence of positive integers $\ell_{1}, \, \ell_{2}, \, \ldots$
until it reaches a final \ordinal{} $\Beta_{k} = (\beta_{1}, \ldots, \beta_{j - 1}, 0)$,
then the sequence $\Beta_{0}, \, \ldots, \, \Beta_{k}$
contains all $(\beta_{1}, \ldots, \beta_{j - 1}, i)$'s for $i \in \{0, \ldots, m\}$.

\begin{lemma} \label{lem:beta}
    Let $\Beta_{0}, \, \Beta_{1}, \, \ldots, \, \Beta_{k}$ be a sequence of \ordinals{} such that,
    for each $i \in \{1, \ldots, k\}$, $\Beta_{i} = \Beta_{i - 1}(\ell_{i})$ holds for some $\ell_{i}$.
    If $\Beta_{k} = (\beta_{1}, \ldots, \beta_{j - 1}, \beta_{j})$, $\beta_{j} < m$,
    and $(\beta_{1}, \ldots, \beta_{j - 1}, \beta_{j} + 1) \preceq_{j} \Beta_{0}$,
    then there exists $i \in \{0, \ldots, k - 1\}$ such that
    $\Beta_{i} = (\beta_{1}, \ldots, \beta_{j - 1}, \beta_{j} + 1)$.
\end{lemma}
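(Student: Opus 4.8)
The plan is to prove a mild strengthening of the statement by induction on $k$, because a naive induction on $k$ fails: the predecessor $\Beta_{k-1}$ usually has length strictly larger than $j$, so the hypothesis ``$\Beta_k=(\beta_1,\ldots,\beta_{j-1},\beta_j)$ has length exactly $j$'' is not preserved when passing to the shorter sequence $\Beta_0,\ldots,\Beta_{k-1}$. First I would restate the claim as follows: if the first $j$ components of $\Beta_k$ are $\beta_1,\ldots,\beta_{j-1},\beta_j$ with $\beta_j<m$, and $(\beta_1,\ldots,\beta_{j-1},\beta_j+1)\preceq_j\Beta_0$, then there is $i<k$ with $\Beta_i=(\beta_1,\ldots,\beta_{j-1},\beta_j+1)$ of length exactly $j$. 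The original statement is the special case in which $\Beta_k$ itself has length exactly $j$.

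The base case $k=0$ holds vacuously: since the first $j$ components of $\Beta_0$ are $\beta_1,\ldots,\beta_j$, the assumption $(\beta_1,\ldots,\beta_{j-1},\beta_j+1)\preceq_j\Beta_0$ would force $\beta_j+1\le\beta_j$, which is impossible, so there is nothing to prove.

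For the inductive step I would write $\Beta_k=\Beta_{k-1}(\ell_k)$, let $j'$ denote the length of $\Beta_{k-1}$, and observe $\ell_k\ge j$ since the first $j$ components of $\Beta_k$ are defined. Unfolding the definition of $\Beta_{k-1}(\ell_k)$ from Definition~\ref{def:approx} gives four cases. (i)~If $\ell_k<j'$ (a truncation), the first $j$ components of $\Beta_{k-1}$ coincide with those of $\Beta_k$ and $j'>\ell_k\ge j$, so the induction hypothesis applies to $\Beta_0,\ldots,\Beta_{k-1}$. (ii)~If $\ell_k\ge j'$ and $j'>j$, then position $j$ lies in the copied prefix, so again $\Beta_{k-1}$ shares its first $j$ components with $\Beta_k$ and the induction hypothesis applies. (iii)~If $\ell_k\ge j'$ and $j'<j$, then position $j$ of $\Beta_k$ lies in the ``filled'' region and equals $m$, contradicting $\beta_j<m$; this case cannot occur. (iv)~If $\ell_k\ge j'$ and $j'=j$, then $\Beta_k=(\beta_1,\ldots,\beta_{j-1},\gamma-1,m,\ldots,m)$, where $\gamma$ is the $j$-th component of $\Beta_{k-1}$; matching the first $j$ components with those of $\Beta_k$ forces $\gamma=\beta_j+1$ and $\Beta_{k-1}=(\beta_1,\ldots,\beta_{j-1},\beta_j+1)$ of length exactly $j$, so $i=k-1$ works.

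The main obstacle is the bookkeeping around the variable lengths and the case split induced by the definition of $\Beta(\ell)$. The crucial insight is the need to generalize to ``first $j$ components'' so that the copy and truncation steps in (i) and (ii)---which yield a predecessor of length $>j$---can be fed back into the induction; cases (iii) and (iv) then use the strict inequality $\beta_j<m$ to rule out the fill branch and to pin down the unique productive step $j'=j$ that directly exhibits the target index.
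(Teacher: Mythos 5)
Your proof is correct, and it takes a genuinely different route from the paper's. Both arguments are inductions on $k$, but the paper peels the sequence from the \emph{front}: its base case is $k=1$, and in the step case it asks whether $(\beta_{1},\ldots,\beta_{j-1},\beta_{j}+1)\preceq_{j}\Beta_{1}$; if so, it applies the induction hypothesis to the suffix $\Beta_{1},\ldots,\Beta_{k}$ (whose last element is unchanged, so no strengthening of the statement is needed), and if not, it pins down $\Beta_{0}=(\beta_{1},\ldots,\beta_{j-1},\beta_{j}+1)$ directly by forcing the length of $\Beta_{0}$ to be exactly $j$. That front-first decomposition keeps the lemma's statement intact, but it leans on order-theoretic machinery (Lemma~\ref{lem:ord}): totality of $\preceq_{j}$ to perform the case split, and the implicit observation that every step of the sequence is weakly decreasing with respect to $\preceq_{j}$, which is what justifies the chain $\Beta_{k}\preceq_{j}\Beta_{1}\prec_{j}(\beta_{1},\ldots,\beta_{j-1},\beta_{j}+1)$ in the paper's argument. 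Your back-first decomposition changes the last element of the sequence, which is precisely why you must generalize the statement to ``the first $j$ components of $\Beta_{k}$''; the reward is that the inductive step becomes a purely combinatorial four-way case analysis on the definition of $\Beta(\ell)$, the ordering hypothesis is used only to make the base case $k=0$ vacuous, and no totality or monotonicity property of $\preceq_{j}$ is ever invoked. Both proofs use $\beta_{j}<m$ in the same essential way, namely to exclude the branch in which position $j$ of the tuple falls into the $m$-padding, and both identify the same ``productive'' step ($\ell=j$ applied to a tuple of length exactly $j$) as the one that exhibits the desired $\ordinal$.
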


\begin{proof}
    The proof proceeds by induction on the number of successors $k$.
    For the base case $k = 1$,
    suppose that $\Beta_{0}(\ell) = (\beta_{1}, \ldots, \beta_{j - 1}, \beta_{j})$.
    By the definition of the operation $\Beta(\ell)$ and the assumption $\beta_{j} < m$,
    there are only two possibilities:
    (i) $\Beta_{0} = (\beta_{1}, \ldots, \beta_{j - 1}, \beta_{j} + 1)$ and $\ell = j$, or
    (ii) $\Beta_{0} = (\beta_{1}, \ldots, \beta_{j - 1}, \beta_{j}, \ldots, \beta_{j'})$ and $\ell = j < j'$.
    Since we also have $(\beta_{1}, \ldots, \beta_{j - 1}, \beta_{j} + 1) \preceq_{j} \Beta_{0}$,
    (i) must be the case, and thus $\Beta_{0} = (\beta_{1}, \ldots, \beta_{j - 1}, \beta_{j} + 1)$ holds.

    For the step case,
    if $(\beta_{1}, \ldots, \beta_{j - 1}, \beta_{j} + 1) \preceq_{j} \Beta_{1}$,
    then we are done by the induction hypothesis.
    Suppose otherwise,
    and let $\ell_{0}$ and $\ell_{1}$ be the lengths of $\Beta_{0}$ and $\Beta_{1}$, respectively.
    Then, since
    $\Beta_{k} = (\beta_{1}, \ldots, \beta_{j - 1}, \beta_{j}) \preceq_{j} \Beta_{1} \prec_{j} (\beta_{1}, \ldots, \beta_{j - 1}, \beta_{j} + 1)$,
    we have $\Beta_{1} =_{j} (\beta_{1}, \ldots, \beta_{j - 1}, \beta_{j})$, and thus $j \leq \ell_{1}$ holds.
    If $\ell_{0} < j$, then we have $\beta_{j} = m$, and thus it contradicts the assumption $\beta_{j} < m$.
    Moreover, if $j < \ell_{0}$, then $\Beta_{0} =_{j} \Beta_{1}$ must hold since we also have $j \leq \ell_{1}$.
    This contradicts the assumption
    $\Beta_{1} \prec_{j} (\beta_{1}, \ldots, \beta_{j - 1}, \beta_{j} + 1) \preceq_{j} \Beta_{0}$,
    and thus we have $\ell_{0} = j$.
    Hence, it follows that $\Beta_{0} = (\beta_{1}, \ldots, \beta_{j - 1}, \beta_{j} + 1)$
    from $\Beta_{1} =_{j} (\beta_{1}, \ldots, \beta_{j - 1}, \beta_{j})$.
    \qed
\end{proof}

\subsection{Proof of Lemma~\ref{lem:regress}}

The following is a key lemma,
which can be seen as defining a \emph{parity progress measure}~\cite{Jurdzinski:2000} for typability games.

\begin{lemma}[Parity Progress Measure for Typability Games] \label{lem:progress}
    First, let $\TG(\LTS, \HES) = (V_{0}, V_{1}, v_{0}, E, \Omega)$ be a typability game.
    Let $W_{0}$ be a subset of $V_{0}$
    and $\strategy \COL W_{0} \to V_{1}$ be a strategy on $W_{0}$ such that
    $\strategy(F_{j} \COL \ity) \subseteq W_{0}$ holds for every $F_{j} \COL \ity \in W_{0}$.
    If we can assign to each $F_{j} \COL \ity \in W_{0}$
    an \ordinal{} $\xi(F_{j} \COL \ity)$ satisfying the following conditions:
    \[
        \begin{array}{rll}
            (1). & \allprd{F_{j'} \COL \ity' \in \strategy(F_{j} \COL \ity)} \xi(F_{j'} \COL \ity') \prec_{j} \xi(F_{j} \COL \ity) & (\mbox{if }\fpo_{j} = \mu) \\
            (2). & \allprd{F_{j'} \COL \ity' \in \strategy(F_{j} \COL \ity)} \xi(F_{j'} \COL \ity') \preceq_{j - 1} \xi(F_{j} \COL \ity) & (\mbox{if }\fpo_{j} = \nu),
        \end{array}
    \]
    then $\strategy$ is a winning strategy of \verifier{} for each $F_{j} \COL \ity \in W_{0}$
    in the typability game $\TG(\LTS, \HES)$,
    and also of the subgame \(\SG(\LTS,\HES,W_0)\).
\end{lemma}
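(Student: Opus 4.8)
The plan is to verify directly that $\strategy$ is winning by the standard parity progress measure argument, splitting maximal plays into the finite and the infinite cases. First I would observe that the hypothesis $\strategy(F_j \COL \ity) \subseteq W_0$ for every $F_j \COL \ity \in W_0$ keeps any $\strategy$-play inside $W_0$: between two consecutive \verifier{}-positions lies the environment $\strategy(F_j\COL\ity)\in V_1$, from which \falsifier{} can only move to some $F_{j'}\COL\ity'\in\strategy(F_j\COL\ity)\subseteq W_0$, so $\strategy$ stays defined. A maximal \emph{finite} $\strategy$-play must stop at an environment position having no outgoing $E_1$-edge, i.e. at $\emptyset\in V_1\subseteq V_{1-0}$; such a play is won by \verifier{}.

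For an infinite $\strategy$-play $\rho$, let $(F_{j_0}\COL\ity_0)(F_{j_1}\COL\ity_1)\cdots$ be its subsequence of \verifier{}-positions, so $F_{j_{i+1}}\COL\ity_{i+1}\in\strategy(F_{j_i}\COL\ity_i)$ for all $i$. Since every environment position has priority $0$, the largest priority occurring infinitely often in $\rho$ equals $p:=\limsup_i\Omega_{j_i}$, so it suffices to show $p$ is even. I would argue by contradiction, assuming $p$ odd. Recall from the definition of $\Omega$ that $\Omega_1\ge\Omega_2\ge\cdots\ge\Omega_n$ and that $\fpo_j=\mu$ (resp.\ $\nu$) iff $\Omega_j$ is odd (resp.\ even). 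Set $J=\max\{\,j\mid\Omega_j=p\,\}$ and fix $N$ beyond which no priority exceeding $p$ is visited; then for $i\ge N$ we have $\Omega_{j_i}\le p$, and monotonicity of $\Omega$ yields $\Omega_{j_i}=p\iff j_i\le J$ and $\Omega_{j_i}<p\iff j_i>J$.

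The crux is to track the $\Le{J}$-value of the indices $\xi_i:=\xi(F_{j_i}\COL\ity_i)$ for $i\ge N$, showing that $\xi_{i+1}\Le{J}\xi_i$ always and $\xi_{i+1}\Lt{J}\xi_i$ whenever $\Omega_{j_i}=p$, by three cases. (i) If $\Omega_{j_i}=p$ then $\fpo_{j_i}=\mu$ and $j_i\le J$, so condition~(1) gives $\xi_{i+1}\Lt{j_i}\xi_i$, whence $\xi_{i+1}\Lt{J}\xi_i$ by Lemma~\ref{lem:ord}(i). (ii) If $\Omega_{j_i}<p$ and $\fpo_{j_i}=\mu$ then $j_i>J$, so (1) gives $\xi_{i+1}\Lt{j_i}\xi_i$, hence $\xi_{i+1}\Le{j_i}\xi_i$ and so $\xi_{i+1}\Le{J}\xi_i$ by Lemma~\ref{lem:ord}(ii). (iii) If $\Omega_{j_i}<p$ and $\fpo_{j_i}=\nu$ then $j_i-1\ge J$, so (2) gives $\xi_{i+1}\Le{j_i-1}\xi_i$, hence $\xi_{i+1}\Le{J}\xi_i$ again by Lemma~\ref{lem:ord}(ii); the case $\Omega_{j_i}=p$ with $\fpo_{j_i}=\nu$ cannot arise as $p$ is odd. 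Since the set of indices is finite and $\Le{J}$ is a total preorder (the truncation of the total order $\Le{n}$ of Lemma~\ref{lem:ord}(iv)), it admits a rank function into $\{0,\dots,R\}$ along which $(\xi_i)_{i\ge N}$ is non-increasing yet strictly decreases at each of the infinitely many $i$ with $\Omega_{j_i}=p$ — impossible. This contradiction forces $p$ to be even, so $\rho$ is won by \verifier{}.

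Hence $\strategy$ is winning for every $F_j\COL\ity\in W_0$ in $\TG(\LTS,\HES)$. The same conclusion transfers to $\SG(\LTS,\HES,W_0)$: because $\strategy$ never leaves $W_0$, every maximal $\strategy$-play of the subgame is also a maximal $\strategy$-play of the full game with an identical priority sequence, and is therefore won by \verifier{}. I expect the main obstacle to be exactly the case analysis of the third paragraph, namely choosing the truncation level $J$ and reconciling the per-step truncation indices $j_i$ and $j_i-1$ appearing in conditions~(1) and~(2) against $J$, which is precisely what the monotonicity properties of the truncated lexicographic orders in Lemma~\ref{lem:ord} are designed to handle.
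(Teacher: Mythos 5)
Your proof is correct, and it takes a recognizably different (though closely related) route from the paper's. The paper first reduces the question to a finite-graph statement: it forms the strategy graph $G = (W_{0}, E_{\strategy})$ with $(v, w) \in E_{\strategy}$ iff $w \in \strategy(v)$, observes that $\strategy$ is winning provided every cycle of $G$ has even maximal priority, and then refutes the existence of an odd cycle: taking $F_{j} \COL \ity$ to be the cycle vertex with minimal $j$ (hence the highest, odd, priority), the same three-case analysis you perform (condition (1) with Lemma~\ref{lem:ord} for $\mu$-vertices, condition (2) with Lemma~\ref{lem:ord} for $\nu$-vertices) yields $\xi(v_{0}) \succ_{j} \xi(v_{1}) \succeq_{j} \cdots \succeq_{j} \xi(v_{\ell}) = \xi(v_{0})$, an immediate contradiction. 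You instead work directly on the tail of an arbitrary infinite play: you identify the limsup priority $p$, choose the truncation level $J = \max\{\, j \mid \Omega_{j} = p \,\}$, verify via the monotonicity of $\Omega$ that $\Omega_{j_i} = p$ iff $j_{i} \leq J$ beyond some point, and run an infinite-descent/rank argument over the finite image of $\xi$ under the total preorder $\preceq_{J}$. The technical core — the case analysis and the use of the truncated lexicographic orders — is identical; the difference is in the packaging of the contradiction. The paper's cycle formulation buys a crisper contradiction localized to a single cycle and a simpler truncation level (the minimal $j$ occurring on the cycle), but it silently relies on the standard-yet-nontrivial graph-theoretic fact that evenness of all cycles of the finite strategy graph implies the parity condition on all infinite conforming plays. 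Your version avoids appealing to that fact and is therefore more self-contained, at the cost of the extra bookkeeping with $N$, $J$, and the rank function (note that finiteness of the index set, which your descent needs, is automatic since $\xi$ is defined on the finite set $W_{0}$). Your treatment of finite maximal plays and of the transfer to the subgame $\SG(\LTS, \HES, W_{0})$ coincides with the paper's.
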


\begin{proof}
    First, note that every finite maximal play
    starting from $v \in W_{0}$ and conforming with $\strategy$
    is winning for \verifier{} since $\strategy$ is total.
    Moreover, since $\Omega(\ITE) = 0$ for each $\ITE \in V_{1}$,
    whether an infinite play satisfies the parity condition can be checked
    just by looking at the priorities of the positions in $V_{0}$.

    Suppose that $\xi$ satisfies the conditions (1) and (2).
    Let $G = (W_{0}, E_{\strategy})$ be a directed graph such that
    $(v, w) \in E_{\strategy}$ iff $w \in \strategy(v)$.
    Since every play starting from $v \in W_{0}$
    and conforming with $\strategy$ corresponds to a walk in $G$
    (by ignoring all positions in \(V_{1}\)),
    the strategy $\strategy$ gives a winning strategy of
    \verifier{} for each $v \in W_{0}$
    if every cycle in $G$ is even.
    Here we say that a cycle is \emph{even}
    if the highest priority of its vertices is even,
    and otherwise the cycle is said to be \emph{odd}.

    Suppose that there is an odd cycle in $G$,
    and let $\mathcal{C}$ be such a cycle.
    Let $F_{j} \COL \ity$ be a vertex in $\mathcal{C}$
    that has the minimum $j$ and thus the highest (odd) priority.
    Then, the cycle $\mathcal{C}$ can be written as
    $F_{j} \COL \ity = v_{0} v_{1} \cdots v_{\ell} = F_{j} \COL \ity$.
    Let $v_{k} = F_{j'} \COL \ity'$ be a vertex in $\mathcal{C}$ with $0 \leq k < \ell$.
    If $\fpo_{j'} = \mu$,
    then $\xi(v_{k + 1}) \prec_{j'} \xi(v_{k})$ holds by the condition (1).
    Moreover, since $j \leq j'$,
    we have $\xi(v_{k + 1}) \preceq_{j} \xi(v_{k})$ by Lemma~\ref{lem:ord} (ii).
    If $\fpo_{j'} = \nu$,
    then $\xi(v_{k + 1}) \preceq_{j' - 1} \xi(v_{k})$ holds by the condition (2),
    and since $j \leq j' - 1$,
    we have $\xi(v_{k + 1}) \preceq_{j} \xi(v_{k})$ again by Lemma~\ref{lem:ord} (ii).
    Therefore,
    $\xi(v_{0}) \succ_{j} \xi(v_{1}) \succeq_{j} \cdots \succeq_{j} \xi(v_{\ell}) = \xi(v_{0})$ holds.
    This is a contradiction, and thus every cycle in $G$ is even.
    Hence, $\strategy$ is a winning strategy of \verifier{} for each $v \in W_{0}$.
    Moreover, since $\strategy(F_{j} \COL \ity) \subseteq W_{0}$ holds for each \(F_{j} \COL \ity \in W_{0}\),
    it is also a winning strategy for the subgame \(\SG(\LTS,\HES,W_0)\).
    \qed
\end{proof}

We are now ready to prove Lemma~\ref{lem:regress}.

\begin{proof}[Proof of Lemma~\ref{lem:regress}]
    Let $W = \Forget(\ITE)$.
    To each $F_{j} \COL \ity \in W$,
    we assign an \ordinal{} $\xi(F_{j} \COL \ity)$ by
    $\xi(F_{j} \COL \ity) = \min_{\preceq_{n}} \{ \Beta \mid F_{j}^{\Beta} \COL \ity \in \ITE \}$.
    By the definition of the function $\Regress$,
    for each $F_{j} \COL \ity \in W$,
    there exists a type environment
    $\ITE' \subseteq \ITE_{\fpo_{j}}^{\xi(F_{j} \COL \ity)}$ such that
    $\ITE' \vdash_{\LTS} \fml_{j} \COL \ity$ holds.
    We define $\strategy(F_{j} \COL \ity)$ as such $\ITE'$.
    Then, the function $\strategy$ becomes a strategy on $W$
    satisfying $\strategy(v) \subseteq W$ for every $v \in W$.

    Let $F_{j'} \COL \ity'$ be a type binding in $\strategy(F_{j} \COL \ity)$.
    If $\fpo_{j} = \mu$,
    then there exists $F_{j'}^{\Beta'} \COL \ity' \in \ITE$
    satisfying $\Beta' \prec_{j} \xi(F_{j} \COL \ity)$
    since $\strategy(F_{j} \COL \ity) \subseteq \ITE_{\fpo_{j}}^{\xi(F_{j} \COL \ity)}$ holds.
    Moreover, we have $\xi(F_{j'} \COL \ity') \preceq_{n} \Beta'$
    by the definition of $\xi$,
    and thus $\xi(F_{j'} \COL \ity') \preceq_{j} \Beta'$ by Lemma~\ref{lem:ord} (ii).
    Hence $\xi(F_{j'} \COL \ity') \prec_{j} \xi(F_{j} \COL \ity)$ holds.
    Similarly, if $\fpo_{j} = \nu$,
    then we have $\xi(F_{j'} \COL \ity') \preceq_{j - 1} \xi(F_{j} \COL \ity)$.
    Therefore, $W$ consists of only winning positions of $\SG(\LTS, \HES, W)$ by Lemma~\ref{lem:progress},
    hence also of $\SG(\LTS, \HES, \Forget((\mathcal{F}_{\LTS, \HES^{(m)}})^{\omega}(\emptyset)))$.
    \qed
\end{proof}

\subsection{Proofs for Lemma~\ref{lem:regress2}}

Next, we prove the lemmas that are required to prove Lemma~\ref{lem:regress2}.

\begin{lemma} \label{lem:inverse3}
    If \(\LTS \models \HES\),
    then $F_{1}^{(m)} \COL q_{0} \in \ITEomega{m}$ holds for sufficiently large $m$.
\end{lemma}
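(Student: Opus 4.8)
The plan is to relate the membership $F_{1}^{(m)} \COL q_{0} \in \ITEomega{m}$ to the satisfaction $\LTS \models \HES$ through a single fully unfolded reduction sequence of the non-recursive approximation $\HES^{(m)}$, and then to reconstruct the required type binding by running the backward expansion function backwards along that sequence. First I would fix $m$ large enough that the approximation is exact, so that $\LTS \models \HES^{(m)}$; as noted in Section~\ref{sec:approx} this holds for sufficiently large $m$ because every semantic lattice $D_{\LTS,\hty}$ is finite, hence each Kleene iteration stabilizes after finitely many steps. Since $\HES^{(m)}$ contains no recursion, $\rewritesto_{\HES^{(m)}}$ is strongly normalizing, so the initial variable $F_{1}^{(m)}$ (of kind $\obj$) rewrites to a fixpoint-free and $\lambda$-free normal form $\chi$, via $F_{1}^{(m)} = \chi_{0} \rewritesto_{\HES^{(m)}} \cdots \rewritesto_{\HES^{(m)}} \chi_{m'} = \chi$. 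Because $\rewritesto_{\HES^{(m)}}$ preserves the semantics, $\LTS \models \chi$, that is, $q_{0} \in \Brack{\chi}_{\LTS}$.

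The second ingredient is that, for a closed propositional formula $\chi$ (built only from $\true$, $\false$, $\land$, $\lor$, $\diafml{a}{\cdot}$, $\boxfml{a}{\cdot}$), the intersection type system coincides with the semantics on base types: $q_{0} \in \Brack{\chi}_{\LTS}$ iff $\emptyset \vdash_{\LTS} \chi \COL q_{0}$. I would prove this by a routine structural induction on $\chi$, matching each typing rule (\textsc{T-True}, \textsc{T-Or}, \textsc{T-And}, \textsc{T-Some}, \textsc{T-All}) against the corresponding clause of the semantics; note that the absence of a rule for $\false$ matches $\Brack{\false}_{\LTS} = \emptyset$. This yields $\emptyset \vdash_{\LTS} \chi_{m'} \COL q_{0}$.

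Next I would back-propagate this judgment along the reduction sequence using Lemma~\ref{lem:inverse} instantiated at $\HES^{(m)}$. Setting $\Gamma_{i} = (\mathcal{F}_{\LTS,\HES^{(m)}})^{m'-i}(\emptyset)$, I claim $\Gamma_{i} \vdash_{\LTS} \chi_{i} \COL q_{0}$ for all $i$, by downward induction on $i$: the base case $i = m'$ is the judgment just obtained, and for the step, since $F_{1}^{(m)} \rewritesto_{\HES^{(m)}}^{*} \chi_{i} \rewritesto_{\HES^{(m)}} \chi_{i+1}$, Lemma~\ref{lem:inverse} turns $\Gamma_{i+1} \vdash_{\LTS} \chi_{i+1} \COL q_{0}$ into $\mathcal{F}_{\LTS,\HES^{(m)}}(\Gamma_{i+1}) \vdash_{\LTS} \chi_{i} \COL q_{0}$, and $\mathcal{F}_{\LTS,\HES^{(m)}}(\Gamma_{i+1}) = \Gamma_{i}$. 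At $i = 0$ this gives $\Gamma_{0} \vdash_{\LTS} F_{1}^{(m)} \COL q_{0}$ with $\Gamma_{0} = (\mathcal{F}_{\LTS,\HES^{(m)}})^{m'}(\emptyset)$. Finally, since $\chi_{0}$ is the single variable $F_{1}^{(m)}$ and $q_{0}$ is a base type, the only possible derivation of $\Gamma_{0} \vdash_{\LTS} F_{1}^{(m)} \COL q_{0}$ is by (\textsc{T-Var}), possibly through trivial uses of (\textsc{T-Sub}) (as the only subtype of a state $q_{0}$ is $q_{0}$ itself); hence $F_{1}^{(m)} \COL q_{0} \in \Gamma_{0} \subseteq \ITEomega{m}$, as required.

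The main obstacle I anticipate is not any single step of the back-propagation, each of which follows directly from Lemma~\ref{lem:inverse}, but rather pinning down precisely how large $m$ must be for the approximation to be exact, i.e.\ verifying that finite unfolding of $\HES^{(m)}$ with the nested index scheme of Definition~\ref{def:approx} faithfully realizes the Kleene approximants of all the (possibly alternating) fixpoints simultaneously. Making this rigorous requires an induction over the fixpoint nesting that tracks how the index tuples $\Beta$ reset inner approximants; once that exactness is granted, the remainder is the clean backward induction sketched above.
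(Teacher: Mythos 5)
Your proposal follows the paper's own proof essentially step for step: pick $m$ large enough that $\LTS \models \HES^{(m)}$, reduce $F_{1}^{(m)}$ to its fixpoint-free normal form, observe that the normal form is typable at $q_{0}$ (since reduction preserves semantics and typing coincides with semantics on propositional formulas), and back-propagate typability along the reduction sequence by repeatedly applying Lemma~\ref{lem:inverse}. Your extra details---the explicit structural induction for the typing/semantics coincidence, the explicit downward induction with $\Gamma_{i} = (\mathcal{F}_{\LTS,\HES^{(m)}})^{m'-i}(\emptyset)$, and the observation that a variable typed at a base type must occur in the environment via (\textsc{T-Var})---merely spell out steps the paper leaves implicit, and the exactness-of-approximation caveat you flag is likewise left unproved in the paper itself.
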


\begin{proof}
    For sufficiently large \(m\),
    \(\LTS\models \HES\) if and only if \(\LTS\models \HES^{(m)}\).
    Pick such \(m\).
    Let $\fml^{(m)}$ be a formula such that
    $F_{1}^{(m)} \rewritesto_{\HES^{(m)}}^{*} \fml^{(m)} \nrewritesto_{\HES^{(m)}}$.
    Since \(\LTS\models \HES^{(m)}\) and the reduction preserves the semantics,
    we have $\emptyset \vdash_{\LTS} \fml^{(m)} \COL q_{0}$.
    Hence, we obtain $\ITEomega{m} \vdash_{\LTS} F_{1}^{(m)} \COL q_{0}$
    and thus $F_{1}^{(m)} \COL q_{0} \in \ITEomega{m}$
    by repeatedly applying Lemma~\ref{lem:inverse}.
    \qed
\end{proof}

Given an HES $\HES = \HESRHS$
and a type environment $\ITE$ with $\dom(\ITE) \subseteq \{F_{1}, \ldots, F_{n}\}$,
we write $\vdash_{\LTS} \HES \COL \ITE$
when $\ITE \vdash_{\LTS} \fml_{j} \COL \ity$ holds for every $F_{j} \COL \ity \in \ITE$.

\begin{lemma} \label{lem:preserve}
    If $\vdash_{\LTS} \HES \COL \ITE$,
    then $\vdash_{\LTS} \HES \COL \mathcal{F}_{\LTS, \HES}(\ITE)$.
\end{lemma}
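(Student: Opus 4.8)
The plan is to check the defining condition of $\vdash_{\LTS}\HES\COL\mathcal{F}_{\LTS,\HES}(\ITE)$ pointwise: for each $F_j\COL\ity\in\mathcal{F}_{\LTS,\HES}(\ITE)$ I would establish $\mathcal{F}_{\LTS,\HES}(\ITE)\vdash_{\LTS}\fml_j\COL\ity$. Since $\mathcal{F}_{\LTS,\HES}(\ITE)=\ITE\cup\{\text{the newly added bindings}\}$, I split into the bindings inherited from $\ITE$ and the freshly added ones. In both cases the only tool needed is a \emph{weakening} property of the type system: if $\ITE'\vdash_{\LTS}\fml\COL\ity$ and $\ITE'\subseteq\ITE''$ with the $\lambda$-bound variables of $\fml$ lying outside $\dom(\ITE'')$, then $\ITE''\vdash_{\LTS}\fml\COL\ity$. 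I would first record this by a routine induction on the derivation, the base case $(\textsc{T-Var})$ being obviously preserved under enlargement and the side condition of $(\textsc{T-Abs})$ being discharged by the stated disjointness.

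For an inherited binding $F_j\COL\ity\in\ITE$, the hypothesis $\vdash_{\LTS}\HES\COL\ITE$ gives $\ITE\vdash_{\LTS}\fml_j\COL\ity$; since $\ITE\subseteq\mathcal{F}_{\LTS,\HES}(\ITE)$ and the two environments differ only in $F$-bindings (so no $\lambda$-bound variable of $\fml_j$ is captured), weakening yields $\mathcal{F}_{\LTS,\HES}(\ITE)\vdash_{\LTS}\fml_j\COL\ity$.

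For a freshly added binding, the definition of $\mathcal{F}_{\LTS,\HES}$ supplies a decomposition $\fml_j=\lmdfml{X_1}{\hty_{j,1}}\cdots\lmdfml{X_\ell}{\hty_{j,\ell}}\psi_j$ with $\psi_j$ containing no $\lambda$-abstraction, an index $\ity=\aty_1\to\cdots\to\aty_\ell\to\state$, and a witness $\Delta$ satisfying $\dom(\Delta)\subseteq\FV(\psi_j)\cap\{X_1,\ldots,X_\ell\}$, $\aty_k=\Delta(X_k)$, and $\ITE\cup\Delta\vdash_{\LTS}\psi_j\COL\state$. Writing $\Delta'=\{X_1\COL\aty_1,\ldots,X_\ell\COL\aty_\ell\}$, I observe $\Delta\subseteq\Delta'$ --- the parameters $X_k\notin\dom(\Delta)$ only contribute $\aty_k=\top=\emptyset$ --- and $\ITE\subseteq\mathcal{F}_{\LTS,\HES}(\ITE)$, hence $\ITE\cup\Delta\subseteq\mathcal{F}_{\LTS,\HES}(\ITE)\cup\Delta'$. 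As $\psi_j$ is $\lambda$-free, weakening gives $\mathcal{F}_{\LTS,\HES}(\ITE)\cup\Delta'\vdash_{\LTS}\psi_j\COL\state$ with no capture concern, and applying $(\textsc{T-Abs})$ $\ell$ times --- its side conditions hold because $X_1,\ldots,X_\ell$ are fresh and $\dom(\mathcal{F}_{\LTS,\HES}(\ITE))\subseteq\{F_1,\ldots,F_n\}$ --- reconstructs $\mathcal{F}_{\LTS,\HES}(\ITE)\vdash_{\LTS}\fml_j\COL\ity$.

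I do not anticipate a genuine obstacle: the statement is in essence a weakening argument. The only points requiring attention are bookkeeping --- aligning the witness $\Delta$ with the abstraction context $\Delta'$ (including the empty-intersection case for parameters not free in $\psi_j$) and the freshness side conditions of $(\textsc{T-Abs})$. It is worth stressing that the flow clause in the definition of $\mathcal{F}_{\LTS,\HES}$ is never used here: it restricts \emph{which} bindings get added but is irrelevant to re-deriving their bodies, so it may be ignored throughout.
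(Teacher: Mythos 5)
Your proof is correct and takes essentially the same approach as the paper's: a case split on inherited versus newly added bindings, extraction of the witness $\Delta$ from the definition of $\mathcal{F}_{\LTS, \HES}$, and reconstruction of the judgment for $\fml_{j}$ by repeated application of $(\textsc{T-Abs})$. The only differences are cosmetic --- you weaken the premise $\ITE \cup \Delta \vdash_{\LTS} \psi_{j} \COL \state$ before abstracting, whereas the paper first derives $\ITE \vdash_{\LTS} \fml_{j} \COL \ity$ and then enlarges the environment, and you make explicit the weakening property that the paper uses implicitly.
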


\begin{proof}
    Suppose that $\vdash_{\LTS} \HES \COL \ITE$.
    Let $F_{j} \COL \ity$ be a type binding with $F_{j} \COL \ity \in \mathcal{F}_{\LTS, \HES}(\ITE)$,
    where $\ity = \aty_{1} \to \cdots \to \aty_{\ell} \to \state$
    and $\fml_{j} = \lmdfml{X_{1}}{} \cdots \lmdfml{X_{\ell}}{} \psi_{j}$.
    If $F_{j} \COL \ity \in \ITE$,
    then, we have $\ITE \vdash_{\LTS} \fml_{j} \COL \ity$ since $\vdash_{\LTS} \HES \COL \ITE$,
    and thus $\mathcal{F}_{\LTS, \HES}(\ITE) \vdash_{\LTS} \fml_{j} \COL \ity$
    holds by $\ITE \subseteq \mathcal{F}_{\LTS, \HES}(\ITE)$.
    If $F_{j} \COL \ity \notin \ITE$,
    then, by the definition of the function $\mathcal{F}_{\LTS, \HES}$,
    there exists a type environment $\Delta$ such that
    $\dom(\Delta) \subseteq \FV(\psi_{j}) \cap \{ X_{1}, \ldots, X_{\ell} \}$,
    $\ITE \cup \Delta \vdash_{\LTS} \psi_{j} \COL \state$,
    and $\allprd{k \in \{ 1, \ldots, \ell \}} \aty_{k} = \Delta(X_{k})$.
    Therefore, by repeatedly applying the typing rule $(\textsc{T-Abs})$,
    we have $\ITE \vdash_{\LTS} \fml_{j} \COL \ity$,
    and thus $\mathcal{F}_{\LTS, \HES}(\ITE) \vdash_{\LTS} \fml_{j} \COL \ity$.
    \qed
\end{proof}

\begin{lemma} \label{lem:preserve2}
    Let $F_{j}^{\Beta} \COL \ity$ be a type binding in $\ITEomega{m}$.
    If $\beta_{j} = 0$,
    then we have $\emptyset \vdash_{\LTS} \varphi_{j}^{\Beta} \COL \ity$.
    Otherwise,
    we have $\ITEomega{m} \downarrow_{\{F_{1}^{\Beta(1)}, \ldots, F_{n}^{\Beta(n)}\}} \vdash_{\LTS} \fml_{j}^{\Beta} \COL \ity$,
    where $\ITE \downarrow_{S}$ denotes the restriction of $\ITE$ by $S$,
    i.e., $\ITE \downarrow_{S} = \{ F \COL \ity \in \ITE \mid F \in S \, \}$.
\end{lemma}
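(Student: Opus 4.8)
The plan is to read the required typing information straight off the fixpoint characterization of $\ITEomega{m}$ and then case-split on $\beta_{j}$, mirroring the two clauses of Definition~\ref{def:approx}. Since $\ITEomega{m} = (\mathcal{F}_{\LTS, \HES^{(m)}})^{\omega}(\emptyset)$ is a fixpoint of $\mathcal{F}_{\LTS, \HES^{(m)}}$, the hypothesis $F_{j}^{\Beta} \COL \ity \in \ITEomega{m}$ gives $F_{j}^{\Beta} \COL \ity \in \mathcal{F}_{\LTS, \HES^{(m)}}(\ITEomega{m})$, so the defining conditions of $\mathcal{F}_{\LTS, \HES^{(m)}}$ hold for this binding. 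Writing $\ity = \aty_{1} \to \cdots \to \aty_{\ell} \to \state$ and $\fml_{j}^{\Beta} = \lmdfml{X_{1}}{}\cdots\lmdfml{X_{\ell}}{}\psi_{j}^{\Beta}$, there is thus a $\Delta$ with $\aty_{k} = \Delta(X_{k})$, $\dom(\Delta) \subseteq \FV(\psi_{j}^{\Beta}) \cap \{X_{1}, \ldots, X_{\ell}\}$, and $\ITEomega{m} \cup \Delta \vdash_{\LTS} \psi_{j}^{\Beta} \COL \state$ (I simply ignore the extra flow side condition). This body-typing judgment is the common starting point for both cases.

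For the case $\beta_{j} = 0$ I would first note that the body is $\psi_{j}^{\Beta} = \widehat{\fpo_{j}}$. Since $\false$ is typable by no rule, the judgment $\ITEomega{m} \cup \Delta \vdash_{\LTS} \widehat{\fpo_{j}} \COL \state$ forces $\fpo_{j} = \nu$, so $\psi_{j}^{\Beta} = \true$. As $\FV(\true) = \emptyset$, we get $\Delta = \emptyset$ and every $\aty_{k} = \top$; then $\{X_{1} \COL \top, \ldots, X_{\ell} \COL \top\} \vdash_{\LTS} \true \COL \state$ holds by $(\textsc{T-True})$, and $\ell$ applications of $(\textsc{T-Abs})$ (whose refinement side conditions hold because $\ity :: \hty_{j}$) yield the required $\emptyset \vdash_{\LTS} \fml_{j}^{\Beta} \COL \ity$.

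For the case $\beta_{j} \neq 0$ the body is $\fml_{j}^{\Beta} = [F_{1}^{\Beta(1)}/F_{1}, \ldots, F_{n}^{\Beta(n)}/F_{n}]\,\fml_{j}$, whose free (fixpoint) variables are therefore contained in $S := \{F_{1}^{\Beta(1)}, \ldots, F_{n}^{\Beta(n)}\}$, because $\FV(\fml_{j}) \subseteq \{F_{1}, \ldots, F_{n}\}$ in the original $\HES$. Starting from $\ITEomega{m} \cup \Delta \vdash_{\LTS} \psi_{j}^{\Beta} \COL \state$, I would apply $(\textsc{T-Abs})$ $\ell$ times (legitimate since the parameters $X_{k}$ are disjoint from the fixpoint variables in $\dom(\ITEomega{m})$) to obtain $\ITEomega{m} \vdash_{\LTS} \fml_{j}^{\Beta} \COL \ity$. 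It then remains to cut the environment down to $S$, which I would do with a \emph{strengthening} property: if $\ITE \vdash_{\LTS} \fml \COL \ity$ and $\FV(\fml) \subseteq S$, then $\ITE \downarrow_{S} \vdash_{\LTS} \fml \COL \ity$, established by a routine induction on the derivation. This gives exactly $\ITEomega{m} \downarrow_{\{F_{1}^{\Beta(1)}, \ldots, F_{n}^{\Beta(n)}\}} \vdash_{\LTS} \fml_{j}^{\Beta} \COL \ity$.

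The two small typing derivations and the induction for strengthening are all routine; the only point requiring care is the bookkeeping around the bound parameters $X_{1}, \ldots, X_{\ell}$. I must keep them disjoint from the fixpoint variables so that the restriction to $S$ touches only fixpoint bindings and never the internal $\lambda$-bound hypotheses introduced by $(\textsc{T-Abs})$; the disjointness guaranteed in Definition~\ref{def:BEF} is what makes the abstraction and restriction steps non-interfering.
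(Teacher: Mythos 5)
Your overall plan---the case split on $\beta_{j}$, the observation that untypability of $\false$ forces $\fpo_{j} = \nu$ in the base case, and free-variable containment plus environment strengthening in the other case---is exactly the paper's route. However, the step on which both of your cases rest contains a genuine flaw: from $F_{j}^{\Beta} \COL \ity \in \ITEomega{m} = \mathcal{F}_{\LTS, \HES^{(m)}}(\ITEomega{m})$ you conclude that ``the defining conditions of $\mathcal{F}_{\LTS, \HES^{(m)}}$ hold for this binding.'' That inference is invalid. By Definition~\ref{def:BEF}, $\mathcal{F}_{\LTS, \HES^{(m)}}(\ITE) = \ITE \cup \{\cdots\}$, so $\mathcal{F}_{\LTS, \HES^{(m)}}(\ITEomega{m}) \supseteq \ITEomega{m}$ holds trivially, and membership of a binding in $\mathcal{F}_{\LTS, \HES^{(m)}}(\ITEomega{m})$ carries no information beyond membership in $\ITEomega{m}$ itself: the binding may be present only because it is carried over by the ``$\ITE \cup$'' part of the definition, not because a witness $\Delta$ for the generating clause exists. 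A fixpoint equation of an operator that is inflationary by construction cannot be read backwards in this way.

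The missing idea is to argue by the stage at which the binding first enters the saturation. Since $\ITEomega{m} = \bigcup_{i}(\mathcal{F}_{\LTS, \HES^{(m)}})^{i}(\emptyset)$, take the least $i$ with $F_{j}^{\Beta} \COL \ity \in (\mathcal{F}_{\LTS, \HES^{(m)}})^{i}(\emptyset)$; by minimality the binding is not in $(\mathcal{F}_{\LTS, \HES^{(m)}})^{i-1}(\emptyset)$, so it was genuinely produced by the generating clause, and the witness $\Delta$ exists with respect to the environment $(\mathcal{F}_{\LTS, \HES^{(m)}})^{i-1}(\emptyset) \subseteq \ITEomega{m}$. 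A weakening step (adding bindings for fixpoint variables preserves typability, which is unproblematic here because the parameters $X_{k}$ are assumed disjoint from the fixpoint variables) then transports the typing judgment to $\ITEomega{m}$. This stage-indexed argument is precisely what the paper packages as Lemma~\ref{lem:preserve}: the invariant $\vdash_{\LTS} \HES^{(m)} \COL \ITE$ holds vacuously for $\ITE = \emptyset$ and is preserved by each application of $\mathcal{F}_{\LTS, \HES^{(m)}}$, whence every binding in $\ITEomega{m}$ has its defining formula typable under $\ITEomega{m}$. With your opening step repaired in this way (or replaced by an appeal to such an invariant), the remainder of your argument---including the explicit strengthening lemma, which the paper leaves implicit in the words ``always holds''---goes through.
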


\begin{proof}
    Since $\vdash_{\LTS} \HES^{(m)} \COL \emptyset$ is vacuously true,
    we have $\vdash_{\LTS} \HES^{(m)} \COL \ITEomega{m}$ by repeatedly applying Lemma~\ref{lem:preserve}.
    Therefore, $\ITEomega{m} \vdash_{\LTS} \fml_{j}^{\Beta} \COL \ity$ holds for each $F_{j}^{\Beta} \COL \ity \in \ITEomega{m}$.
    Let $F_{j}^{\Beta} \COL \ity$ be a type binding in $\ITEomega{m}$.
    Suppose that $\beta_{j} = 0$.
    Then, $\alpha_{j}$ must be $\nu$ because otherwise
    $\fml_{j}^{\Beta} = \lmdfml{\widetilde{X}}{} \false$ cannot be typed by any type environment.
    Therefore, $\fml_{j}^{\Beta}$ is of the form $\lmdfml{\widetilde{X}}{} \true$
    and thus we have $\emptyset \vdash_{\LTS} \varphi_{j} \COL \ity$.
    When $\beta_{j} \neq 0$,
    we have $\FV(\fml_{j}^{\Beta}) \subseteq \{ F_{1}^{\Beta(1)}, \ldots, F_{n}^{\Beta(n)} \}$
    by the definition of the formula $\fml_{j}^{\Beta}$, and thus
    $\ITEomega{m} \downarrow_{\{F_{1}^{\Beta(1)}, \ldots, F_{n}^{\Beta(n)}\}} \vdash_{\LTS} \fml_{j}^{\Beta} \COL \ity$ always holds.
    \qed
\end{proof}

We now restate and prove Lemmas~\ref{lem:seqbase} and~\ref{lem:seqstep}.

\repeatlemma{lem:seqbase}

\begin{proof}
    Let $F_{j}^{\Beta} \COL \ity$ be a type binding with
    $\beta_{j} = 0$ and $\fpo_{j} = \mu$.
    Then, $\fml_{j}^{\Beta} = \lmdfml{\widetilde{X}}{}\false$
    by the definition of the formula $\fml_{j}^{\Beta}$.
    If $F_{j}^{\Beta} \COL \ity \in \ITEomega{m}$,
    then $\ITEomega{m} \vdash_{\LTS} \fml_{j}^{\Beta} \COL \ity$
    holds by Lemma~\ref{lem:preserve2}.
    However, since $\fml_{j}^{\Beta}$ cannot be typed by any type environments,
    we have $\ITEomega{m} \nvdash_{\LTS} \fml_{j}^{\Beta} \COL \ity$.
    Therefore, we have $F_{j}^{\Beta} \COL \ity \notin \ITEomega{m}$
    and thus $F_{j}^{\Beta} \COL \ity \notin D_{k}$ for any $k$.
    \qed
\end{proof}

\repeatlemma{lem:seqstep}

\begin{proof}
    Let $F_{j}^{\Beta} \COL \ity$ be a type binding satisfying
    $F_{j}^{\Beta} \COL \ity \in D_{k}$ and $\beta_{j} \neq 0$.
    Since $D_{k} \subseteq \ITEomega{m}$,
    we have $F_{j}^{\Beta} \COL \ity \in \ITEomega{m}$.
    Therefore, by Lemma~\ref{lem:preserve2},
    there exists a type environment $\ITE$
    satisfying $\ITE \subseteq \ITEomega{m} \downarrow_{\{F_{1}^{\Beta(1)}, \ldots, F_{n}^{\Beta(n)}\}}$
    and $\ITE \vdash_{\LTS} \fml_{j}^{\Beta} \COL \ity$.
    For such $\ITE$,
    we have $\Forget(\ITE) = \ITE_{\fpo_{j}}^{\Beta}$ by Lemma~\ref{lem:ord} (iii).
    In addition,
    we have $\fml_{j}^{\Beta} = \lbrack F_{1}^{\Beta(1)}/F_{1}, \ldots, F_{n}^{\Beta(n)}/F_{n} \rbrack \,\fml_{j}$
    since $\beta_{j} \neq 0$.
    Hence $\ITE_{\fpo_{j}}^{\Beta} \vdash_{\LTS} \fml_{j} \COL \ity$ holds.
    Moreover,
    we have $(\Regress^{k - 1}(\ITEomega{m}))_{\fpo_{j}}^{\Beta} \nvdash_{\LTS} \fml_{j} \COL \ity$
    since $F_{j}^{\Beta} \COL \ity \in D_{k}$.
    Therefore, $\ITE \nsubseteq \Regress^{k - 1}(\ITEomega{m})$ holds,
    and thus there must exist a type binding
    $F_{j'}^{\Beta(j')} \COL \ity' \in \ITE$ such that
    $F_{j'}^{\Beta(j')} \COL \ity' \notin \Regress^{k - 1}(\ITEomega{m})$.
    This implies that $F_{j'}^{\Beta(j')} \COL \ity' \in D_{k'}$
    for some $k'$ satisfying $1 \leq k' < k$.
    \qed
\end{proof}

\subsection{Proof of Lemma~\ref{lem:overapproximation-of-Em}}

Lemma~\ref{lem:overapproximation-of-Em} is an immediate corollary of the following lemma.

\begin{lemma} \label{lem:subseteq}
    Let $\ITE$ and $\ITE'$ be type environments for $\HES$ and $\HES^{(m)}$, respectively.
    If $\ITE_{0} \subseteq \ITE$ and $\Forget(\ITE') \subseteq \ITE$,
    then $\Forget(\mathcal{F}_{\LTS, \HES^{(m)}}(\ITE')) \subseteq \mathcal{F}_{\LTS, \HES}(\ITE)$.
\end{lemma}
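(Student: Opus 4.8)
The plan is to prove the single-step inclusion of Lemma~\ref{lem:subseteq} directly from the definitions of the two backward-expansion functions; Lemma~\ref{lem:overapproximation-of-Em} then follows by the induction sketched after its statement, taking $\ITE = (\mathcal{F}_{\LTS,\HES})^{i}(\ITE_0)$ and $\ITE' = (\mathcal{F}_{\LTS,\HES^{(m)}})^{i}(\emptyset)$, whose hypotheses $\ITE_0\subseteq\ITE$ and $\Forget(\ITE')\subseteq\ITE$ are preserved because $\mathcal{F}$ is expansive. So fix $\ITE,\ITE'$ with $\ITE_0\subseteq\ITE$ and $\Forget(\ITE')\subseteq\ITE$, take $F_j\COL\ity\in\Forget(\mathcal{F}_{\LTS,\HES^{(m)}}(\ITE'))$ witnessed by an \ordinal{} $\Beta$ with $F_j^{\Beta}\COL\ity\in\mathcal{F}_{\LTS,\HES^{(m)}}(\ITE')$, and aim to show $F_j\COL\ity\in\mathcal{F}_{\LTS,\HES}(\ITE)$. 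If $F_j^{\Beta}\COL\ity\in\ITE'$ already, then $F_j\COL\ity\in\Forget(\ITE')\subseteq\ITE\subseteq\mathcal{F}_{\LTS,\HES}(\ITE)$ and we are done; otherwise the binding is freshly added and I would proceed by case analysis on $\beta_j$.

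For $\beta_j=0$, the body of $F_j^{\Beta}$ is $\lmdfml{X_1}{}\cdots\lmdfml{X_\ell}{}\widehat{\fpo_j}$, so the propositional part to be typed is $\widehat{\fpo_j}$. Since $\false$ is untypable, freshness forces $\fpo_j=\nu$; and since $\FV(\true)=\emptyset$, the witnessing $\Delta$ must be empty, whence $\ity=\top\to\cdots\to\top\to\state$. Thus $F_j\COL\ity\in\ITE_0\subseteq\ITE\subseteq\mathcal{F}_{\LTS,\HES}(\ITE)$, and it is precisely here that the hypothesis $\ITE_0\subseteq\ITE$ is consumed.

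For $\beta_j\neq0$, the body is $\lmdfml{X_1}{}\cdots\lmdfml{X_\ell}{}\psi_j^{\Beta}$ with $\psi_j^{\Beta}=[F_1^{\Beta(1)}/F_1,\ldots,F_n^{\Beta(n)}/F_n]\,\psi_j$, i.e.\ $\psi_j$ with its free fixpoint variables bijectively renamed. I would reuse the very $\Delta$ that $\mathcal{F}_{\LTS,\HES^{(m)}}$ uses and verify the three requirements of $\mathcal{F}_{\LTS,\HES}$ for $F_j\COL\ity$ under $\ITE$. The typing premise $\ITE\cup\Delta\vdash_{\LTS}\psi_j\COL\state$ follows from $\ITE'\cup\Delta\vdash_{\LTS}\psi_j^{\Beta}\COL\state$ by a routine renaming lemma: fixpoint variables enter typing only through $(\textsc{T-Var})$, so derivability is invariant under the bijective renaming, and each binding $F_i^{\Beta(i)}\COL\ity''$ used in the derivation becomes $F_i\COL\ity''\in\Forget(\ITE')\subseteq\ITE$, after which monotonicity of $\vdash_{\LTS}$ in the environment absorbs the extra bindings of $\ITE$. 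The shape premise $\aty_k=\Delta(X_k)$ is inherited verbatim. For the flow premise, given $X_i\in\dom(\Delta)$ I would take the witness $\chi\in\Flow_{\HES^{(m)}}(X_i)$ with $\ITE'\vdash_{\LTS}\chi\COL\ity'$ for all $\ity'\in\Delta(X_i)$ and its index-erasure $\chi'$; the same renaming lemma yields $\ITE\vdash_{\LTS}\chi'\COL\ity'$, so it only remains to check $\chi'\in\Flow_{\HES}(X_i)$.

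This last point --- that erasing indices maps $\Flow_{\HES^{(m)}}(X_i)$ into $\Flow_{\HES}(X_i)$ --- is the step I expect to be the main obstacle, since $\Flow_{\HES^{(m)}}$ is defined through $\rewritesto_{\HES^{(m)}}$ while $\Flow_{\HES}$ uses the \emph{unrelaxed} $\rewritesto_{\HES}$ rather than $\rewritesto'_{\HES}$. Given a reduction $F_1^{(m)}\rewritesto^{*}_{\HES^{(m)}}C[\appfml{F_j^{\Beta}}\chi_1\cdots\chi_\ell]$ exposing the argument, I would argue that the contractions of redexes with $\beta=0$ --- whose contracta $\true/\false$ are leaves that neither block nor enable any other redex --- can be postponed to the end by a standard commutation argument; since the displayed redex survives into the final term it was never inside a deleted subterm, hence it is already present after the phase consisting solely of $\beta\neq0$ contractions. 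Erasing indices from that phase turns each $\beta\neq0$ step into a genuine $\rewritesto_{\HES}$-unfolding of the corresponding $F_{j'}$, producing $F_1\rewritesto^{*}_{\HES}C'[\appfml{F_j}\chi_1'\cdots\chi_\ell']$ and therefore $\chi_i'\in\Flow_{\HES}(X_i)$. The only mild subtlety is that a $\beta\neq0$ step may duplicate a pending $\beta=0$ redex, which merely multiplies the postponed contractions without affecting the argument. With this flow inclusion established, all three premises of $\mathcal{F}_{\LTS,\HES}$ hold for $F_j\COL\ity$ under $\ITE$, giving $F_j\COL\ity\in\mathcal{F}_{\LTS,\HES}(\ITE)$ and completing the proof.
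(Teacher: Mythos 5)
Your proof is correct and follows essentially the same route as the paper's: the same reduction to fresh bindings $F_j^{\Beta} \COL \ity \in \mathcal{F}_{\LTS,\HES^{(m)}}(\ITE')\setminus\ITE'$, the same case analysis on $\beta_j$, with the $\beta_j=0$ case forcing $\fpo_j=\nu$ and $\ity=\top\to\cdots\to\top\to\state$ so that $\ITE_0\subseteq\ITE$ applies, and the $\beta_j\neq 0$ case transporting the witnessing $\Delta$ across index erasure. The only difference is that the paper simply asserts the flow inclusion ($\fml\in\Flow_{\HES^{(m)}}(X_k^{\Beta})$ implies $\Forget(\fml)\in\Flow_{\HES}(X_k)$) without proof, whereas you correctly flag it as the delicate point (since $\Flow_{\HES}$ is defined via the unrelaxed relation $\rewritesto_{\HES}$) and close it with a sound postponement argument for the $\beta=0$ contractions, whose contracta are leaves at propositional positions and hence never overlap any later redex.
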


\begin{proof}
    Let $\ITE$ and $\ITE'$ be type environments for $\HES$ and $\HES^{(m)}$, respectively,
    such that both $\ITE_{0} \subseteq \ITE$ and $\Forget(\ITE') \subseteq \ITE$ hold.
    Let $F_{j}^{\Beta} \COL \ity \in \mathcal{F}_{\LTS, \HES^{(m)}}(\ITE') \backslash \ITE'$.
    We show that $F_{j} \COL \ity \in \mathcal{F}_{\LTS, \HES}(\ITE)$.
    Let $\ity = \aty_{1} \to \cdots \to \aty_{\ell} \to q$,
    $\fml_{j} = \lmdfml{X_{1}}{} \cdots \lmdfml{X_{\ell}}{} \psi_{j}$,
    and $\fml_{j}^{\Beta} = \lmdfml{X_{1}}{\Beta} \cdots \lmdfml{X_{\ell}}{\Beta} \psi_{j}^{\Beta}$.
    By the definition of the function $\mathcal{F}_{\LTS, \HES^{(m)}}$,
    there is a type environment $\Delta'$ such that
    $\dom(\Delta') \subseteq \FV(\psi_{j}^{\Beta}) \cap \{ X_{1}^{\Beta}, \ldots, X_{\ell}^{\Beta} \}$,
    $\ITE' \cup \Delta' \vdash_{\LTS} \psi_{j}^{\Beta} \COL q$,
    $\allprd{k \in \{ 1, \ldots, \ell \}} \aty_{k} = \Delta'(X_{k}^{\Beta})$,
    and $\allprd{X_{k}^{\Beta} \in \dom(\Delta')} \extprd{\fml \in \Flow_{\HES^{(m)}}(X_{k}^{\Beta})}$
    $\allprd{\ity' \in \Delta'(X_{k}^{\Beta})} \ITE' \vdash_{\LTS} \fml \COL \ity'$.

    Suppose that $\beta_{j} = 0$.
    If $\fpo_{j} = \mu$,
    then $\psi_{j}^{\Beta} = \false$
    and thus it cannot be typed by any type environments.
    Therefore, $\fpo_{j}$ must be $\nu$.
    Since $\FV(\psi_{j}^{\Beta})$ is empty,
    $\Delta' = \emptyset$, and thus
    $F_{j} \COL \ity = F_{j} \COL \top \to \cdots \to \top \to q \in \ITE_{0} \subseteq \ITE \subseteq \mathcal{F}_{\LTS, \HES}(\ITE)$.

    Suppose that $\beta_{j} \neq 0$,
    and let $\Delta = \Forget(\Delta')$.
    Let us write $\Forget(\fml)$ for the formula obtained by removing all indices from $\fml$.
    Then, since $\Forget(\psi_{j}^{\Beta}) = \psi_{j}$
    and $\Forget(\ITE' \cup \Delta') = \Forget(\ITE') \cup \Delta \subseteq \ITE \cup \Delta$,
    we have $\dom(\Delta) \subseteq \FV(\psi_{j}) \cap \{X_{1}, \ldots, X_{\ell}\}$
    and $\ITE \cup \Delta \vdash_{\LTS} \psi_{j} \COL q$.
    Moreover, since $\fml \in \Flow_{\HES^{(m)}}(X_{k}^{\Beta})$
    implies that $\Forget(\fml) \in \Flow_{\HES}(X_{k})$,
    it follows that $\allprd{X_{k} \in \dom(\Delta)} \extprd{\fml \in \Flow_{\HES}(X_{k})}$
    $\allprd{\ity' \in \Delta(X_{k})} \ITE \vdash_{\LTS} \fml \COL \ity'$.
    Therefore, $F_{j} \COL \ity \in \mathcal{F}_{\LTS, \HES}(\ITE)$
    holds by the definition of the function $\mathcal{F}_{\LTS, \HES}$.
    \qed
\end{proof}

\begin{proof}[Proof of Lemma~\ref{lem:overapproximation-of-Em}]
    It suffices to show
    \(\Forget((\mathcal{F}_{\LTS, \HES^{(m)}})^{i}(\emptyset)) \subseteq (\mathcal{F}_{\LTS, \HES})^{i}(\ITE_0)\)
    for any \(i\).
    We prove it by induction on \(i\).
    The base case where \(i=0\) is trivial.
    In the case where \(i>0\), we have
    \(\Forget((\mathcal{F}_{\LTS, \HES^{(m)}})^{i-1}(\emptyset)) \subseteq (\mathcal{F}_{\LTS, \HES})^{i-1}(\ITE_0)\)
    by the induction hypothesis.
    Since \(\ITE_0 \subseteq (\mathcal{F}_{\LTS, \HES})^{i-1}(\ITE_0)\),
    we have \(\Forget((\mathcal{F}_{\LTS, \HES^{(m)}})^{i}(\emptyset)) \subseteq (\mathcal{F}_{\LTS, \HES})^{i}(\ITE_0)\)
    by Lemma~\ref{lem:subseteq}.
    \qed
\end{proof}

    \section{Several Optimizations}
\label{app:opt}

In the following,
we report on several optimization techniques that are adopted in our prototype model checker \HFLMC{}
and thought to have improved its performance in the experiments in Section~\ref{sec:expr}.

\subsection{Flow Analysis with Type Information}
\label{subsec:flowcomputation}

The function \(\mathit{F}_{\LTS, \HES}\)
expands type environments in a backward direction
of the rewriting sequence starting from the initial variable \(F_{1}\).
The objective of this expansion is to gather sufficient type bindings
to derive the initial position \(F_{1} \COL \state_{0}\) of the typability game,
and thus what we really need is to expand type environments
in a backward direction of the \emph{derivation} of the initial position.
This fact enables us to restrict the expansion using information on types.

For example, if the initial equation of the HES is
\(F_{1} =_{\fpo} \diafml{a}A \lor \diafml{b}B\),
then \(\Gamma \vdash_{\LTS} \fml_{1} \COL \state_{0}\) holds if and only if we have either
\(A \COL \state \in \Gamma\) for some \(\state\) with \(\state_{0} \transto{a} \state\)
or \(B \COL \state \in \Gamma\) for some \(\state\) with \(\state_{0} \transto{b} \state\).
If \(F_{1}\), \(A\), and \(B\) do not occur in the body of any fixpoint variable other than \(F_{1}\),
then we can safely exclude
\(F_{1} \COL \state\) with \(q \neq \state_{0}\),
\(A \COL \state\) with \(\state_{0} \nottransto{a} \state\),
and \(B \COL \state\) with \(\state_{0} \nottransto{b} \state\) from consideration.
Although we do not describe it in detail here,
we can gather such information on types during flow analysis
using the idea of abstract configuration graphs~\cite{Ramsay-Neatherway-Ong:2014:POPL}
adapted to the context of HFL model checking.

\subsection{Restriction on the Initial Type Environment}
\label{subsec:initialtypeenv}

In Algorithm~\ref{lst:pseudocode},
the initial type environment \(\Gamma_{0}\)
contains the strongest type bindings for all \(\nu\)-variables.
However, as the completeness proof in Appendix~\ref{app:proof:prfs} indicates,
it is sufficient to have the strongest type bindings for the
\(\nu\)-variables approximated as
\(\lmdfml{\widetilde{X}}{}\widehat{\nu} = \lmdfml{\widetilde{X}}{}\true\)
in a formula \(\fml\) such that \(F_{1}^{(m)} \rewritesto_{\HES^{(m)}}^{*} \fml\).

Such variables can be found by analysing a \emph{call graph} of the HES.
A call graph \(G = (V, E)\) of an HES \(\HES\) is a directed graph
whose vertex set \(V\) is the set of fixpoint variables of \(\HES\),
and \((F_{j}, F_{k}) \in E\) holds if and only if the variable \(F_{k}\)
occurs in the body of the variable \(F_{j}\).
It is sufficient for the initial type environment \(\Gamma_{0}\)
to contain the strongest type bindings for the \(\nu\)-variables \(F\) such that
there is a cycle in \(G\) whose highest priority is given by \(F\).\footnote{
    Moreover, we can restrict the use of such strongest type bindings
    to the type derivations for the fixpoint variables that call \(F\) in such a cycle.
}

\subsection{Normalization through Subtyping}
\label{subsec:normalizesub}

The objective of backward expansions by the function \(\mathit{F}_{\LTS, \HES}\)
is to construct a subgame that is winning for \verifier{}
whenever the full typability game is winning for her.
The actual implementation of the expansion process reflects this view,
and it constructs not only vertices but at the same time edges of the arena of a subgame.
Some edges (moves) of a typability game are comparable in terms of their ``strengths'' from the viewpoint of \verifier{},
and this fact enables us to decrease the size of subgames and possibly reduce the computational cost of expansion processes.

For an illustration,
let us consider the situation in which we are to derive new types for a variable \(F\),
whose body is \(\fml_{F} = \lmdfml{X}{\obj} \appfml{G} (\appfml{G} X)\),
in the expansion process of the type environment
\(\Gamma = \{G \COL \top \to \state, G \COL \state \to \state\}\).
Then, we can derive \(F \COL \aty \to \state\) for any \(\aty\)
because \(\Gamma\) contains a strongest type binding \(G \COL \top \to \state\)
which makes the derivation
\(\{G \COL \top \to \state\} \vdash_{\LTS} \fml_{F} \COL \aty \to \state\) possible.

However, we can also derive \(\fml_{F} \COL \state \to \state\)
from the type environment \(\{G \COL \state \to \state\}\),
and the edge from \(F \COL \state \to \state\)
to \(\{G \COL \state \to \state\}\) is always stronger than
the edge from \(F \COL \state \to \state\) to \(\{G \COL \top \to \state\}\) in a typability game;
if \(\{G \COL \top \to \state\}\) is a winning position of \verifier{},
then so is \(\{G \COL \state \to \state\}\),
because \(\top \to \state \leq_{\LTS} \state \to \state\)
and thus \(\ITE' \vdash_{\LTS} \fml_{G} \COL \top \to \state\)
implies \(\ITE' \vdash_{\LTS} \fml_{G} \COL \state \to \state\).
Hence, \verifier{} can safely choose
\(\{G \COL \state \to \state\}\) instead of
\(\{G \COL \top \to \state\}\) from \(F \COL \state \to \state\).
As a result, we can refrain from constructing
the edge from \(F \COL \state \to \state\) to \(\{G \COL \top \to \state\}\)
without undermining the completeness of the algorithm.

Moreover, although \(F \COL \aty \to \state\) is derivable for any \(\aty\),
we can refrain from adding type bindings other than
\(F \COL \top \to \state\) and \(F \COL \state \to \state\)
to \(\Gamma\) at this point of time.
This is because for any \(\Gamma'\) and \(\aty\) such that
\(\Gamma' \subseteq \Gamma\),
\(\aty \neq \emptyset, \{\state\}\),
and \(\Gamma' \vdash_{\LTS} \varphi_{F} \COL \aty \to \state\),
we also have \(\Gamma' \vdash_{\LTS} \varphi_{F} \COL \top \to \state\)
and the edge from \(F \COL \aty \to \state\) to \(\Gamma'\)
is always weaker than the edge from \(F \COL \top \to \state\) to \(\Gamma'\);
if there is a strategy \(\strategy\) with \(\strategy(F \COL \aty \to \state) = \Gamma'\)
that gives a winning strategy for \verifier{} from \(F \COL \aty \to \state\),
then the position \(F \COL \top \to \state\) is also winning,
and thus \verifier{} can safely use \(F \COL \top \to \state\)
instead of \(F \COL \aty \to \state\)
by exploiting the relation \(\top \to \state \leq_{\LTS} \aty \to \state\).

In general,
when the body of a fixpoint variable \(F\) is of the form \(\lmdfml{\widetilde{X}}{}\psi_{F}\),
an edge obtained from a type derivation \(\Gamma \vdash_{\LTS} \psi_{F} \COL \state\)
is stronger than an edge obtained from a derivation \(\Gamma' \vdash_{\LTS} \psi_{F} \COL \state\)
if and only if \(\Gamma\) is weaker as a type environment than \(\Gamma'\)
(that is, we have \(\allprd{X \in \dom(\Gamma)} \Gamma'(X) \leq_{\LTS} \Gamma(X)\)).
In our prototype model checker \HFLMC{},
we use the inclusion relation \(\Gamma \subseteq \Gamma'\)
instead of \(\allprd{X \in \dom(\Gamma)} \Gamma'(X) \leq_{\LTS} \Gamma(X)\)
as it is computationally cheaper,
although the relations are not equivalent and it requires further analysis to decide
if the choice actually improves the performance of the model checker in general cases.

    \section{Content of the Benchmark Set} \label{sec:bench}

In the following,
we report on some details on the content of the benchmark set
we used in the experiments in Section~\ref{sec:expr}.\footnote{
    It is available at \url{https://github.com/hopv/homusat/tree/master/bench/exp}.
}

Figure~\ref{fig:ol} shows a plot of the order of HES versus the size of LTS.
As can be seen, the LTS size is moderate (up to around 10)
for most of the instances, but some instances have large state sets.
For example, \texttt{xhtmlf-div-2},
which was taken from the \textsc{HorSat2} benchmark set,
consists of an HES of order-2 and an LTS of size 220.
Although for such an instance the possible number of types is enormously huge,
\HFLMC{} solved it in fewer than 2 seconds.

Figure~\ref{fig:hist} shows the distribution of
the numbers of alternations between \(\mu\) and \(\nu\) within the benchmark set.
Whereas over half of the instances (83 out of 136) have no alternation,
there are a certain number of instances that have one or more alternations, up to a maximum of 4.
Note that,
although the number of alternations is one of the factors that affect the difficulty of input problems,
it is not so significant at least for the problems we used.
Actually, all three problems that \HFLMC{} timed out
(\texttt{file-e}, \texttt{file\textasciitilde{}2}, and \texttt{intro})
have no alternation.
All of them are \(\nu\)-only problems with order 3,
\(|Q| = 4\), and \(|\HES| \approx 5,000\),
taken from \textsc{HorSat2} benchmarks.

\begin{figure}[t]
    \input{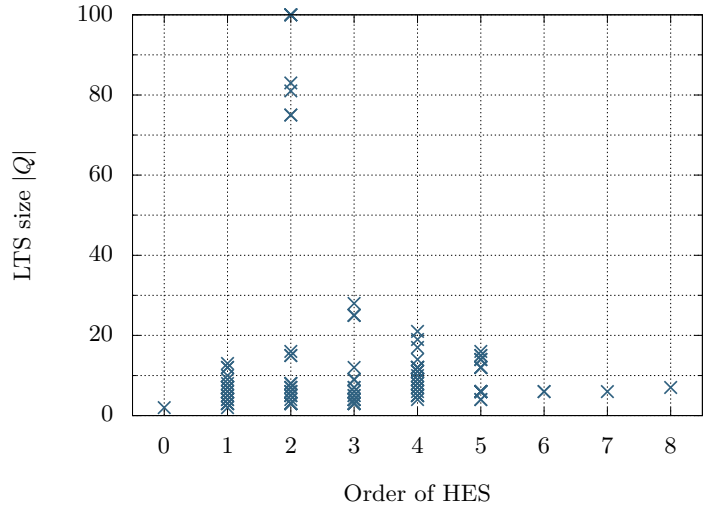}
    \caption{Order of HES versus LTS size $|Q|$}
    \label{fig:ol}
\end{figure}
\begin{figure}
    \input{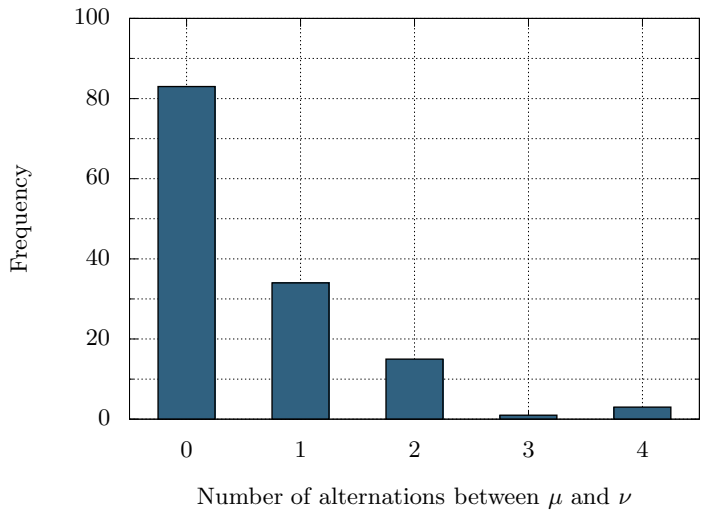}
    \caption{Distribution of the numbers of alternations}
    \label{fig:hist}
\end{figure}

\fi

\end{document}